\title{Drainability and Fillability of Polyominoes in Diverse Models of Global Control}
\titlerunning{Drainability and Fillability of Polyominoes}
\author{Sándor P. Fekete}{Computer Science, TU Braunschweig, Germany}{s.fekete@tu-bs.de}{https://orcid.org/0000-0002-9062-4241}{}
\author{Peter Kramer}{Computer Science, TU Braunschweig, Germany}{kramer@ibr.cs.tu-bs.de}{https://orcid.org/0000-0001-9635-5890}{}
\author{Jan-Marc Reinhardt}{Electrical Engineering and Computer Science, Bochum University of Applied Sciences, Germany}{jan-marc.reinhardt@hs-bochum.de}{https://orcid.org/0009-0005-8907-3832}{}
\author{Christian Rieck}{Discrete Mathematics, University of Kassel, Germany}{christian.rieck@mathematik.uni-kassel.de}{https://orcid.org/0000-0003-0846-5163}{}
\author{Christian Scheffer}{Electrical Engineering and Computer Science, Bochum University of Applied Sciences, Germany}{christian.scheffer@hs-bochum.de}{https://orcid.org/0000-0002-3471-2706}{}
\authorrunning{S.~P.~Fekete, P.~Kramer, J.-M.~Reinhardt, C.~Rieck, and C.~Scheffer}
\keywords{Global control, full tilt, single tilt, fillability, drainability, polyominoes, complexity}
\crefname{figure}{Figure}{Figures}
\crefname{theorem}{Theorem}{Theorems}
\crefname{lemma}{Lemma}{Lemmas}
\crefname{corollary}{Corollary}{Corollaries}
\crefname{section}{Section}{Sections}
\crefname{appendix}{Appendix}{Appendices}
\crefname{remark}{Remark}{Remarks}
\crefname{claim}{Claim}{Claims}
\crefname{conjecture}{Conjecture}{Conjectures}
\crefname{observation}{Observation}{Observations}
\newcommand{\ZZ}{\ensuremath{\mathbb{Z}^2}}
\newcommand{\Tpose}{\ensuremath{\mathrm{T}}}
\newcommand{\FT}{\ensuremath{\mathrm{FT}}\xspace}
\newcommand{\SSt}{\ensuremath{\mathrm{S1}}\xspace}
\newcommand{\IE}[1][\FT]{\ensuremath{{#1}_\mathrm{I}}\xspace}
\newcommand{\dual}[1]{\ensuremath{\overline{#1}}\xspace}
\newcommand{\rOne}[1][M]{\ensuremath{\rightarrow_{#1}}}
\newcommand{\rStar}[1][M]{\rOne[#1]^*}
\newcommand{\mSeq}[1][k]{\ensuremath{m_1,m_2,\ldots,m_{#1}}}
\newcommand{\mComp}[1][k]{\ensuremath{m_{#1}\circ\cdots\circ m_2 \circ m_1}}
\newcommand{\bScale}[2][k]{\ensuremath{{#2}^{\uparrow {#1}}}}
\newcommand{\gExt}[1][G]{\ensuremath{{#1}^{\leftrightarrow S}}}
\newcommand{\BigO}{\mathcal{O}}
\newcommand{\drainingProb}{\textsc{Maximum Tilt Draining Problem}}
\begin{document}
    \maketitle

    \begin{abstract}
      Tilt models offer intuitive and clean definitions of complex systems in which particles are influenced by global control commands. 
Despite a wide range of applications, there has been almost no theoretical investigation into the associated issues of filling and draining geometric environments.
This is partly because a globally controlled system (i.e., passive matter) exhibits highly complex behavior that cannot be locally restricted.
Thus, there is a strong need for theoretical studies that investigate these models both (1) in terms of relative power to each other, and (2) from a complexity theory perspective.
In this work, we provide (1) general tools for comparing and contrasting different models of global control, and (2) both complexity and algorithmic results on filling and draining.

    \end{abstract}

    \section{Introduction}
\label{sec:introduction}

The targeted use of global control mechanisms, applied synchronously and uniformly\footnote{Due to this intrinsic connection, we use the qualifiers ``uniform'' and ``global'' synonymously when applied to models of motion.} to small-particle matter (i.e., \emph{passive matter}), is both of great practical relevance and highest (theoretical) complexity.
A fundamental type of global control mechanism is that of uniform movement or translation, which finds application in a variety of manufacturing processes, such as filling polyhedra with a liquid for gravity casting~\cite{BoseT95,BoseKT98,YASUI2015494} or the intact removal of cast objects from their molds~\cite{bfghs2024singlepartmold,bhs2017separation}.
Global forces like electromagnetic fields and gravity play a key role in a range of further applications, such as amorphous computing and smart materials like smart paint~\cite{AbelsonACHHKNRSW00}, autonomous monitoring and repair systems as well as minimally invasive surgeries~\cite{bdflmw-particle2019}, medication~\cite{bflkkkrs-drugdelivery2020}, and biological robots~\cite{becker2013feedback}.

\pagebreak
Inspired by tilt games~\cite{abdelkader20162048,akitaya20212048,DemaineR18},
theoretical research commonly uses discrete, grid-based models to study the manipulation of passive matter by global control signals.
In this context, a distinction is made between single step and full tilt models~\cite{bdflmw-particle2019,caglsw-twodirections2023}, i.e., the movement of all particles by either one step or by a maximum distance in a uniform direction.

One of the most challenging problems remains the question of filling geometric shapes using global control:
Given a \emph{board}, defined as a subset of the square grid as well as a number of ``infill points'' (\emph{sources}), the question is whether, and if so how, the entire board can be filled by adding particles through the set of sources, with all particles moving in the same direction as determined by global control mechanisms.
Crucially, the inherently discrete nature of particle models introduces new complications that have not appeared in the continuous frameworks in~\cite{BoseT95,BoseKT98}, such as particles forming stacks by blocking each other instead of spreading like liquids.

Naturally, there exist boards that cannot be filled with a given number of sources.
We investigate the (therefore immediately arising) question of minimal necessary changes to the board in order to achieve fillability.
Even in simplified models, globally controlled particles and targeted changes to the board create dynamic systems that are extremely complex to analyze and control:
The smallest manipulation of a board can lead to far-reaching and not locally restrictable changes in terms of fillability, see for example Figures~\ref{fig:board}(b) and \ref{fig:board}(c).

Furthermore, it is unclear how the different models relate to each other in terms of the contrasting objectives of filling or draining.
In this paper, we provide a generalized, comprehensive framework for the comparative study of tilt models and
formally investigate what makes a given model \emph{more powerful} than another, i.e., allows more polyominoes to be filled or drained.
This includes the special case of single step and full tilt movement.

\subsection{Our Contributions}
\label{subsec:contribution}

We build upon established models of particle movement using global control
signals, namely the full tilt and single step models, examine them in the
context of generalized and more powerful models, and investigate how to achieve
drainability and fillability of polyominoes in these models by placing
obstacles. Our main contributions are twofold.

\begin{enumerate}[(1)]
\item We provide general tools to compare and contrast various models of
  particle movement, offering a more unified perspective and surprising new insights concerning the duality of different models (see \cref{fig:single-particle}). In particular, we prove the following:
  \begin{enumerate}[(1.1)]
	\item Equivalence between drainability in the full tilt model and fillability in the single step model (\cref{cor:s1-fill}).
	\item Limited relaxation of the restriction to global control signals does not tangibly affect drainability (\cref{thm:tc-fill}), i.e., does not increase model power in this regard.
  \end{enumerate}
\item With regards to gaining drainability via obstacles in the full tilt model, we provide:
	\begin{enumerate}[(2.1)]
		\item A reduction from a \textsc{3Sat} variant showing that it is \NP-hard to decide whether a given number of obstacles suffices, even when restricted to thin polyominoes, i.e., those containing no $2 \times 2$ squares (\cref{thm:hardness}).
		\item A $c$-approximation algorithm for $k$-scaled boards of $c=4$ for $k=3$ and of $c=6$ for $k>3$ (\cref{approx_ratio,approxratiogt}).
	\end{enumerate}
\end{enumerate}

\begin{figure}[htb]
\centering
\includegraphics{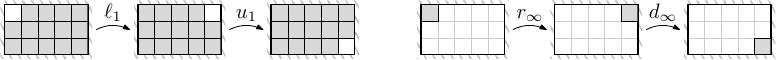}%
\caption{Movement of a single bubble in the single step model (left) is identical
  to the movement of a single particle in the full tilt model
  (right).}\label{fig:single-particle}
\end{figure}

For ease of exposition, we first present algorithmic results for
drainability in the full tilt model in
\cref{sec:preliminaries,sec:drain-full-tilt,sec:obstacles}, before generalizing
to fillability and other models in \cref{sec:models}.

\subsection{Related Work}
\label{subsec:relatedwork}

The drainability and fillability of polygonal two-dimensional shapes by uniform tilt movement has previously been studied by Aloupis et al.~\cite{acchlo-draining2014}, who gave an exact polynomial-time algorithm for the minimum number of sinks (i.e., exits) necessary to drain a polygon or polyhedron.
We study the discrete ``tilt'' model first introduced by Becker et al.~\cite{bhwrm-massive2013,caglsw-twodirections2023} in 2013.
Particles on a grid-based \emph{board} are moved uniformly, in either the single step or full tilt model:

The \emph{reconfiguration} problem asks whether a given arrangement of particles can be reconfigured into another specific arrangement, given a fixed set of obstacles.
The minimization variant, i.e., finding a shortest sequence of moves, was shown to be \PSPACE-complete in the full tilt model by Becker et al.~\cite{bdflmw-particle2019,6907856}.
A natural subproblem is the \emph{relocation} problem, which asks for just one specific particle to be moved to a target position.
In this variant, even deciding existence of any movement sequence is \NP-hard in the full tilt model~\cite{bdfhm-reconfiguring2013,bdflmw-particle2019}.
Recent complexity results for single step tilt moves by Caballero et al. demonstrate that, when restricted to two or three cardinal directions, deciding the existence of a relocation sequence is \NP-complete~\cite{ccglsw-hardness2020}, as well as in the absence of obstacles (i.e., a rectangular board) and just two movement directions~\cite{caglsw-twodirections2023}.
The more general \emph{occupancy} problem asks whether an arrangement of particles can be modified to move \emph{any} particle to a specific target position.
Even with this relaxation, finding a shortest sequence of moves, or deciding that no such sequence exists, is \PSPACE-complete~\cite{hierarchical2020,bdflmw-particle2019}.
Caballero et al.~\cite{caballero-cccg20-hardness} show that deciding occupancy remains \PSPACE-complete in the single step model when $2 \times 1$ dominoes are considered in addition to $1 \times 1$ particles.
However, it is easy to see that the problem is in \P\ if only unit-size particles are considered.

A myriad of related problems have been studied in the tilt model, such as \emph{gathering} particles into a connected configuration~\cite{bflkkkrs-drugdelivery2020,konitzny2022gathering}, or the design of special-purpose boards for efficient reconfiguration:
In particular, Balanza-Martinez et al.~\cite{balanza2019full} studied the design of universal shape constructors, i.e., boards that can be used to create large classes of particle configurations.
Further results exists on the reordering of labeled rectangular arrangements, which can be achieved either in linear time using quadratic area~\cite{bdflmw-particle2019}, or in quadratic time and linear area~\cite{ZhangCQB17}.
In addition to ``workspace''-based tilt reconfiguration settings, a variety of other models exist, e.g., moving only particles at maximal coordinates~\cite{aalr-trashcompaction2016,AkitayaLV23}, rather than all particles.
A~number of questions remain on the classification and complexity of deciding which configurations can be constructed~\cite{Becker:tilt_assembly,DBLP:journals/algorithmica/KellerRSS22} by sequentially introducing particles into an unobstructed system and ``gluing'' them onto an existing configuration using tilt movement.

    \section{Preliminaries}\label{sec:preliminaries}

Particles move on a \emph{board}, which we model as a finite subgraph of the
square tiling's dual graph, i.e., a board $B=(V,E)$ is a finite graph with
$V \subset \ZZ$ and \(E \subseteq \{\{u,v\} : {u \in V}, {v \in V},
\|u-v\|_1=1\}\). The \emph{boundary} of a board
is the axis-aligned polygon separating the tiles in~$V$ from ${\ZZ \setminus V}$.
Note that the boundary uniquely determines the corresponding board.
We only illustrate the special case of vertex-induced boards, which are isomorphic to sets of polyominoes and as such have boundaries whose sides only intersect at corners, even though the results in this paper hold in general.
A~\emph{sub-board} is a vertex-induced subgraph of a board; a maximally connected
\mbox{(sub-)board} is called a \emph{region}. We refer to the elements of $V$ as
\emph{pixels}. Every pixel is uniquely determined as the \emph{intersection} of
two \emph{segments}, a horizontal \emph{row segment} and a vertical \emph{column
segment}, which are maximally contiguous subsets of pixels of equal $y$-, or
$x$-coordinate, respectively. A \emph{boundary pixel} is a pixel adjacent to a
side of the boundary; a \emph{corner pixel} is a pixel adjacent to two
perpendicular sides of the boundary.
The set \(\{\ell, r, u, d\}\) is shorthand for the directions
$\ell$eft, $r$ight, $u$p, and $d$own. For a pixel $p$ let $p^{\ell}$, $p^r$,
$p^u$, and $p^d$ denote the left and right boundary pixel in its row segment,
and the top and bottom boundary pixel in its column segment, respectively; see
\cref{fig:board}(a).

\begin{figure}[htb]
\centering
\includegraphics{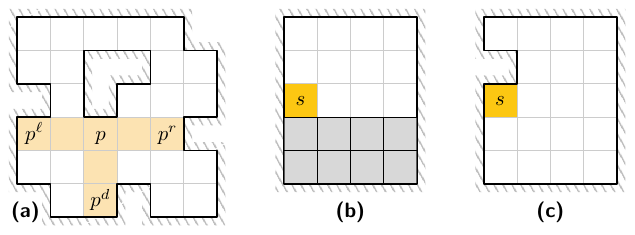}%
\caption{(a) A board with a pixel $p$ at the intersection of its row and column
  segments (shaded area), along with the boundary pixels of these segments. Note
  that $p=p^u$. (b) A configuration of a board $B$ that is minimal with respect to a
  sink $s$ in the full tilt model. Particles are depicted as gray squares. (c)~A~drainable sub-board
  of $B$.}\label{fig:board}
\end{figure}

A \emph{configuration}
$C \subseteq V$ is a subset of the pixels. We call a pixel $p \in C$
\emph{occupied} and a pixel $p \in V \setminus C$ \emph{free}. A \emph{move}
\(m: 2^V \to 2^V\) is a mapping between configurations; a \emph{model} is a set
of moves. Given a model $M$ and two configurations, $C$ and $D$, we say
\emph{$D$ is reachable from $C$ in one move}, or $C \rOne D$, if $D = m(C)$ for
some $m \in M$. $D$ is \emph{reachable from} $C$, or $C \rStar D$, if there is a
sequence of moves $\mSeq \in M$ such that $D = \mComp(C)$, i.e., $\rStar$ is the
reflexive, transitive closure of $\rOne$. We omit braces for singletons and
write, e.g., $u \rOne v$ instead of \(\{u\} \rOne \{v\}\). A pixel $p$ is
\emph{occupiable} from a configuration $C$ if there is a configuration $D$
reachable from $C$ that contains $p$.

The \emph{full tilt model}, \(\FT = \{u_\infty, d_\infty, \ell_\infty,
r_\infty\}\), has one move for every direction, which sees particles move
maximally in that direction until they hit the boundary or are blocked by
another particle. The move $\ell_\infty$, for example, transforms a
configuration $C$ in such a way that exactly the $|R \cap C|$ leftmost
pixels of every row segment $R$ are occupied in $\ell_\infty(C)$.

A \emph{sink} is a move associated with a pixel $s \in V$
defined as \(C \mapsto C \setminus \{s\}\). A configuration~$C$ is
\emph{minimal} with respect to a set of sinks $S \subseteq V$ if there is no
configuration $D$ such that $C \rStar[\FT \cup S] D$ and $|D| < |C|$. If \(V
\rStar[\FT \cup S] \varnothing\), then the board is \emph{drainable} to $S$.
\Cref{fig:fulltilt} illustrates particles moved to and removed at a sink.

Further notation is introduced as needed in individual sections.
\Cref{tbl:notation} provides a reference of frequently used symbols.

\begin{figure}[htb]
\centering
\includegraphics{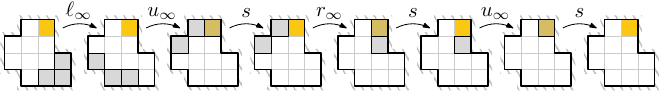}%
\caption{Draining three particles to a sink $s$ using full tilt
  moves.}\label{fig:fulltilt}
\end{figure}

\begin{table}
\centering
\caption{Frequently used notation.}\label{tbl:notation}
\begin{tabular}{ l l l }
 Notation & Meaning & Defined in \\
 \hline
 $C \rStar D$ & $D$ is reachable from $C$ in model $M$ & \Cref{sec:preliminaries} \\
 $\FT = \{u_\infty, d_\infty, \ell_\infty, r_\infty\}$ & the full tilt model & \Cref{sec:preliminaries} \\
 $G_F(B)$ & the full tilt graph of a board $B$ & \Cref{sec:drain-full-tilt} \\
 $G_S(B)$ & the small tilt graph of a board $B$ & \Cref{sec:drain-full-tilt} \\
 $\bScale{B}$ & the board $B$ scaled by a factor $k$ & \Cref{subsec:approx}\\
 $\gExt$ & the extended graph of $G$ with a set of sinks $S$ & \Cref{subsec:approx} \\
 $G_L(B,S)$ & the large tilt graph of a board $B$ and a set of sinks $S$ & \Cref{subsec:approx} \\
 $\SSt = \{u_1, d_1, \ell_1, r_1\}$ & the single step model & \Cref{sec:models} \\
 $\IE[M]$ & the interval extension of model $M \in \{\FT, \S1\}$ & \Cref{sec:models} \\
 $\dual{M}$ & the dual model of $M$ & \Cref{subsec:duality} \\
\end{tabular}
\end{table}

    \section{Drainability in the Full Tilt Model}\label{sec:drain-full-tilt}

In this section, we examine an algorithmic approach to deciding drainability in the full tilt model, 
starting with the following theorem.
This is a special case of a property that holds in more general classes of models, which is why we postpone the proof until \cref{drainable_mono_vp} in~\cref{sec:models}.

\begin{theorem}
    \label{thm:drainable-ft}
    A board $B=(V,E)$ is drainable to a set of sinks $S \subseteq V$ if and only if
    for every $p \in V$ there exists some $s \in S$ such that $p \rStar[\FT] s$.
\end{theorem}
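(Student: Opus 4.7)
The plan is to prove both implications via short monotonicity-based inductions. The forward direction will exploit the fact that every move in $\FT \cup S$ respects set inclusion between configurations; the backward direction will dualize this by showing that the single-particle image of any full tilt sequence is always attained by some particle when the sequence is applied to a larger configuration.

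For the forward direction, I would first establish the following inclusion property: for every $m \in \FT \cup S$ and every $C' \subseteq C \subseteq V$, one has $m(C') \subseteq m(C)$. Each full tilt $d_\infty$ packs the particles of every maximal row or column segment against its extreme side, so reducing the number of particles in a segment only removes innermost occupied pixels from the packed result; sinks trivially preserve inclusion. Induction on sequence length extends this to $(m_k \circ \cdots \circ m_1)(C') \subseteq (m_k \circ \cdots \circ m_1)(C)$. If $V$ admits a drainage sequence $m_1, \ldots, m_k$, then applying the same sequence to any singleton $\{p\} \subseteq V$ produces a subset of $\varnothing$, so $\{p\}$ is drained as well. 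Since only sink moves remove particles, there is a first index $i$ at which the lone particle occupies the sink pixel $s$ of $m_i$; the subsequence of full tilts occurring before $i$ is then a witness that $p \rStar[\FT] s$.

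For the backward direction, the key step will be the following lemma: whenever $p \in C$ and $p \rStar[\FT] s$ via a full tilt sequence $m_1, \ldots, m_j$, the configuration $(m_j \circ \cdots \circ m_1)(C)$ contains $s$. I would prove this by induction on $j$, the base case $j = 0$ being immediate. For the inductive step, set $p' = m_1(p)$; if for instance $m_1 = \ell_\infty$ then $p' = p^\ell$, and since $p \in C$ keeps the row segment of $p$ non-empty, after $\ell_\infty$ the leftmost pixel $p^\ell$ is occupied, so $p' \in m_1(C)$. Applying the inductive hypothesis to $m_1(C)$ and the tail $m_2, \ldots, m_j$ then concludes the step. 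With this lemma, drainage of $V$ follows by induction on $|C|$: for any nonempty $C$, pick $p \in C$, invoke the hypothesis to obtain a full tilt sequence taking $p$ to a sink $s \in S$, apply that sequence to $C$ (thereby placing some particle at $s$), and remove it with the sink at $s$ to strictly reduce $|C|$.

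The main subtlety is recognizing that the two directions rely on dual aspects of how full tilts act on row and column segments: shrinking the configuration produces a subset of the packed result, whereas the single-particle extremal target is always attained by the \emph{leading} particle in a larger configuration. Once the segment-packing description of full tilts is made explicit, both inductions reduce to essentially routine bookkeeping.
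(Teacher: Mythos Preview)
Your proposal is correct and matches the paper's approach in its essential ingredients: monotonicity of the moves in $\FT \cup S$ drives the forward direction (apply the draining sequence to a singleton), and monotonicity together with volume preservation drives the backward direction (a single-particle trajectory to a sink is always realized inside any larger configuration). The paper packages these same observations more abstractly---it first isolates monotonicity and volume preservation as axioms, proves the theorem for any model satisfying both (\cref{drainable_mono_vp}), and then observes that \FT qualifies---whereas you argue directly for \FT; the underlying reasoning is the same, and your induction on $|C|$ is simply an explicit unwinding of the paper's ``nonempty minimal configuration cannot be minimal'' contradiction.
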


The characterization in \cref{thm:drainable-ft} suggests an approach to
deciding drainability. Consider what we call the \emph{full tilt graph}
$G_F(B)=(V_F, E_F)$ of a board $B=(V,E)$. $V_F=V$ contains all pixels of $B$ and
\(E_F=\{(p,q) \in V^2 : p \neq q, q \in \{p^{\ell}, p^r, p^u, p^d\}\}\) connects
$p$ to $q$ if a particle at $p$ can reach $q$ in a single move. Deciding
drainability is equivalent to testing if there is a path in $G_F$ from every pixel to a sink.
However, this is unnecessarily inefficient. In~particular, we do not need to check every pixel
and require only the boundary as input.

\begin{lemma}
    \label{lem:corners-reachable}
    For every pixel $p$ on a board there is a corner pixel $c$ such that \(p
    \rStar[\FT] c\).
\end{lemma}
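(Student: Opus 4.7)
The plan is to show that by repeatedly alternating $\ell_\infty$ and $u_\infty$ moves starting from $p$, we eventually arrive at a pixel that is simultaneously on the left and top boundary of the board, which by definition is a corner pixel.

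Concretely, I would define a sequence of pixels $p_0, p_1, p_2, \ldots$ by $p_0 := p$, $p_{2i+1} := p_{2i}^\ell$, and $p_{2i+2} := p_{2i+1}^u$. Since $\ell_\infty$ only changes $x$-coordinates and $u_\infty$ only changes $y$-coordinates, and because a single particle on the board moves maximally until hitting the boundary, the sequence $(x(p_i))_i$ is weakly monotonically non-increasing while $(y(p_i))_i$ is weakly monotonically non-decreasing. Both coordinates are bounded (the board is finite), so the sequence stabilizes: there exists some index $i$ with $p_i = p_{i+1} = p_{i+2}$, i.e., a pixel $q := p_i$ satisfying $q^\ell = q$ and $q^u = q$.

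The final step is to argue that such a $q$ is a corner pixel. From $q^\ell = q$ we know there is no pixel immediately to the left of $q$ in $V$, so the square tile at $q$ has a boundary side on its left; analogously, $q^u = q$ yields a boundary side above $q$. Since these two sides are perpendicular, $q$ is by definition a (top-left) corner pixel, and by construction $p \rStar[\FT] q$ through the composition of the chosen left and up tilts.

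The only subtlety I anticipate is making sure that ``$q^\ell = q$'' really does translate into the geometric statement ``$q$ has a boundary side on its left'' — but this follows directly from the definitions of row segments and boundary pixels in \cref{sec:preliminaries}, so no real obstacle arises. The entire argument is essentially a termination-plus-fixed-point argument on a two-dimensional monotone potential.
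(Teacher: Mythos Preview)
Your proof is correct and follows essentially the same approach as the paper: alternate two perpendicular full-tilt moves, use finiteness of the board to force termination at a fixed point, and observe that such a fixed point is a corner pixel. The only cosmetic differences are that the paper uses $d_\infty \circ \ell_\infty$ (reaching a bottom-left corner) rather than your $\ell_\infty, u_\infty$ alternation (reaching a top-left corner), and it phrases termination via an explicit bound of $w$ repetitions (tracking only the strictly decreasing $x$-coordinate after each nontrivial $\ell$) rather than your two-coordinate monotone stabilization argument.
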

\begin{proof}
    Consider $w$ repetitions of the move $\ell \circ d$, where $w$ is the width
    of the board. After every application of $d$, the
    particle is positioned at a boundary pixel which is bottommost in its column
    segment. If $\ell$ does not move it, then it has reached a corner
    pixel. Otherwise, $\ell$ moves the particle to a pixel with lower
    $x$-coordinate, and there are exactly $w$ distinct $x$-coordinates.
\end{proof}

\begin{restatable}{corollary}{restateDrainCorners}
    \label{cor:drainable-ft-corners}
    A board $B=(V,E)$ is drainable to a set of sinks $S \subseteq V$ if and only if
    for every corner pixel $c \in V$ there exists some $s \in S$ such that $c \rStar[\FT] s$.
\end{restatable}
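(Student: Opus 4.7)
The plan is to obtain this corollary as a short consequence of Theorem~\ref{thm:drainable-ft} combined with Lemma~\ref{lem:corners-reachable}, using the transitivity of $\rStar[\FT]$ that follows from its definition as the reflexive-transitive closure of $\rOne[\FT]$.

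The forward implication is essentially immediate: if $B$ is drainable to $S$, then the characterization in Theorem~\ref{thm:drainable-ft} supplies, for every pixel $p \in V$ and in particular for every corner pixel $c \in V$, some $s \in S$ with $c \rStar[\FT] s$, which is exactly the statement asserted.

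For the converse, suppose that for every corner pixel $c \in V$ there is a sink $s_c \in S$ with $c \rStar[\FT] s_c$, and let $p \in V$ be arbitrary. Lemma~\ref{lem:corners-reachable} furnishes a corner pixel $c$ with $p \rStar[\FT] c$, and the hypothesis then gives $c \rStar[\FT] s_c$ for some $s_c \in S$. Concatenating the two witnessing full tilt move sequences yields $p \rStar[\FT] s_c$, so the ``if'' direction of Theorem~\ref{thm:drainable-ft} applies and shows that $B$ is drainable to $S$.

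The main obstacles were already dispatched by Theorem~\ref{thm:drainable-ft} and Lemma~\ref{lem:corners-reachable}; the corollary itself is essentially a two-line chaining argument, and the only point to verify is that move sequences compose transitively, which is built into the definition of $\rStar$ in \cref{sec:preliminaries}. No case analysis on the position of $p$ or the shape of the board should be needed.
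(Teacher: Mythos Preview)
Your proof is correct and follows essentially the same approach as the paper, which simply notes that the corollary follows from \cref{thm:drainable-ft}, \cref{lem:corners-reachable}, and the transitivity of reachability. Your version merely spells out the two implications explicitly.
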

\begin{proof}
    Follows from \cref{thm:drainable-ft}, \cref{lem:corners-reachable}, and the fact that reachability
    is transitive.
\end{proof}

We now describe an algorithm that, given the boundary of a board $B=(V,E)$ and
a set $S \subseteq V$ of potential sinks, finds a minimum-cardinality subset
$S' \subseteq S$ such that $B$ is drainable to $S'$---or reports that no such
set exists. As a consequence, we can decide whether a board is drainable to a
set $S$ by supplying $S$ to the algorithm and observing if it returns a subset
or reports failure. The general approach is the same as the one used by the
authors of~\cite{acchlo-draining2014}. However, due to the rectilinear nature of
board boundaries, and our restriction to tilting in only four directions, our
algorithm requires less time.

\begin{figure}[htb]
    \centering
    \includegraphics{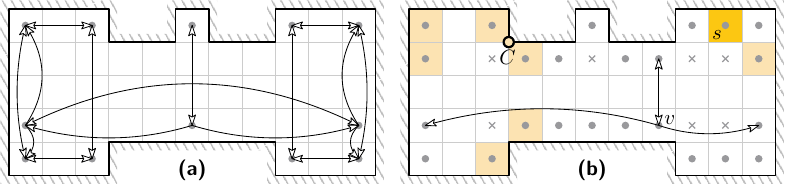}%
    \caption{(a) The small tilt graph, $G_S(B)$, of a board $B$. (b) The vertices of the large
    tilt graph, $G_L(B, \{s\})$, of $B$ and a sink $s$. The lightly-shaded pixels are added due to
    the reflex corner $C$ (though some are also part of the graph as corner
    pixels). Vertices marked with a cross are added as intersections. We only depict the
    edges incident on vertex $v$ to avoid clutter.}\label{fig:graphs}
\end{figure}

By \cref{cor:drainable-ft-corners}, the only pixels relevant for drainability are those
reachable from corner pixels. This allows us to restrict our algorithm to the \emph{small tilt graph},
$G_S(B)=(V_S, E_S)$, of~$B$. Let $V' \subseteq V$ be the set of corner pixels of
$B$. Then $V_S$ is the closure of $V'$ under $\rStar[\FT]$, and $G_S$ is the
subgraph of $G_F$ induced by $V_S$; see \cref{fig:graphs}(a).
For a boundary with $n$ corners, $|V_S| \in \BigO(n)$ as there are
at most two additional pixels for every reflex corner of the boundary, and
$|E_S| \in \BigO(n)$ as at most three other pixels are reachable from every
boundary pixel in a single move.

\begin{restatable}{lemma}{restateBuildGs}
    \label{lem:build-gs}
    $G_S(B)$ can be constructed from the boundary of a board $B$ in $\BigO(n \log n)$
    time and $\BigO(n)$ space, where $n$ is the number of corners of the boundary.
\end{restatable}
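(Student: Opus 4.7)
The plan is to combine an $\BigO(n\log n)$ preprocessing of the boundary into two orthogonal query structures with a BFS, started from the corner pixels, that discovers $V_S$ while emitting the edges of $E_S$.

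For the preprocessing, I would run a horizontal plane sweep over the $\BigO(n)$ boundary edges, maintaining the currently active vertical edges in a balanced BST keyed by $x$-coordinate. This yields a horizontal slab decomposition of the polygon in $\BigO(n\log n)$ time and $\BigO(n)$ space, in which every slab stores its maximal row segments together with the $x$-coordinates of their leftmost and rightmost pixels. Given any pixel $p=(x,y)\in V$, the endpoints $p^\ell$ and $p^r$ can then be recovered by two binary searches (first for the slab containing $y$, then for the row segment containing $x$) in $\BigO(\log n)$ total time. A symmetric vertical sweep yields the analogous structure that answers $p^u$ and $p^d$ queries in $\BigO(\log n)$ time.

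Next, I would extract the $\BigO(n)$ corner pixels directly from the boundary (one per convex corner), place them on a queue, and perform a BFS whose visited set is stored, e.g., in a balanced BST keyed by pixel coordinates. Whenever a pixel $p$ is dequeued, I query the two locators to obtain $p^\ell, p^r, p^u, p^d$, record the corresponding edges in $E_S$, and enqueue any neighbor not yet in the visited set. The vertices explored in this way are precisely the closure of the corner pixels under $\rStar[\FT]$, i.e.,~$V_S$. Since the surrounding text already establishes $|V_S|\in\BigO(n)$ and $|E_S|\in\BigO(n)$, the BFS performs $\BigO(n)$ neighbor queries and $\BigO(n)$ visited-set operations, each at cost $\BigO(\log n)$. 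Together with the $\BigO(n\log n)$-time, $\BigO(n)$-space preprocessing, this gives the claimed bounds.

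I expect the main obstacle to be the preprocessing step rather than the BFS: one has to argue that, from only the polygonal boundary, the row- and column-segment locators can be built in $\BigO(n\log n)$ time and queried in $\BigO(\log n)$ time. This reduces to a textbook orthogonal plane sweep, but some care is required at reflex corners, where a single vertical edge simultaneously closes one row segment and opens another. The sweep must register the two pixels flanking such a reflex corner as distinct row-segment endpoints, so that precisely those "additional pixels" accounting for the $\BigO(n)$ bound on $|V_S|$ are found by the subsequent BFS rather than missed.
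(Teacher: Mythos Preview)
Your approach is essentially the paper's: build a structure that answers $p^\ell,p^r,p^u,p^d$ queries in $\BigO(\log n)$ time, then run a BFS from the corner pixels, relying on the already-established bounds $|V_S|,|E_S|\in\BigO(n)$. The paper carries this out with exactly the same BFS and, for the query structure, simply cites Sarnak and Tarjan's persistent search trees for planar point location.

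The one place where your plan does not quite deliver the stated bounds is the space analysis of the slab decomposition. You write that ``every slab stores its maximal row segments,'' but a comb-shaped board with $\Theta(n)$ teeth yields $\Theta(n)$ horizontal slabs, each crossed by $\Theta(n)$ row segments, so storing them explicitly costs $\Theta(n^2)$ space. Getting down to $\BigO(n)$ requires that consecutive slabs share structure---precisely what a persistent BST (Sarnak--Tarjan) provides and what the paper invokes. With that substitution your argument goes through unchanged and coincides with the paper's proof; the reflex-corner bookkeeping you flag as the ``main obstacle'' is handled automatically once the ray-shooting queries are available.
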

\begin{proof}
    We preprocess the boundary using the data structure by Sarnak and
    Tarjan~\cite{st-planar1986} to allow querying for the closest segment of the
    boundary to the left, right, up, and down of every point. Preprocessing requires
    $\BigO(n \log n)$ time and $\BigO(n)$ space, and queries take~$\BigO(\log n)$ time. This
    allows us to find, for every pixel $p$, the pixels $p^{\ell}$, $p^r$, $p^d$, and
    $p^u$ in $\BigO(\log n)$ time.

    We maintain a queue $Q$ and a set $D$ of discovered pixels, both initially
    containing all corner pixels. For every pixel $p \in Q$ we compute
    $p^{\ell}$, $p^r$, $p^d$, and $p^u$, add $p$ to $V_S$, add $(p, p^{\ell})$,
    $(p, p^r)$, $(p, p^d)$, and $(p, p^u)$ to $E_S$ (unless the head is equal to
    $p$), and extend $Q$ by \(\{p^{\ell}, p^r, p^d, p^u\} \setminus D\) and $D$ by
    \(\{p^{\ell}, p^r, p^d, p^u\}\). Setting up $Q$ and $D$ takes $\BigO(n)$ time and
    space, and processing $Q$ amounts to $\BigO(n)$ steps of $\BigO(\log n)$ time each.
\end{proof}

The next step is to consider the condensation $G_S^*=(V_S^*, E_S^*)$ of $G_S$,
i.e., the directed graph that has as vertices the strongly-connected components
of $G_S$ and an edge from component $A$ to component $B$ if there is an edge
from a vertex in $A$ to a vertex in $B$ in $G_S$. Then, $B$ is drainable to
every set $S$ that contains at least one pixel from every sink \(V' \in
V_S^*\). A suitable subset of sinks, if one exists, can thus be computed in $\BigO(|V_S|)$ time
and space. See \cref{alg:min_sinks} for an overview of the steps.

\begin{algorithm}[hbt]
    \KwIn{The boundary of a board $B=(V,E)$ and a set $S \subseteq V$.}
    \KwOut{A minimum-cardinality subset $S' \subseteq S$ such that \(V \rStar[\FT
    \cup S'] \varnothing\), or a message indicating no such set exists.}
    Construct $G_S$ using \cref{lem:build-gs}.\;
    Compute $G_S^*=(V_S^*, E_S^*)$, the condensation of $G_S$.\;
    $S' := \varnothing$\;
    \ForEach{sink $V' \in V_S^*$}{
        \eIf{there is at least one $s \in V' \cap S$}
        {\(S' := S' \cup \{s\}\)}
        {\KwRet{No solution exists because $V' \cap S = \varnothing$.}}
    }
    \KwRet{$S'$}\;
    \caption{Finding a minimum-cardinality set of sinks.}\label{alg:min_sinks}
\end{algorithm}

\begin{theorem}
    \label{lem:alg-min-sinks-correct}
    \Cref{alg:min_sinks} produces a correct result and runs in \(\BigO(n \log n +
    |S|)\) time using $\BigO(n)$ space, where $n$ is the number of corners of the
    boundary.
\end{theorem}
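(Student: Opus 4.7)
The plan is to verify correctness of \cref{alg:min_sinks} and then bound its resource usage. For correctness, I would invoke \cref{thm:drainable-ft}, which reduces drainability to the requirement that every pixel can reach some element of the chosen sink set. By \cref{lem:corners-reachable}, it suffices to check this property for corner pixels, and all of those lie in $V_S$ by construction; moreover, every vertex of $G_S$ is itself reachable from some corner pixel, so by transitivity drainability to $S'$ is equivalent to the statement that every vertex of $G_S$ can reach some element of $S'$ along edges of $G_S$.

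I would then characterize $S'$ in terms of the condensation $G_S^*$. A component $V' \in V_S^*$ is a sink of the DAG (i.e., has no outgoing edges) exactly when the pixels in $V'$ cannot reach any pixel outside $V'$ in $G_S$. Hence $S'$ drains $B$ if and only if $S'$ meets every sink component of $G_S^*$, and the minimum such $S'$ has cardinality equal to the number of sink components. The algorithm picks exactly one pixel from each sink component whenever possible, which simultaneously yields feasibility and minimality, and it correctly reports infeasibility exactly when some sink component is disjoint from $S$.

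For the running time, I would decompose the work as follows. Constructing $G_S$ costs $\BigO(n \log n)$ time and $\BigO(n)$ space by \cref{lem:build-gs}; computing the condensation $G_S^*$ takes an additional $\BigO(|V_S| + |E_S|) = \BigO(n)$ via Tarjan's algorithm. The main subtlety is the interaction with $S$, since $|S|$ may be asymptotically larger than $n$: I would avoid storing $S$ in full by first building a hash table mapping each pixel of $V_S$ to its component index (size $\BigO(n)$), then streaming through $S$ once and, for each $s \in S$ that lies in $V_S$, recording $s$ as a representative for its containing component (unless one is already stored). A final pass over the sink components of $G_S^*$ either collects the stored representatives into $S'$ or detects an uncovered sink component and returns failure.

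The main obstacle is arguably the space bound, since a naive implementation might store $S$ explicitly and exceed $\BigO(n)$. The streaming approach above keeps the persistent state at $\BigO(n)$ (a hash table over $V_S$ plus at most one representative per sink component), and totals $\BigO(n \log n + |S|)$ time. Correctness itself is essentially a syntactic consequence of the graph-theoretic characterization: necessity follows from the absence of outgoing edges at sink components, while sufficiency follows from transitivity of $\rStar[\FT]$ together with \cref{cor:drainable-ft-corners}.
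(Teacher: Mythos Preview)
Your proposal is correct and follows essentially the same line as the paper's proof: reduce drainability to reachability of a sink from every vertex of $G_S$ via \cref{cor:drainable-ft-corners}, then observe that this holds exactly when $S'$ meets every sink component of the condensation $G_S^*$, so one representative per sink component is both necessary and sufficient. Your treatment is in fact more explicit than the paper's on one point: the paper simply asserts that the post-construction steps run in time and space linear in $|G_S|=\BigO(n)$, whereas you spell out how to process $S$ in a single streaming pass against an $\BigO(n)$-size dictionary on $V_S$ so that the $\BigO(n)$ space bound survives even when $|S|\gg n$; that is exactly the place where the $|S|$ term in the running time arises.
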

\begin{proof}
    As a consequence of \cref{cor:drainable-ft-corners}, $B$ is drainable to $S'$ if
    and only if there is a path from every $p \in V_S$ to an \(s \in V_S \cap
    S'\) in $G_S$. This is the case if and only if there is an $s \in V' \cap S'$
    for every sink \(V' \in V_S^*\). The algorithm is correct because it yields a
    smallest set~$S'$ with this property, if it exists. \cref{lem:build-gs} determines
    the time and space complexity of the first step. The following steps can all be
    completed in time and space linear in the size of $G_S$, which is $\BigO(n)$.
\end{proof}

    \section{Placing Obstacles to Guarantee Drainability}\label{sec:obstacles}

Although not every board is drainable, slight changes may render a board drainable, see Figures~\ref{fig:board}(b) and~\ref{fig:board}(c). 
Given a board and a set $S$ of sinks, we want to determine a maximum-size sub-board drainable to $S$. 
In other words, we want to place a minimum number of \emph{obstacles}
such that the remaining board is drainable.
We refer to this as the \drainingProb.

\subsection{Computational Complexity}

In this section, we show that the problem is \NP-hard; in particular, we show that deciding whether $\ell$ many obstacles suffice to guarantee that a board is drainable is an \NP-hard problem.
Our reduction is from \textsc{3Sat-3}~\cite{tovey1984simplified}, and works as follows; we refer to~\cref{fig:hardness-example} for an overview.

The problem \textsc{3Sat-3} is a variant of \textsc{3Sat} having the additional property that every variable appears at most $3$ times; thus, we may assume that each variable occurs at least once negated, and once unnegated.
Then, for every instance $\varphi$ of \textsc{3Sat-3}, we construct a polyomino $P_{\varphi}$ that is an instance of the \drainingProb.

\begin{figure}[htb]
	\centering
	\includegraphics[page=6, scale=1]{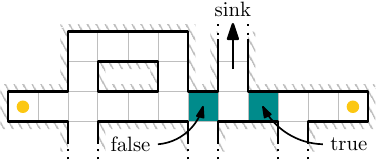}
	\caption{Overview of the \NP-hardness reduction for the full tilt variant. 
		The depicted instance is due to the \textsc{3Sat-3} formula $\varphi = (x_1 \lor \neg x_2 \lor x_3) \land (\neg x_1 \lor x_4 \lor \neg x_5) \land (x_2 \lor x_3 \lor \neg x_4) \land (x_1 \lor x_4) \land (\neg x_3 \lor x_5)$.
	}
	\label{fig:hardness-example}
\end{figure}

For this, we add for each variable the respective variable gadget. 
As illustrated in~\cref{fig:hardness-example}, the variable gadgets are placed in row at the top of the polyomino. 
For each clause, we place the respective clause gadget vertically in row at the right side of the construction. 
We then connect each variable with its respective clauses by thin L-shaped corridors.

It is easy to observe that we need at least one obstacle per variable to drain a variable gadget of $P_\varphi$.
Furthermore, from every other position we can reach a variable gadget, hence, no further obstacles are needed.
By carefully arguing we can show that $P_{\varphi}$ is drainable if and only if $\ell$ many obstacles are placed at very specific locations within the variable gadgets, where $\ell$ is the number of variables in $\varphi$.
This leads to the following theorem.

\begin{restatable}{theorem}{tiltFillabilityHardness}\label{thm:hardness}
	It is \NP-hard to decide whether placing $k$ obstacles suffices to guarantee drainability of a thin polyomino.
\end{restatable}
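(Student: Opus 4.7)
The plan is to formalize the reduction sketched above from \textsc{3Sat-3} to the \drainingProb\ restricted to thin polyominoes. Given a formula $\varphi$ with $n$ variables $x_1, \ldots, x_n$ and $m$ clauses, where each variable appears at most three times and (without loss of generality) both positively and negatively, I would construct in polynomial time a thin polyomino $P_\varphi$ together with a sink set $S$, and set $k = n$. The central claim to prove is that $P_\varphi$ admits a sub-board drainable to $S$ using at most $n$ obstacles if and only if $\varphi$ is satisfiable.

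First, I would specify the three types of gadgets and how they compose into $P_\varphi$. The variable gadget for $x_i$ is a small substructure with exactly two admissible obstacle positions, labelled the \emph{true pixel} and the \emph{false pixel}. Placing an obstacle in either pixel drains the gadget itself, and simultaneously either opens or blocks the corridor that carries particles to literal-copies of $x_i$ (and symmetrically for $\neg x_i$). The clause gadget for $(\lambda_1 \vee \lambda_2 \vee \lambda_3)$ is a vertical strip whose pixels can be drained to $S$ only through at least one of three thin, L-shaped literal corridors entering it. Each corridor has width one, keeping $P_\varphi$ thin, and the L-shape ensures that a single full tilt move drives a particle from its variable gadget around the corner and into the clause gadget.

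Second, I would prove the forward direction by translating a satisfying assignment into an obstacle placement: one obstacle per variable gadget, at the true pixel if the variable is true and at the false pixel otherwise. Using \cref{cor:drainable-ft-corners}, I only need to verify that every corner pixel of the resulting sub-board reaches a sink under $\FT$. This verification is local to each gadget and corridor, and uses a fixed preferred tilt schedule (e.g., $d$ then $r$ then further tilts) to route particles from each variable gadget, through an unobstructed literal corridor of a satisfied clause, into $S$. For the converse, I would establish two lemmas: (i) each variable gadget individually requires at least one obstacle to be drainable, and (ii) any such obstacle must sit in one of the two admissible positions, since alternative placements either fail to drain the gadget or leave spurious corner pixels from which no sink is reachable. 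Because only $n$ obstacles are available for $n$ variable gadgets, exactly one obstacle lies in each gadget, so the placement encodes a truth assignment; drainability of every clause gadget then forces at least one of its literal corridors to be opened, meaning the clause is satisfied.

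The main obstacle will be the combinatorial design and verification of the gadgets under the thinness constraint. Forbidding $2\times 2$ blocks rules out the natural ``reservoir'' constructions one would use to accumulate or redirect particles, so the variable and clause gadgets must be laid out so that the full tilt dynamics behave as intended with width-one corridors only. In particular, I will have to argue carefully that (a) obstacles placed outside variable gadgets are always strictly worse than the intended placements, (b) the flow through an L-shaped corridor is unaffected by particles originating in neighbouring corridors or gadgets (so that literals do not spuriously ``help'' each other), and (c) global placement of gadgets prevents particles from one region from rescuing drainability of another, keeping the per-variable obstacle lower bound tight. Once these geometric invariants are verified by case analysis on the gadgets and their interfaces, the reduction is polynomial and the hardness claim of \cref{thm:hardness} follows.
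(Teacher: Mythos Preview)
Your proposal follows essentially the same route as the paper: a reduction from \textsc{3Sat-3}, variable gadgets with exactly two admissible obstacle pixels encoding the truth value, clause gadgets connected to their literals by thin L-shaped corridors, and $k$ equal to the number of variables, with the bidirectional argument hinging on the per-variable obstacle lower bound.

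Two points deserve tightening. First, your description of the particle flow is internally inconsistent: in the forward direction you route particles \emph{from} variable gadgets \emph{through} a corridor into a clause and then to $S$, but in the converse you say drainability of a clause gadget forces one of its literal corridors to be open. These cannot both hold with the same sink placement. In the paper's construction the sink sits on a horizontal line attached to the variable gadgets, and the flow is clause $\to$ corridor $\to$ variable $\to$ sink; the obstacle position in a variable gadget determines from which literal's entrance a particle can continue on to the sink. You should fix the topology to match your converse direction.

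Second, your concern (b) about particles from neighbouring corridors interfering is unnecessary: by \cref{thm:drainable-ft} (equivalently \cref{cor:drainable-ft-corners}), drainability in \FT\ reduces to single-particle reachability, so there is no multi-particle interaction to analyse. This also dispenses with most of concern (a): an obstacle outside all variable gadgets cannot help any variable gadget drain (single-particle reachability out of a variable gadget is unaffected by external obstacles), so the counting argument alone forces exactly one obstacle per variable gadget without needing to argue that alternative placements are ``strictly worse''.
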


We specify variable and clause gadgets, and how these are connected to one another, and start with the variable gadget as depicted in~\cref{fig:hardness-variables}.

\begin{figure}[ht]
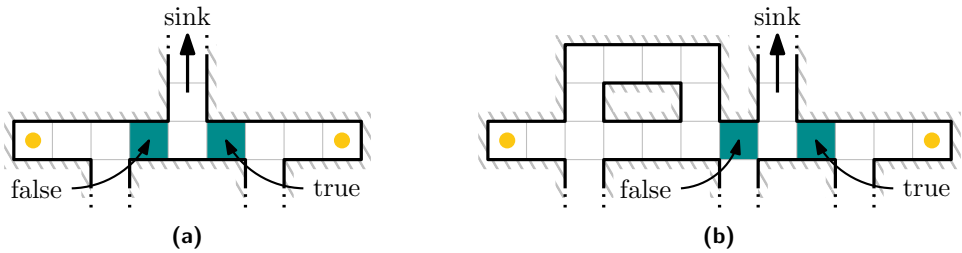

	\captionsetup[subfigure]{justification=centering}
	\begin{subfigure}[b]{0.43\columnwidth}%
		\centering
		\includegraphics[page=2]{./figures/tilt-fill-hardness}
		\caption{}
		\label{fig:hardness-variables_a}
	\end{subfigure}%
	\begin{subfigure}[b]{0.57\columnwidth}%
		\centering%
		\includegraphics[page=1]{./figures/tilt-fill-hardness}%
		\caption{}
		\label{fig:hardness-variables_b}
	\end{subfigure}%
	\caption{(a) Variable gadget for two occurrences, and (b) for three occurrences.}
	\label{fig:hardness-variables}
\end{figure}

\begin{lemma}
	\label{lem:obstacleForVariable}
	To drain the variable gadget, we need to place at least one obstacle.
	Moreover, placing one obstacle on one of two pixels within the gadget is sufficient to drain it, if their entrances are connected by a path.
\end{lemma}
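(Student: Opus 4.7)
The plan is to apply \cref{thm:drainable-ft}: viewing the variable gadget in isolation with its entrances treated as sinks (justified because, under the stated hypothesis, the entrances are connected by a path to a genuine sink), drainability of the gadget is equivalent to asking that every pixel can reach an entrance via full tilt moves. In fact, by \cref{cor:drainable-ft-corners} it suffices to verify this for the corner pixels of the gadget. Both claims of the lemma then reduce to structural arguments on the full tilt graph $G_F$ restricted to the gadget.

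For the \emph{necessity} direction, I would exhibit a set of pixels $T$ in the variable gadget that is closed under $\ell_\infty, r_\infty, u_\infty, d_\infty$ and contains no entrance pixel. Inspection of \cref{fig:hardness-variables} suggests that the inner loop of the gadget — the row and column segments forming its central rectangle — has the property that for every pixel $p \in T$ all four of $p^\ell, p^r, p^u, p^d$ again lie in $T$. By \cref{thm:drainable-ft}, since no pixel of $T$ can reach a sink, $T$ cannot be drained; hence at least one obstacle must be placed within the gadget, proving the first claim.

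For the \emph{sufficiency} direction, I would identify two specific candidate pixels inside $T$ (one per literal orientation, marked in \cref{fig:hardness-variables_a,fig:hardness-variables_b}) and show that turning either of them into an obstacle accomplishes two things: (i)~it cuts the offending row or column segment of $T$ in two, so that $T$ is no longer closed under tilt moves — every pixel of $T$ now has some tilt direction leading out of $T$; and (ii)~tracing the resulting trajectories, each corner pixel reaches one of the two entrances in $\BigO(1)$ moves. Applying \cref{cor:drainable-ft-corners} with the entrances as sinks, and then composing with the external drainable path guaranteed by the hypothesis, gives drainability of the entire gadget.

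The main obstacle I anticipate is the case analysis, since \cref{fig:hardness-variables} provides two variants (two-occurrence and three-occurrence), each of which has several corner pixels whose trajectories must be tracked after the obstacle is inserted. This should remain tractable because of the mirror symmetry between the two admissible obstacle positions: one may trace the tilt sequence explicitly for, say, the ``true'' position and then invoke symmetry to conclude for the ``false'' position. A small subtlety is to ensure that the two named pixels are indeed the \emph{only} positions that simultaneously break $T$ and preserve connectivity to both entrances — any other candidate either fails to break the closure of $T$ under tilts, or disconnects one of the two entrances from a corner of the gadget, which would be exploited later in the reduction to encode the truth value of the variable.
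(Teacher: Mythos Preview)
Your proposal is correct and follows essentially the same strategy as the paper: reduce to single-particle reachability via \cref{thm:drainable-ft}, exhibit a tilt-closed set not containing any entrance to force an obstacle, and then verify by inspection that either of the two marked positions breaks this closure and routes every corner pixel to an entrance.

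The only notable difference is in the choice of witness for necessity. You propose the entire ``inner loop'' as the closed set $T$, whereas the paper pins down a \emph{minimal} such set: just the leftmost and rightmost extreme pixels of the gadget (the yellow dots in \cref{fig:hardness-variables}), each of which reaches only the other under full tilt moves. This two-pixel witness is slightly cleaner---it sidesteps having to argue closure for every segment of a loop---and it also makes the ``exactly two options'' claim more immediate, since any obstacle that helps must lie on the unique row segment connecting these two pixels. Your larger $T$ works too, but you will want to be careful that the loop you describe really is tilt-closed in both gadget variants; the paper's two-pixel argument avoids that check entirely.
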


\begin{proof}
	We first focus on the gadget in~\cref{fig:hardness-variables_a}.
	Assume for the following argument that both entrances at the bottom are connected by a thin path, and that the sink is in direction indicated by the arrow.
	
	Consider the left and right extreme pixel of a variable gadget, highlighted by yellow dots.
	From these pixels we cannot reach the sink; in particular, no other pixel than the respective other one is reachable from these pixels.
	It is easy to see that we have exactly two options, highlighted by green squares, to place obstacles in order to reach the sink from every pixel.
	
	Similar arguments apply for the gadget in~\cref{fig:hardness-variables_b}.
\end{proof}

As there are exactly two options for placing an obstacle in the desired manner within a variable gadget, we exploit these options for distinguishing between true and false.
Combining \cref{thm:drainable-ft,lem:obstacleForVariable} directly implies that an obstacle, placed somewhere outside of a variable gadget, has no influence on whether a particle can leave that variable gadget or not.
Additionally, as all variables are disjoint in our construction, we observe the following.

\begin{corollary}
	\label{lem:variablesDisjoint}
	For every variable gadget, we need a unique obstacle to drain it.
\end{corollary}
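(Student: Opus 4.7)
The plan is to combine \cref{lem:obstacleForVariable} with the geometric disjointness of the variable gadgets in the construction of $P_\varphi$. First I would invoke \cref{lem:obstacleForVariable}: to drain any given variable gadget, at least one obstacle is required, and moreover such an obstacle must be placed at one of the two specific pixels highlighted inside the gadget (since the extreme pixels marked by yellow dots are otherwise trapped and can reach no sink).

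Next I would argue that these two candidate positions both lie strictly inside the variable gadget in question. Combining this with \cref{thm:drainable-ft} and the paragraph preceding the corollary — which notes that an obstacle placed outside a variable gadget cannot affect whether particles originating inside that gadget can escape to a sink — we conclude that the mandatory obstacle serving a given variable gadget must physically lie within that gadget.

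Finally, I would appeal to the construction of $P_\varphi$: the variable gadgets are placed side-by-side in a row at the top of the polyomino and are linked to the clause gadgets only via thin L-shaped corridors, so they occupy pairwise disjoint sets of pixels. Consequently, no single pixel lies inside two distinct variable gadgets, and the obstacles required by any two different variable gadgets must be distinct. This yields the uniqueness claim in the corollary.

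There is essentially no obstacle in this argument; it is a direct bookkeeping step that packages \cref{lem:obstacleForVariable} together with the disjointness of the gadgets. The only point requiring a moment of care is the invocation of \cref{thm:drainable-ft} to rule out the possibility that an obstacle located outside a variable gadget could somehow still contribute to its drainability — but this is immediate from the reachability characterization of drainability and the fact that the candidate obstacle sites are interior to each gadget.
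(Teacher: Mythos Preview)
Your proposal is correct and follows essentially the same approach as the paper: the corollary is stated without a separate proof, and the preceding sentence in the paper already notes that combining \cref{thm:drainable-ft} with \cref{lem:obstacleForVariable} shows an obstacle outside a variable gadget cannot help drain it, while disjointness of the gadgets forces the required obstacles to be distinct. Your write-up simply expands this one-sentence justification into a fuller argument.
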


We proceed with the clause gadget, as depicted in~\cref{fig:hardness-clauses}.

\begin{figure}[htb]
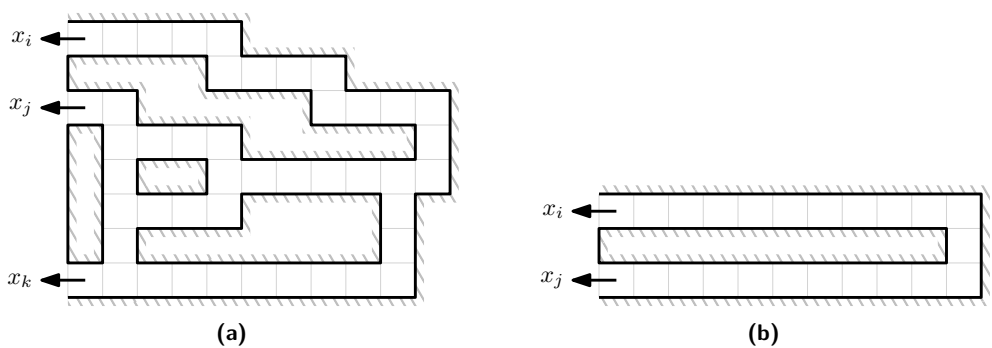

	\captionsetup[subfigure]{justification=centering}
	\begin{subfigure}[b]{.5\columnwidth}%
		\centering
		\includegraphics[page=3,scale=0.9]{./figures/tilt-fill-hardness}
		\caption{}
		\label{fig:hardness-clauses_a}
	\end{subfigure}%
	\begin{subfigure}[b]{.5\columnwidth}
		\centering
		\includegraphics[page=4,scale=0.9]{./figures/tilt-fill-hardness}
		\caption{}
		\label{fig:hardness-clauses_b}
	\end{subfigure}%
	\caption{(a) shows a clause gadget that contains three literals, while (b) contains two literals.}
	\label{fig:hardness-clauses}
\end{figure}

\begin{lemma}
	\label{lem:clauseCorner}
	To reach the respectively contained variable gadgets from every corner pixel of a clause, we do not need to place any obstacles.
\end{lemma}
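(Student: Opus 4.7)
The plan is to verify, one corner pixel at a time, that every corner pixel of a clause gadget in \cref{fig:hardness-clauses} admits a sequence of full tilt moves (using no obstacle inside the clause) whose terminal position lies inside one of the attached variable gadgets. Because the routing never uses an obstacle in the clause, the lemma follows directly.

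I would begin by classifying the corner pixels of a clause. In both the three-literal gadget of \cref{fig:hardness-clauses_a} and the two-literal gadget of \cref{fig:hardness-clauses_b}, the corners fall into three families: (i) the extreme corners at the top and bottom of the vertical spine, (ii) the convex corners at the outer end of each literal's horizontal arm where it meets the attached L-shaped corridor, and (iii) the reflex corners along the spine at each branching point. For each family I would exhibit a short composition of tilts that drives a particle along the spine, then along the arm, and finally into the connecting corridor that leads to the variable gadget. For example, from the top extreme corner the move $d_\infty$ sends the particle to the bottom of the spine and a subsequent horizontal tilt pushes it into the bottommost arm; from an arm-end convex corner a single tilt along the arm's axis already does the job.

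The main difficulty I anticipate is the geometric bookkeeping at reflex corners: after a careless horizontal tilt, a particle may remain pinned at a branching site instead of entering the desired arm. I expect to resolve this by first applying a vertical tilt that carries the particle to an extreme of the spine, where by construction it sits flush with the mouth of an arm, and then applying a horizontal tilt that pushes it through that arm. Since the corridors are themselves thin and unobstructed, one or two further tilts along the two legs of the L suffice to place the particle inside the attached variable gadget, completing the case analysis and hence the lemma.
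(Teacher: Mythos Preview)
Your plan establishes only that every corner pixel of a clause gadget can reach \emph{some} attached variable gadget. That is not what the lemma asserts, and it is not enough for the reduction. In the correctness argument for \cref{thm:hardness}, a satisfying assignment selects, for each clause, one particular literal whose variable gadget is the only one through which the clause can drain. Hence one must show that from every corner pixel of the clause gadget one can reach \emph{each} of the attached variable gadgets, not just the one that happens to sit at the bottommost arm. Your example ``from the top extreme corner the move $d_\infty$ sends the particle to the bottom of the spine and a subsequent horizontal tilt pushes it into the bottommost arm'' illustrates exactly this shortfall: you never explain how that particle would instead enter the middle or top arm if the satisfied literal lives there.

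The paper closes this gap by proving the stronger statement that all exits of the clause gadget are mutually reachable in \FT (illustrated in \cref{fig:allpositionsClause}). Once you know the exits form a single strongly connected set of reachable pixels inside the gadget, reaching any one of them from a given corner pixel automatically gives you access to all of them, and hence to every attached variable gadget via the L-shaped corridors. Your case analysis by ``families'' of corner pixels could still be made to work, but for each family you would need to route to \emph{every} arm, not just one; alternatively, adopt the paper's approach and first argue that the exits are pairwise reachable from one another.
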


\begin{proof}
	Consider a particle on an arbitrary pixel within the clause gadget.
	It is easy to see that the particle can reach every exit of the gadget by a sequence of tilts.
	\Cref{fig:allpositionsClause} illustrates this by differently colored paths.
	
	\begin{figure}[ht]
		\centering
		\includegraphics[page=7]{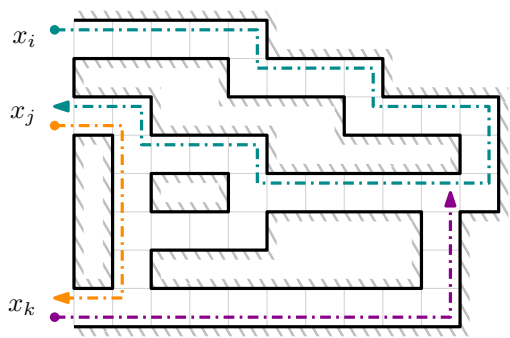}
		\caption{The respective clause gadget that contains three literals. The colored paths illustrate that all exits are path-connected to each other.}
		\label{fig:allpositionsClause}
	\end{figure}
	
	Because exits themselves are directly connected to the variable gadgets by simple L-shaped paths, a particle from every pixel of the clause gadget can reach all relevant variable gadgets, that is, all variable gadgets that contribute to this clause.
\end{proof}

By construction, a particle leaves a clause gadget horizontally, and reaches a convex corner pixel of the connecting path to the respective variable, from which the particle can either move back (and reaches a convex corner pixel of the clause gadget we came from), or it moves vertically into a variable gadget.
From there, we either move back, or horizontally; by the first alternative, the particle cannot reach any new pixels, and by the second it is trapped within the variable gadget, if there is no already placed obstacle.
Thus, we make the following observation.

\begin{observation}
	\label{obs:clauseToClause}
	Without placing any obstacles, a particle on any pixel of a clause gadget cannot move to any pixel of another clause gadget by any tilt sequence.
	The placement of obstacles within a variable also imply that a clause is only path-connected to another clause when they both contain the same literal.
\end{observation}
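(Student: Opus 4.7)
My approach is a short case analysis over the three structural components of $P_{\varphi}$---clause gadgets, L-shaped connecting corridors, and variable gadgets---that establishes a local confinement statement for each, which I~then iterate.

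For the first sentence, the plan is to show that without obstacles the reachable set of any particle starting in a clause gadget~$C$ is contained in~$C$ together with the corridors and variable gadgets incident to~$C$. The paragraph preceding the observation already sketches this informally; I~would make it precise via three local claims. First, by \cref{lem:clauseCorner}, every exit of~$C$ is reachable from every pixel of~$C$. Second, inside a thin L-shaped corridor, any full-tilt move sends the particle either back to the incident clause exit, to the corridor's convex corner, or into the adjacent variable gadget~$V$. Third, and this is where the main work lies, inside any obstacle-free variable gadget~$V$ the particle cannot leave~$V$ except by retracing a corridor it already visited, i.e., it cannot transit~$V$ from one incident corridor to a different one. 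I~would verify this by inspecting each of the two gadget variants in \cref{fig:hardness-variables}, enumerating the effect of the four tilts on every entrance pixel, and observing that the ``extreme'' pixels highlighted in the proof of \cref{lem:obstacleForVariable} act as absorbing dead ends when no obstacle is in place. Iterating the three claims then rules out any tilt sequence from a pixel of~$C$ that reaches a pixel of a different clause.

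For the second sentence, I~would invoke \cref{lem:obstacleForVariable}: the two admissible obstacle placements in a variable gadget~$V$ are precisely the pixels that remove one of the absorbing dead ends identified above, and doing so opens exactly one ``side'' of~$V$'s entrances to mutual traversal. By the construction of~$P_{\varphi}$, the corridors leaving~$V$ are grouped into two sides according to whether the underlying variable appears positively or negatively in the attached clause, so the two admissible placements correspond to the two literals over that variable. The pairs of clauses that become mutually reachable through~$V$ after obstacle placement are therefore exactly those containing the chosen literal. Combining this with the first sentence, which rules out inter-clause paths outside variable gadgets, any path between two different clause gadgets forces its endpoints to share a literal, yielding the claim.

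The main obstacle is the third local claim above: it is a straightforward but essential case check over a small finite collection of entrances and tilt directions, and the correctness of the entire reduction rests on it.
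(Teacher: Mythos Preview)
Your proposal is correct and follows essentially the same approach as the paper. In fact, the paper does not give a formal proof of this observation at all: it is stated as an \emph{Observation} justified solely by the informal paragraph preceding it, which traces a particle from a clause gadget through an L-shaped corridor into a variable gadget and argues it is then trapped---precisely the three local claims you propose to make rigorous. Your plan is a faithful and more detailed formalization of that sketch; the case check you flag as the ``main obstacle'' is exactly what the paper leaves implicit when it says the particle ``is trapped within the variable gadget, if there is no already placed obstacle.''
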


With this, we can prove~\cref{thm:hardness}, which we restate here.

\tiltFillabilityHardness*

\begin{proof}
	For every Boolean formula $\varphi$ that is an instance of \textsc{3Sat-3}, we construct a thin polyomino $P_\varphi$ as an instance of the \drainingProb.
	An~example is illustrated in~\cref{fig:hardness-example}.
	The construction is as follows:
	For every variable $x_i$, we place one variable gadget horizontally in a row at the top of the construction; we distinguish between variables that occur three times, and those who occur only twice.
	For every clause, we place the respective clause gadget vertically in a column at the right side of the construction, and connect every exit of the clause gadget via a thin L-shaped path to the respective input of its variable gadget.
	Note that we allow these paths to cross.
	It remains to connect all exits of the variable gadgets to a horizontal line; the sink is placed at its right end.
	
	We show that $P_\varphi$ is drainable if and only if there is a satisfying assignment for $\varphi$.
	
	\begin{claim}
		If $\varphi$ is satisfiable, then $P_\varphi$ with $\ell$ obstacles is drainable.
	\end{claim}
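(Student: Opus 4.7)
The plan is to exhibit an explicit placement of exactly $\ell$ obstacles that encodes a given satisfying assignment and drains every pixel of $P_\varphi$ to the sink. I~would proceed as follows.

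First, I~fix a satisfying truth assignment $\tau$ for $\varphi$. For each variable $x_i$, \cref{lem:obstacleForVariable} gives two admissible obstacle positions inside its variable gadget; I~would identify these two choices with the two truth values, so that placing the obstacle in the ``$\tau(x_i)=\mathrm{true}$'' position is precisely the placement that leaves the entries corresponding to \emph{positive} occurrences of $x_i$ path-connected to the exit at the bottom of the gadget, while blocking the entries of the \emph{negative} occurrences (and symmetrically if $\tau(x_i)=\mathrm{false}$). I would then place the obstacle corresponding to $\tau(x_i)$ in each variable gadget, for a total of exactly $\ell$ obstacles by \cref{lem:variablesDisjoint}.

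Next, I~would verify drainability pixel by gadget type, invoking \cref{thm:drainable-ft}. Variable gadget pixels drain by construction, directly from \cref{lem:obstacleForVariable}. For a pixel $p$ lying inside a clause gadget $K$, \cref{lem:clauseCorner} says that $p$ can reach any exit of $K$, which via the L-shaped corridor leads into the entry of the corresponding variable gadget for the literal it encodes. Since $\tau$ satisfies $K$, there is some literal in $K$ that is true under~$\tau$; routing the particle to \emph{that} literal's entry, the obstacle inside the target variable gadget is by construction placed in the position that keeps this entry path-connected to the gadget's bottom exit, and hence to the horizontal line and ultimately the sink. The same argument handles pixels on the connecting corridors, as each corridor feeds into a clause gadget on one side and a variable gadget on the other; a particle can always move onto the corresponding clause gadget's corner pixel and reuse the previous argument.

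Finally, the pixels on the horizontal line leading to the sink trivially drain via a single $r_\infty$ move. Combining these cases, every pixel of $P_\varphi$ reaches the sink, so by \cref{thm:drainable-ft} the board with the chosen $\ell$ obstacles is drainable. The main subtlety I~expect is the bookkeeping in the first step: making sure the assignment of truth values to the two admissible obstacle positions is consistent across all occurrences of a variable, so that ``$\tau(x_i)=\mathrm{true}$'' simultaneously enables drainage for every positive occurrence entry and disables it for every negative one. This should follow from the symmetry of the variable gadgets in \cref{fig:hardness-variables} and the fact that the L-shaped corridors preserve the positive/negative labeling of entries, but it is the step that must be set up most carefully.
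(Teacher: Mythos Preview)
Your proposal is correct and follows essentially the same approach as the paper: encode a satisfying assignment by the obstacle choice in each variable gadget, then verify drainability via \cref{thm:drainable-ft} by checking that pixels in variable gadgets, clause gadgets, corridors, and the bottom line all reach the sink. The only point you gloss over is the hypothesis of \cref{lem:obstacleForVariable} that the gadget's bottom entrances be externally connected by a path; the paper explicitly notes this connection is supplied by the construction (via a clause gadget and the horizontal line), so make sure to state it when invoking the lemma.
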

	
	Let $\alpha$ be a satisfying assignment for $\varphi$.
	For every variable $x_i\in \varphi$, we place the respective obstacle within the variable gadget as indicated in~\cref{fig:hardness-variables} according to its value in $\alpha$.
	By~\cref{lem:obstacleForVariable}, after placing these obstacles, we can reach the sink from every pixel within the variable, if their exits are connected by a path.
	This path is guaranteed via any clause gadget.
	Because $\alpha$ is a satisfying assignment, all corners of the clause gadgets in $P_\varphi$ are connected by a path to the sink via a respective variable gadget.
	Thus, by~\cref{thm:drainable-ft}, $P_\varphi$ is drainable.
	
	\begin{claim}
		If $P_\varphi$ with $\ell$ obstacles is drainable, then $\varphi$ is satisfiable.
	\end{claim}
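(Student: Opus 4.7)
The plan is to argue that any drainable configuration with exactly $\ell$ obstacles forces a precise, structured placement inside the variable gadgets, and that this placement directly encodes a satisfying assignment.

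First, I would pin down where the obstacles must be. By \cref{lem:variablesDisjoint}, each of the $\ell$ variable gadgets requires its own obstacle, so all $\ell$ obstacles are consumed by the variable gadgets, exactly one per variable. By \cref{lem:obstacleForVariable}, within each variable gadget the single obstacle must occupy one of the two distinguished positions (the green squares in \cref{fig:hardness-variables}). I would use these two positions to read off an assignment $\alpha$: let $\alpha(x_i) = \text{true}$ if the obstacle in the gadget of $x_i$ sits on the position that keeps the entrances corresponding to \emph{positive} occurrences of $x_i$ connected to the sink-bound exit of the gadget, and $\alpha(x_i) = \text{false}$ otherwise. The two options in \cref{lem:obstacleForVariable} partition the entrances of each variable gadget into a ``sink-connected'' group (the literal made true) and a ``dead'' group (the literal made false), so this definition of $\alpha$ is well defined.

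Next, I would show $\alpha$ satisfies $\varphi$ by contrapositive. Assume some clause $c$ is falsified by $\alpha$, and pick any pixel $p$ in the gadget of $c$. For $P_\varphi$ to be drainable, $p$ must reach the sink via some tilt sequence. By \cref{obs:clauseToClause}, and using that all obstacles lie inside variable gadgets (so no obstacle has been placed in a corridor or another clause), a particle starting in $c$ can leave $c$ only via a corridor into one of the variable gadgets corresponding to a literal of $c$. Because $c$ is unsatisfied, each such literal is false under $\alpha$, meaning the entrance used is in the ``dead'' group of its variable gadget, hence cannot be routed to the sink-bound exit. Applying \cref{obs:clauseToClause} once more rules out escape via a different clause. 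Therefore $p$ is not drainable, contradicting the assumption that $P_\varphi$ is drainable with $\ell$ obstacles.

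Combining the two claims establishes the equivalence between satisfiability of $\varphi$ and drainability of $P_\varphi$ with $\ell$ obstacles, and the thin-polyomino structure of the construction (all corridors L-shaped, all gadgets of bounded width, no $2\times 2$ square) yields the thinness requirement in the theorem statement. The construction is clearly polynomial in the size of $\varphi$, so the reduction from \textsc{3Sat-3} proves \NP-hardness.

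The main obstacle I expect is the case analysis inside the variable gadget needed to justify the ``dead vs.\ sink-connected'' dichotomy: one must verify that the obstacle placement associated with $\alpha(x_i) = \text{true}$ genuinely routes every positive-occurrence entrance through the gadget and out to the sink line, while simultaneously isolating every negative-occurrence entrance, and vice versa. This is essentially a finite check already implicit in the proof of \cref{lem:obstacleForVariable} and \cref{fig:hardness-variables}, but writing it explicitly for both the two- and three-occurrence variants is the delicate part.
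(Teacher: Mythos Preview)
Your proposal is correct and follows essentially the same approach as the paper: count obstacles to force exactly one per variable gadget, read off the assignment from the chosen position, and argue by contrapositive that an unsatisfied clause would leave its gadget undrainable. Your write-up is considerably more explicit than the paper's three-sentence version (in particular your use of \cref{obs:clauseToClause} to rule out escape through another clause, and your discussion of the dead/sink-connected dichotomy), but the underlying logic is identical.
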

	By~\cref{lem:obstacleForVariable}, we need to place at least one obstacle in each variable gadget.
	Because the number of variables is $\ell$, we in fact place exactly one.
	Thus, for every obstacle placed in each variable gadget, we set the respective value to the variables from $\varphi$.
	By the construction~$P_\varphi$ this is a satisfying assignment.
	If any clause was not satisfied, the corresponding clause gadget cannot reach the sink.
\end{proof}

\subsection{An Approximation Algorithm for Scaled Boards}\label{subsec:approx}

Our hardness proof relies on thin segments in the variable gadgets. To achieve
positive results, we restrict the considered boards to ones that do not have
thin segments by introducing scaling. In a \emph{$k$-scaled board} $\bScale{B}$
of a board $B$, every tile of the underlying square tiling gets replaced by a
$k \times k$ grid of tiles. This subdivides every pixel $p$ into $k^2$
\emph{sub-pixels}, for which $p$ is their \emph{super-pixel}. Sinks get scaled
as well, which means that the move associated with a sink may remove up to $k^2$
particles at once. By \cref{cor:drainable-ft-corners}, only the sub-pixels reachable
from corner sub-pixels are relevant in terms of drainability. We call these
\emph{outer} sub-pixels; the remaining sub-pixels are \emph{inner}
sub-pixels. Without any placed obstacles, only sub-pixels at a corner of their
super-pixel can possibly be outer sub-pixels, see \cref{fig:turn_gadget_3}(d).

The basic idea of our approach is to employ $k$-scaling, for $k \geq 3$, and use
inner sub-pixels as positions for obstacles, leading to new outer sub-pixels as intermediate steps on
paths from every pixel to a sink. We assume that a given board contains a
sink in every region---the only way to handle a region without a sink is to fill
it with obstacles. The chief benefit we get from obstacles are
new edges in the full tilt graph. We want as
few new edges as possible while guaranteeing the existence of paths from
every corner pixel to a sink.

For a directed graph $G=(V,E)$ and a set of sinks $S \subseteq V$, we define
the \emph{extended graph} \(\gExt=(V \cup \{r\} ,\gExt[E])\) to be the weighted
super-graph of $G$ that contains all edges $e \in E$ with weight zero and, for
every edge $(p,q) \in E$ with \((q,p) \notin E\), the \emph{inverse edge}
$(q,p)$ with weight one, in addition to edges $(s,r)$ with weight zero from
every sink $s$ to a newly added vertex $r$, called the \emph{root}. Note that
the extended graph contains an arborescence converging to $r$, as long as there
is a sink in every undirected component of the original graph.

Conceptually, we place \emph{turn gadgets} at sub-pixels of the heads of inverse
edges in a minimum-weight arborescence of the full tilt graph. For scaling
factor $3$, a turn gadget consists of two obstacles placed perpendicular to the
direction of the inverse edge in the middle of the pixel,
see Figures~\ref{fig:turn_gadget_3}(a)~to~(c). The central sub-pixel
in \cref{fig:turn_gadget_3}(c) gets disconnected by the obstacles from the rest
of the board and is not considered an outer sub-pixel, even though it is a
corner sub-pixel; it will be handled separately in the proof
of \cref{approx_feasible}.

\begin{figure}[htb]
\centering
\includegraphics{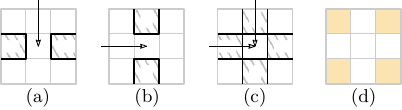}%
\caption{Turn gadgets and outer sub-pixels for scaling factor 3: (a) Horizontal
turn gadget determined by a vertical inverse edge. (b) Vertical turn gadget
determined by a horizontal inverse edge. (c) Horizontal and vertical turn
gadgets at the same pixel determined by two inverse edges. (d) Potential outer
sub-pixels.}\label{fig:turn_gadget_3}
\end{figure}

\paragraph*{The Large Tilt Graph}

Before we go into the details of the algorithm, we have to tackle one issue. As
was the case when deciding drainability in \cref{sec:drain-full-tilt}, computing
the full tilt graph from the boundary of a board is prohibitively expensive.
However, the small tilt graph developed for that purpose is insufficient in this
case, since it may contain undirected components without a sink.
To find a compromise that allows computing a feasible solution using a
reasonable amount of resources, we
introduce the \emph{large tilt graph} $G_L(B,S)=(V_L,E_L)$ of a board $B$ with
set of sinks $S$, which is another subgraph of the full tilt graph and a
super-graph of $G_S$. In~addition to the vertices of $V_S$, $V_L$ contains all
sinks \(s \in S\) and the pixels \(\{s^{\ell}, s^r, s^u, s^d\}\) for every sink
$s$.  Furthermore, for every reflex corner of the boundary between two segments,
the leftmost and rightmost (for row segments) or topmost and bottommost (for
column segments) pixels in those segments are included, as well as all pixels on
intersections of row segments and column segments containing an included
pixel. See \cref{fig:graphs}(b) for an example. $G_L$ is the subgraph of $G_F$
induced by $V_L$.

\begin{restatable}{lemma}{restateBuildGl}\label{lem:build-gl}
$G_L$ can be constructed from the boundary of $B$ and a set of sinks $S$ in
$\BigO(n \log n + K)$ time and $\BigO(n+K)$ space, where $n$ is the sum of the number
of corners on the boundary and the number of sinks, and $K$ is the
number of vertices arising from intersections.
\end{restatable}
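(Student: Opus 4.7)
The plan is to extend the construction of \cref{lem:build-gs} to account for two additional sources of vertices in $V_L$: the $\BigO(n)$ pixels forced by sinks and reflex corners, and the $K$ pixels that arise as intersections of marked row and column segments. The main obstacle is that the intersections must be enumerated in only $\BigO(K)$ time on top of the preprocessing, since a naive scan over marked row--column segment pairs, or per-intersection point-location queries, would cost $\BigO(K\log n)$ or worse.

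As in \cref{lem:build-gs}, I would preprocess the boundary with the Sarnak--Tarjan structure~\cite{st-planar1986} in $\BigO(n \log n)$ time and $\BigO(n)$ space, so that $p^{\ell}, p^r, p^u, p^d$ can be obtained for any pixel in $\BigO(\log n)$ time. Using this, I would run the BFS closure from \cref{lem:build-gs} to obtain $V_S$, and, with $\BigO(1)$ additional queries per sink and per reflex corner, collect all the remaining non-intersection ``seed'' pixels required by the definition of $V_L$. There are $\BigO(n)$ such seed pixels, and for each we read off its row and column segments from the same queries; this marks the $\BigO(n)$ row segments and $\BigO(n)$ column segments that contain at least one seed pixel, all in $\BigO(n\log n)$ time.

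The crucial step is enumerating every intersection of a marked row segment with a marked column segment in $\BigO(K)$ additional time. I would sweep a vertical line left to right, with events at the left and right endpoints of every marked row segment and at the $x$-coordinate of every marked column segment. A balanced BST keyed on the $y$-coordinates of the currently active marked row segments supports insertions and deletions in $\BigO(\log n)$, contributing $\BigO(n\log n)$ in total. At an event corresponding to a marked column segment at $x=x_C$ with $y$-range $[y_1,y_2]$, a range report on the BST returns the $k_C$ active row segments with $y \in [y_1,y_2]$ in time $\BigO(\log n + k_C)$; each such row segment yields exactly one intersection pixel on that column. Summing, this stage costs $\BigO(n\log n + \sum_C k_C) = \BigO(n\log n + K)$ and discovers every intersection pixel of $V_L$.

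Because the sweep processes column segments in $x$-order and, within each query, reports active row segments in $y$-order, the intersections discovered for any fixed row segment arrive in $x$-sorted order, and those for any fixed column segment in $y$-sorted order. Merging them with the $\BigO(n)$ seed pixels---which I would bucket by segment and sort once per segment in $\BigO(n\log n)$ time overall---gives, for every marked segment, a sorted list of its $V_L$ pixels using $\BigO(n\log n + K)$ total work. The edges of $G_L$ are then read off in $\BigO(1)$ per vertex: for any $p \in V_L$, the pixels $p^{\ell}, p^r$ are the first and last entries of its row list and $p^u, p^d$ the first and last of its column list, both of which lie in $V_L$ by definition. Altogether the construction uses $\BigO(n\log n + K)$ time and $\BigO(n+K)$ space, as claimed.
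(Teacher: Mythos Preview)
Your proposal is correct and follows essentially the same approach as the paper: Sarnak--Tarjan preprocessing, a BFS-style enumeration of the $\BigO(n)$ seed pixels from corners, sinks, and reflex corners, and then an output-sensitive segment-intersection pass to recover the $K$ intersection vertices. The only difference is at the implementation level: the paper invokes Balaban's general line-segment intersection algorithm~\cite{balaban-intersections1995} as a black box and reads each intersection vertex's four outgoing edges directly from the endpoints of the two crossing segments, whereas you exploit the axis-aligned structure with an explicit plane sweep and derive edges from sorted per-segment lists; both routes yield the stated bounds.
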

\begin{proof}
    The setup and early phase is analogous to the construction of $G_S$ in
    \cref{lem:build-gs}: Preprocess, maintain queue $Q$ and set $D$, process pixels in
    the queue. However, building $G_L$ also requires handling sinks and reflex
    corners of the boundary, which is why they are also initially added to
    $Q$. Sinks are handled the same way as vertices of $G_S$ in \cref{lem:build-gs}.
    For every reflex corner $C$ encountered during the processing of $Q$, the
    leftmost and rightmost pixels of the row segments neighboring $C$, and the
    topmost and bottommost pixels of the column segments adjacent to $C$, are
    computed using the data structure built during preprocessing and added to $Q$
    and $D$, unless previously discovered. Thus, these steps still take $\BigO(n \log
    n)$ time and $\BigO(n)$ space.

    Additionally, we maintain a set of line segments $L$. Whenever two pixels $p$
    and $q$ on opposing ends of a row or column segment are discovered, either while
    handling a reflex corner or while adding pixels at the end of segments
    containing a sink or convex corner, the segment between them gets added to
    $L$. $\vert L \vert \in \BigO(n)$ because every corner and sink adds a constant number of
    line segments. After processing $Q$, intersections between segments in $L$ can
    be computed in $\BigO(n \log n + K)$ time and $\BigO(n+K)$ space using one of several
    known algorithms for finding line segment intersections, e.g., the one by
    Balaban~\cite{balaban-intersections1995}. Edges incident with intersections are
    computed in constant time per intersection, since they always have the
    intersection as tail and one of the endpoints of the intersecting segments as
    head. Thus, the whole process takes $\BigO(n \log n + K)$ time and $\BigO(n+K)$ space
    in aggregate.
\end{proof}

For every pixel $p$ there is a unique pixel $p^\dagger$, called the
\emph{anchor} of $p$, defined as the pixel from $G_L$ in the unique rectangle
containing $p$ adjacent on the inside to its upper-right corner and a single
pixel from the large tilt graph adjacent on the inside to its lower-left corner.
This is well-defined because two different rectangles with anchors $p_1^\dagger$
left of $p_2^\dagger$ would imply another pixel from $G_L$ in the rectangle
containing $p_1^\dagger$---either $p_2^\dagger$ itself, if $p_2^\dagger$ is not
below $p_1^\dagger$, or else the pixel at the intersection of the row segment of
$p_1^\dagger$ and the column segment of $p_2^\dagger$. Note that if $p$ is
in $G_L$, then $p=p^\dagger$.

\begin{lemma}
    \label{lem:anchor-commutes}
    For every pixel $p$ and every $x \in \{\ell, r, u, d\}$ the equality
    \((p^x)^\dagger = (p^\dagger)^x\) holds.
\end{lemma}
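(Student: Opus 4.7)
The plan is to reduce to the single case $x = r$---the remaining three cases will follow by the four-fold rotational symmetry of the construction of $V_L$---and then argue using the two defining properties of the anchor: the rectangle $R$ of pixel $p$ is the unique rectangle having $p^\dagger$ at its upper-right inside corner and exactly one other $V_L$-pixel inside, at its lower-left inside corner.

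The first observation is that $q^x \in V_L$ for any pixel $q$ and any direction $x \in \{\ell, r, u, d\}$: the pixel $q^r$ sits on the right boundary of the board, and is either a convex corner pixel (hence in $V_S \subseteq V_L$) or is the rightmost pixel of a row segment incident to a reflex corner, in which case the construction of $V_L$ explicitly includes it as the rightmost extreme of a segment adjacent to that reflex corner. In particular, $(p^r)^\dagger = p^r$ and $((p^\dagger)^r)^\dagger = (p^\dagger)^r$, so it suffices to prove the literal equality $p^r = (p^\dagger)^r$, i.e., that the row segments of $p$ and $p^\dagger$ terminate at the same column of the board.

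The main step---and where I expect the essential difficulty to lie---is to show that no two rows spanned by $R$ have different right endpoints on the whole board. Suppose otherwise for contradiction. Then the board boundary must turn between the row of $p$ and the row of $p^\dagger$ somewhere strictly to the right of $R$'s right edge, producing a reflex corner. The construction of $V_L$ places the extremes of the two row segments adjacent to that reflex corner into $V_L$; together with the rule that $V_L$ contains all intersections of row and column segments that themselves contain $V_L$-pixels, this forces a further $V_L$-pixel to appear inside the column range of $R$ at a row strictly between those of $p$ and $p^\dagger$. Such a pixel lies in the interior of $R$, contradicting the uniqueness of $p^\dagger$ and the lower-left anchor of $R$ as the only $V_L$-pixels inside $R$. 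The real bookkeeping challenge here is to handle the degenerate configurations uniformly---multiple stacked reflex corners, turns occurring exactly at the extremal rows of $R$, and the interplay between row-segment extremes, column-segment extremes, and intersection inclusions---and to verify that in each such case a $V_L$-pixel is forced strictly inside $R$.

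Once $p^r = (p^\dagger)^r$ has been established, $(p^r)^\dagger = p^r = (p^\dagger)^r$ follows from the first observation, and the remaining three directions are handled by symmetric arguments, invoking the lower-left $V_L$-pixel of $R$ in place of $p^\dagger$ when treating $x \in \{\ell, d\}$.
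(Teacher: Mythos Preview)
Your proposal has a genuine gap at its very first step. The claim that $q^x \in V_L$ for every pixel $q$ and direction $x$ is false. Take a plain rectangular board with a single interior sink $s$: then $V_L$ consists only of the four corners, $s$, the four pixels $s^\ell, s^r, s^u, s^d$, and nothing else (there are no reflex corners, and the intersection rule adds no new pixels). For any $q$ in a row other than the top, bottom, or the sink's row, the pixel $q^r$ lies on the right edge but is neither a corner pixel nor incident to any reflex corner, so $q^r \notin V_L$. Your dichotomy ``convex corner or rightmost in a row segment incident to a reflex corner'' simply does not exhaust the cases.

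This error feeds directly into the reduction. You aim to prove the \emph{literal} pixel equality $p^r = (p^\dagger)^r$, but $p$ and $p^\dagger$ generally lie in different rows of the rectangle $R$ (the anchor sits at the upper-right corner of $R$, possibly strictly above $p$), so $p^r$ and $(p^\dagger)^r$ lie in different rows too and cannot coincide as pixels---even when, as your main step correctly argues, their row segments terminate in the same column. The object $(p^r)^\dagger$ is genuinely the anchor of $p^r$, typically a pixel distinct from $p^r$, and the lemma asks you to identify \emph{that} anchor with $(p^\dagger)^r$.

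The paper's proof avoids both pitfalls by working with $(p^r)^\dagger$ directly. It first argues that $(p^r)^\dagger$ lies in the column segment of $p^r$ and is itself rightmost in its own row segment (row segments can only widen past a reflex corner, and a reflex corner there would already force a $V_L$-pixel), and then---using essentially your intersection-based contradiction---shows that $(p^r)^\dagger$ and $p^\dagger$ share a row segment. Together these pin down $(p^r)^\dagger$ as the rightmost pixel of $p^\dagger$'s row segment, i.e., $(p^\dagger)^r$. Your contradiction idea is the right ingredient; it needs to be deployed to locate the \emph{row} of $(p^r)^\dagger$ relative to $p^\dagger$, not to force an equality $p^r = (p^\dagger)^r$ that cannot hold.
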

\begin{proof}
    Without loss of generality, assume $x=r$. $p^r$ is the rightmost pixel in its
    row segment, therefore $(p^r)^\dagger$ must be in the same column segment as
    $p^r$ and also a rightmost pixel in its row segment, since neighboring row
    segments only widen after reflex corners of the boundary. Furthermore,
    $(p^r)^\dagger$ must be in the same row segment as $p^\dagger$ (otherwise there
    would be an additional pixel of the large tilt graph in one of the rectangles
    due to an intersection). Thus, $(p^r)^\dagger$ is the rightmost pixel in the row
    segment of $p^\dagger$, i.e., $(p^\dagger)^r$.
\end{proof}

\begin{lemma}
    \label{lem:anchor-edge}
    For every edge $(p,q)$ of weight $w$ in the extended full tilt graph, there is
    an edge $(p^\dagger,q^\dagger)$ of weight $w$ in the extended large tilt graph.
\end{lemma}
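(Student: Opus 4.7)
The plan is to case-split on the weight $w \in \{0,1\}$ of the given edge $(p,q)$ in the extended full tilt graph and, in each case, reduce to the anchor-commutes identity $(p^x)^\dagger = (p^\dagger)^x$ from the preceding lemma.

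If $w = 0$, then $(p,q) \in E_F$, so $q = p^x$ for some direction $x \in \{\ell, r, u, d\}$. Anchor-commutes immediately gives $q^\dagger = (p^\dagger)^x$, which places $(p^\dagger, q^\dagger)$ in $E_L$ and hence in $\gExt[E_L]$ with weight $0$; the only exception is the harmless case $p^\dagger = q^\dagger$, where $p$ and $q$ share an anchor-rectangle and no translated edge is required.

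If $w = 1$, the edge $(p,q)$ is an inverse edge, so $(q,p) \in E_F$ while $(p,q) \notin E_F$, and $p = q^x$ for some direction $x$. Applying anchor-commutes to the pair $(q,p)$ yields $p^\dagger = (q^\dagger)^x$, so $(q^\dagger, p^\dagger) \in E_L$. Consequently, $\gExt[E_L]$ contains the edge $(p^\dagger, q^\dagger)$: either as the weight-$1$ inverse edge if $(p^\dagger, q^\dagger) \notin E_L$, or already as a weight-$0$ forward edge from $E_L$ otherwise.

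The main subtlety I expect to navigate is this second sub-case: anchor-level connectivity in $G_L$ can be strictly richer than pixel-level connectivity in $G_F$, so $(p^\dagger, q^\dagger)$ may already lie in $E_L$ and thus carry weight $0 < w$. The claim is therefore best read as ``weight at most $w$'', which is exactly what is needed later, since it allows the cost of a minimum-weight arborescence of $\gExt[G_F]$ to be transferred to $\gExt[G_L]$ without any increase. Apart from this caveat and the trivial $p^\dagger = q^\dagger$ case, everything reduces to a single application of anchor-commutes per direction, so no further structural properties of $G_L$ are required.
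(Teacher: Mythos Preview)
Your argument is correct and mirrors the paper's: split on $w\in\{0,1\}$ and invoke the anchor-commutes identity in each case. Your two caveats are well-placed---the collapse $p^\dagger=q^\dagger$ really can occur (the paper's statement and proof silently ignore it), while the weight-drop you worry about in fact never happens; but your weaker ``weight at most $w$'' conclusion is precisely what the downstream arborescence-transfer lemma needs, so nothing is lost.
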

\begin{proof}
    Without loss of generality, we consider the case that $p$ and $q$ are in the
    same row segment with $p$ left of $q$. If $(p,q)$ has weight zero, i.e., if it
    is in the original full tilt graph, then $q=p^r$. Thus, by using
    \cref{lem:anchor-commutes}, there is an edge from $p^\dagger$ to \(q^\dagger =
    (p^r)^\dagger = (p^\dagger)^r\) of weight zero. If $(p,q)$ has weight one, i.e.,
    if it is an inverse edge, then $p=q^\ell$. Again, by using
    \cref{lem:anchor-commutes}, there is an inverse edge from \(p^\dagger =
    (q^\ell)^\dagger = (q^\dagger)^\ell\) to $q^\dagger$, which has weight one.
\end{proof}

\begin{restatable}{observation}{restateGlConnected}\label{gl_connected}
The large tilt graph of a region is weakly connected.
\end{restatable}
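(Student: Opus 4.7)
The plan is to lift weak connectivity from the full tilt graph to the large tilt graph through the anchor homomorphism supplied by \cref{lem:anchor-edge}. First, I will establish that $G_F$, restricted to the pixels of the region, is weakly connected. Given two pixels $p, q$ in the region, grid-connectivity yields a walk $p = r_0, r_1, \ldots, r_m = q$ of orthogonally adjacent pixels. For each consecutive pair $r_i, r_{i+1}$, a horizontal adjacency puts both into a common row segment with shared leftmost pixel $q_i := r_i^{\ell} = r_{i+1}^{\ell}$; both have $G_F$-edges to $q_i$ whenever $q_i$ is distinct from them, producing an undirected path of length at most two, and if $q_i$ coincides with $r_i$ or $r_{i+1}$, the connector collapses to a single direct $G_F$-edge. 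The vertical case is symmetric via the common topmost pixel. Chaining the per-step connectors produces an undirected $G_F$-walk from $p$ to $q$.

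Next, I will transport this walk to $G_L$. \Cref{lem:anchor-edge} asserts that every edge $(p, q) \in E_F$ induces an edge $(p^{\dagger}, q^{\dagger}) \in E_L$ of the same weight, so applying $\dagger$ termwise to an undirected $G_F$-walk yields an undirected $G_L$-walk between the anchor endpoints. Finally, I will specialize: given arbitrary $v_1, v_2 \in V_L$, I regard them as pixels of the region, use the first step to connect them by an undirected $G_F$-walk, and use the anchor transport to obtain an undirected $G_L$-walk from $v_1^{\dagger}$ to $v_2^{\dagger}$. Since vertices of $G_L$ are their own anchors (as noted immediately before \cref{lem:anchor-commutes}), the endpoints are exactly $v_1$ and $v_2$, yielding weak connectivity.

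The only delicate point is the degenerate case in the first step where one of $r_i, r_{i+1}$ is itself a segment extreme, but this merely shortens the length-two connector to a single $G_F$-edge and does not disrupt the chain. Everything else amounts to a clean application of \cref{lem:anchor-edge} combined with the observation that anchors fix vertices of $V_L$, so no further structural analysis of sinks, reflex corners, or intersection pixels is required.
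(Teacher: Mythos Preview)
Your proposal is correct and follows essentially the same approach as the paper: establish weak connectivity of $G_F$ by routing adjacent pixels through a common segment endpoint, then push the resulting undirected walk through the anchor map via \cref{lem:anchor-edge}. Your write-up is in fact slightly more careful than the paper's, since you explicitly handle the degenerate case where a connector endpoint coincides with $r_i$ or $r_{i+1}$ and you spell out why the anchor map returns the original $v_1,v_2\in V_L$; the paper leaves both of these implicit.
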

\begin{proof}
    First, observe that $G_F$ of a region is weakly connected: Without loss of
    generality, consider two adjacent pixels $p$ and \(q=p + (1,0)^\Tpose\). Then,
    $p^r=q^r$ and there is an undirected path from $p$ to $q$ in $G_F$ via the edges
    $(p,p^r)$ and $(q,q^r)$. Thus, by induction on the length of a path, for every
    path in the region there is an undirected path in $G_F$. Now,
    by \cref{lem:anchor-edge}, and since for every inverse edge $(q,p)$ in the extended
    graph there is $(p,q)$ in the original, there is an undirected path between two
    pixels in $G_L$ whenever there is one in $G_F$.
\end{proof}

\begin{algorithm}[tb]
\KwIn{The boundary of a board $B=(V,E)$ and a set $S \subseteq V$ such that
  every region of $B$ contains an $s \in S$.}
\KwOut{A set of sub-pixels $O$ such that $\bScale{B}$ without $O$ is drainable
  to $S$.}
Construct $G_L$ using \cref{lem:build-gl}.\;
Give all edges of $E_L$ a weight of zero and add inverse edges $(q,p)$ of weight
one whenever $(p,q) \in E_L$ and \((q,p) \notin E_L\).\;
Add root $r$ and edges $(s,r)$ of weight zero for all $s \in S$.\;
Compute a minimum-weight arborescence $T$ converging to $r$ in the resulting
$\gExt_L$.\;
$O := \varnothing$\;
\ForEach{inverse edge $(q,p)$ in $T$}{
  Add the obstacles of the turn gadget determined by $(q,p)$ to $O$.\;
}
\KwRet{$O$}\;
\caption{Computing a drainable sub-board of $\bScale{B}$, for \(k \ge
  3\).}\label{alg:approx}
\end{algorithm}

\paragraph*{Analysis of the Algorithm}

We provide a listing of the steps in \cref{alg:approx} and proceed to analyze
its properties and performance. \Cref{fig:arborescence} illustrates an
arborescence in the extended large tilt graph of a board~$B$ with a single sink
and the resulting obstacle placement in $\bScale[3]{B}$.

\begin{figure}[htb]
\centering
\includegraphics{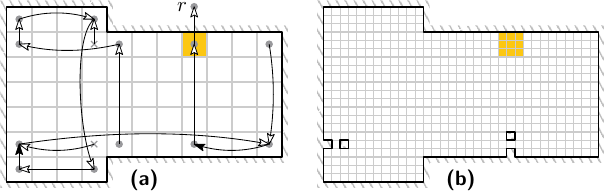}%
\caption{(a) An arborescence in the extended large tilt graph of a board $B$
with a single sink. Inverse edges are drawn with a black arrowhead. (b) The
resulting obstacle placement in $\bScale[3]{B}$.}\label{fig:arborescence}
\end{figure}

Before we prove that \cref{alg:approx} produces a drainable sub-board, it is
important to observe that certain (inverse) edges of the extended large
tilt graph cannot appear together in a minimum-weight arborescence of the large
tilt graph, or only under special circumstances.
\Cref{lem:one-vertical-gadget,common_head_sink} not only ensure the correctness
of a solution but also help bound the number of placed obstacles.

\begin{restatable}{lemma}{restateOneVerticalGadget}\label{lem:one-vertical-gadget}
For two distinct inverse edges $(p_1,q_1)$ and $(p_2,q_2)$ in a minimum-weight
arborescence of the extended large tilt graph with their heads $q_1$ and $q_2$
in the same row segment $R$, at most one of their tails $p_1$ and $p_2$ can be
in $R$.
\end{restatable}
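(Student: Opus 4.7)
The plan is to proceed by contradiction: assuming both tails $p_1$ and $p_2$ lie in $R$, I will exhibit a cheaper arborescence. The starting observation is that whenever the tail $p_i$ of an inverse edge shares a row segment with its head $q_i$, the underlying forward edge $(q_i,p_i)$ of the full tilt graph must come from a horizontal tilt, so $p_i \in \{q_i^\ell, q_i^r\}$. Writing $a$ and $b$ for the leftmost and rightmost pixels of $R$, this forces $\{p_1,p_2\} \subseteq \{a,b\}$.

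The case $p_1 = p_2$ is disposed of quickly: since the inverse edges are distinct, $q_1 \neq q_2$, so $p_1$ would have two outgoing edges in $T$, contradicting the out-degree-one property of an arborescence converging to $r$. Thus I may assume without loss of generality that $p_1 = a$ and $p_2 = b$.

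The heart of the argument is an exchange using the weight-zero forward edges $(a,b)$ and $(b,a)$ of $\gExt[G_L]$, which correspond to the horizontal full tilts $r_\infty$ and $\ell_\infty$ across $R$. These edges are available because $V_L$ is closed under the endpoint maps $p \mapsto p^{\ell,r,u,d}$ by the construction underlying \cref{lem:build-gl}, so both $a = q_1^\ell$ and $b = q_1^r$ lie in $V_L$, and $G_L$ inherits the corresponding forward edges from $G_F$. Depending on whether $b$ is a descendant of $a$ in $T$, I either replace the inverse edge $(a,q_1)$ by $(a,b)$ or replace $(b,q_2)$ by $(b,a)$. In each subcase the orphaned subtree is reconnected to the remaining arborescence without creating a cycle, yielding a valid arborescence of strictly smaller weight, which contradicts the minimality of $T$.

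The main obstacle is nailing down cycle-freeness and connectivity after the swap: in both subcases one has to verify that the new target endpoint still lies in the component containing $r$ after deletion of the replaced edge, and that no back-edge can close a cycle with the newly inserted forward edge. Both checks reduce to the standard property of an arborescence converging to $r$, namely that the unique out-path from any vertex leads monotonically toward $r$ and never re-enters a subtree it has left; the descendant/non-descendant split for $b$ relative to $a$ is exactly what is needed to apply this property cleanly in each case.
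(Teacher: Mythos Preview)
Your proof is correct and follows essentially the same exchange argument as the paper: both identify $p_1,p_2$ as the two boundary pixels of $R$ and swap one of the weight-one inverse edges for the weight-zero forward edge between them. The paper's version is slightly slicker in that it avoids your descendant/non-descendant case split by simply assuming without loss of generality that $q_1$ is at least as far from the root as $q_2$ and then replacing $(p_1,q_1)$ with $(p_1,p_2)$; you may want to double-check that your ``depending on whether $b$ is a descendant of $a$'' sentence has the two replacements paired with the correct cases, since replacing $(a,q_1)$ by $(a,b)$ only avoids a cycle when $b$ is \emph{not} a descendant of $a$.
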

\begin{proof}
    Assume, for sake of contradiction, that $p_1$ and $p_2$ are both in $R$. Note
    that $p_1$ and $p_2$ are the boundary pixels in $R$, i.e., edges $(p_1,p_2)$ and
    $(p_2,p_1)$ of weight zero exist in the extended large tilt graph. Without loss
    of generality, assume $q_1$ is at least as far away from the root in the
    arborescence as $q_2$. Then replacing $(p_1,q_1)$ with $(p_1,p_2)$ yields an
    arborescence of lower weight, contradicting the assumption that the arborescence
    has minimum weight.
\end{proof}

\begin{restatable}{lemma}{restateCommonHeadSink}\label{common_head_sink}
If there are two distinct inverse edges $(p_1,q)$ and $(p_2,q)$ in a
minimum-weight arborescence of the extended large tilt graph, then $q$ is a
sink.
\end{restatable}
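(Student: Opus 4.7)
The plan is to argue by contradiction: suppose the minimum-weight arborescence $T$ converging to the root $r$ contains two inverse edges $(p_1, q)$ and $(p_2, q)$ sharing head $q$, but $q$ is not a sink. Then $q$ has a unique out-edge $(q, q^\uparrow)$ in $T$ of weight either $0$ or $1$, and the goal is to produce a strictly cheaper arborescence in each case to contradict minimality.

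First, I would pin down the geometry of $p_1$ and $p_2$ using Lemma~\ref{lem:one-vertical-gadget} with $q_1 = q_2 = q$ (the edges remain distinct because $p_1 \neq p_2$), along with its symmetric column-segment version. These yield that at most one of $p_1, p_2$ lies in $q$'s row segment and at most one in $q$'s column segment. Since inverse-edge tails at $q$ must be among $\{q^{\ell}, q^r, q^u, q^d\}$, it follows that exactly one of $p_1, p_2$ is a row boundary of $q$ and the other a column boundary; without loss of generality, $p_1 \in \{q^{\ell}, q^r\}$ and $p_2 \in \{q^u, q^d\}$.

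Next, I would split on the weight of $(q, q^\uparrow)$. If the weight is $0$, then $q^\uparrow \in \{q^{\ell}, q^r, q^u, q^d\}$, and $q^\uparrow \in \{p_1, p_2\}$ is excluded since it would form a $2$-cycle with the inverse edge $(p_i, q)$ already in $T$. Thus $q^\uparrow$ is the remaining row or column boundary of $q$; assuming $p_1 = q^{\ell}$, $p_2 = q^u$, and $q^\uparrow = q^r$ (the other configurations are symmetric), I would replace $p_1$'s out-edge $(p_1, q)$ of weight $1$ by the weight-$0$ forward edge $(p_1, q^r) = (q^{\ell}, q^r)$. This stays acyclic since $p_1$ is a descendant of $q^r$ in $T$ via the path $p_1 \to q \to q^r$, so the old route from $q^r$ to $r$ does not pass through $p_1$; the resulting arborescence has strictly smaller weight, contradicting minimality.

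If instead $(q, q^\uparrow)$ has weight $1$, then $(q^\uparrow, q)$ is a forward edge in $G_L$, which forces $q$ to be the extreme pixel of $q^\uparrow$'s row or column segment in some direction. Because $q^\uparrow$ lives in $q$'s own row or column, $q$ is itself a boundary pixel of that segment, say $q = q^r$. But then $p_1 = q^{\ell}$ (since $p_1 = q^r = q$ is impossible), and the would-be inverse edge $(p_1, q) = (q^{\ell}, q^r)$ is in fact the weight-$0$ forward edge given by $p_1$'s rightward tilt to the boundary of its row, so $(p_1, q)$ cannot be inverse, a contradiction. The remaining boundary identifications $q \in \{q^{\ell}, q^u, q^d\}$ are ruled out identically by swapping the roles of $p_1, p_2$ and of rows and columns. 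I expect this weight-$1$ case to be the main delicate point: multi-edge swaps (such as removing $(p_1, q)$ and $(q, q^\uparrow)$ and inserting $(q, p_1)$ and $(p_1, q^\uparrow)$) also reduce weight by one, but the cleaner observation that an inverse out-edge at $q$ forces $q$ onto a segment boundary and thereby collapses one incoming inverse edge into a forward edge avoids any swapping.
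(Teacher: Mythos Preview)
Your proof is correct and rests on the same ideas as the paper's: Lemma~\ref{lem:one-vertical-gadget} (and its column analogue) forces the two inverse in-edges at $q$ to be perpendicular, and then the out-edge of $q$ in the arborescence lets you shortcut one of them to a weight-$0$ edge. The paper streamlines your case split by first noting directly that $q$ cannot be a boundary pixel---a boundary pixel admits no incoming inverse edge perpendicular to that boundary---which immediately rules out a weight-$1$ out-edge at $q$ and leaves only your weight-$0$ replacement argument; your Case~2 reaches the same contradiction via the contrapositive.
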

\begin{proof}
    First, observe that exactly one of the edges $(p_1,q)$ and $(p_2,q)$ must be
    horizontal and one vertical, due to \cref{lem:one-vertical-gadget}. Further note
    that $q$ cannot be a boundary pixel because there are no inverse edges
    perpendicular to a boundary with a head at that boundary. Assume, for sake of
    contradiction, that $q$ is not a sink. Then there is a unique boundary pixel
    $q'$ such that $(q,q')$ is an edge in the arborescence. Neither $p_1$ nor $p_2$
    can be equal to $q'$, since an arborescence does not contain cycles. Therefore,
    $(q,q')$ must have the same orientation as either $(p_1,q)$ or
    $(p_2,q)$. Without loss of generality, assume it is $(p_1,q)$. Then replacing
    $(p_1,q)$ with $(p_1,q')$ yields a lower weight arborescence---a contradiction to
    the assumption that the arborescence has minimum weight.
\end{proof}

\begin{theorem}\label{approx_feasible}
\Cref{alg:approx} produces a drainable sub-board of
$\bScale{B}$, using $\BigO((n + K) \log n)$ time and $\BigO(n+K)$ space, where $n$ is
the sum of the number of corners of the boundary and the number of sinks, and
$K$ is the number of vertices of $G_L$ arising from intersections.
\end{theorem}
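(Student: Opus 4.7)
The plan is to establish correctness via \cref{cor:drainable-ft-corners} applied to the scaled board, and then separately bound the time and space complexity. For correctness, I would first note that the input assumption (every region of $B$ contains a sink) together with \cref{gl_connected} guarantees that the extended large tilt graph admits a spanning arborescence converging to $r$, so Step~4 of \cref{alg:approx} is well-defined. I would then show that every corner sub-pixel of $\bScale{B} \setminus O$ has a full-tilt path to a sink in $S$. The anchor machinery of \cref{lem:anchor-commutes,lem:anchor-edge} reduces this to showing that every vertex of $V_L$, identified with a naturally corresponding outer sub-pixel of $\bScale{B}$, can reach some sink. Following the unique path from any vertex to $r$ in $T$ supplies such a route: each weight-zero edge on the path is an original edge of $G_L$ and corresponds directly to a full-tilt move in $\bScale{B}$, while each weight-one inverse edge $(q,p)$ is realized by the turn gadget the algorithm places at sub-pixels of $q$'s super-pixel.

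The key geometric claim is that, for $k \ge 3$, the turn gadget oriented perpendicular to $(q,p)$ creates new outer sub-pixels in $q$'s super-pixel from which a subsequent tilt carries the particle into $p$'s super-pixel. This is the step I expect to be the main obstacle: it requires a case analysis over the four orientations of $(q,p)$, over whether $q$ is a boundary, reflex, or interior pixel, and over whether two inverse edges meet at $q$ --- exactly the configuration controlled by \cref{lem:one-vertical-gadget,common_head_sink}, which together guarantee that at any non-sink pixel at most one horizontal and one vertical gadget can coincide, so the two obstacle placements never conflict. Two residual cases are mild: an inner sub-pixel of a super-pixel $p$ tilts immediately onto an outer sub-pixel of $p$ (using $k \ge 3$), after which the preceding argument applies; and the central sub-pixel that becomes cut off when two perpendicular turn gadgets meet (\cref{fig:turn_gadget_3}(c)) can simply be added to $O$, at only a constant blow-up in the obstacle count per affected super-pixel.

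For the complexity bound I chain the costs. By \cref{lem:build-gl}, building $G_L$ takes $\BigO(n \log n + K)$ time and $\BigO(n+K)$ space. Adding the inverse edges and the root-edges from $S$ is linear in the graph size, i.e., $\BigO(n+K)$. A minimum-weight arborescence on the resulting digraph of size $\BigO(n+K)$ can be computed in $\BigO((n+K)\log(n+K)) = \BigO((n+K)\log n)$ time and $\BigO(n+K)$ space using a Gabow--Galil--Spencer--Tarjan implementation of Edmonds' algorithm. Emitting the $\BigO(1)$ obstacles per inverse edge of $T$ costs another $\BigO(n+K)$, and summing yields the claimed $\BigO((n+K)\log n)$ time and $\BigO(n+K)$ space bounds.
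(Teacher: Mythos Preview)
Your high-level structure matches the paper's, but there are two genuine gaps in the correctness argument.

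First, you have the turn-gadget placement backwards. For an inverse edge $(q,p)$ in the arborescence, the paper places the gadget at the \emph{head} $p$ (see the text preceding \cref{fig:turn_gadget_3} and the loop in \cref{alg:approx}), not at the tail $q$. The point is that $q = p^x$ is a boundary pixel and you want a particle sitting at a sub-pixel of $q$ to tilt \emph{toward} $p$ and stop there; that requires the obstacle to sit inside $p$'s super-pixel. Your description (``creates new outer sub-pixels in $q$'s super-pixel from which a subsequent tilt carries the particle into $p$'s super-pixel'') and your subsequent case analysis (``whether $q$ is a boundary, reflex, or interior pixel'', ``whether two inverse edges meet at $q$'') are all anchored at the wrong endpoint; \cref{common_head_sink} is about shared \emph{heads}, not tails.

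Second, and more fundamentally, the claim that ``each weight-zero edge on the path \dots\ corresponds directly to a full-tilt move in $\bScale{B}$'' is exactly where the argument can fail, and the paper's proof spends most of its effort here. If $(p',q')$ is a weight-zero edge with $q' = (p')^r$, a \emph{different} inverse edge may have placed a vertical turn gadget at some $v$ strictly between $p'$ and $q'$, so the tilt $r_\infty$ stops at $v$ rather than at $q'$. The paper handles this (its Case~3) by using minimality of the arborescence to argue that $v$ must itself lie on the $p'$-to-sink path, so the induction on arborescence distance still makes progress; \cref{lem:one-vertical-gadget} is what limits this interference to at most one gadget per row segment. Your plan treats the weight-zero edges as the easy part and the inverse edges as the hard part, which inverts the actual difficulty.

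Two smaller points. The anchor lemmas (\cref{lem:anchor-commutes,lem:anchor-edge}) relate $G_F(B)$ to $G_L(B)$ for the \emph{unscaled} board; they do not give the reduction to $V_L$ you need for corner sub-pixels of $\bScale{B}\setminus O$. The paper instead argues directly by induction over arborescence distance of the super-pixel. And for the isolated central sub-pixel of \cref{fig:turn_gadget_3}(c), you cannot ``simply add it to $O$'': \cref{alg:approx} as stated does not do so, and you must prove that algorithm correct. The paper's fix is that \cref{common_head_sink} forces any pixel carrying two perpendicular gadgets to be a sink, whose sub-pixels (including the isolated one) are drained directly. Your complexity analysis is fine; the paper cites Tarjan's branching algorithm for the same bound.
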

\begin{proof}
We first show that a sub-board produced by \cref{alg:approx} is drainable. Note
that \cref{alg:approx} assumes that every region of the board contains a sink,
which ensures that the extended large tilt graph of $B$ contains a path from
every pixel to a sink, by \cref{gl_connected}. By \cref{common_head_sink}, the
only pixels with two perpendicular turn gadgets are sinks. Therefore, particles
at all of their sub-pixels can be removed, including the one in the middle that
gets disconnected from the rest of the board by the turn gadgets,
see \cref{fig:turn_gadget_3}(c). The way the turn gadgets are constructed
guarantees that this is the only possible instance of a corner sub-pixel that is
not also an outer sub-pixel. By \cref{cor:drainable-ft-corners}, it now suffices to
show that for every outer sub-pixel $p$ there is a sub-pixel $s$ of a sink such
that \(p \rStar[\FT] s\), which we prove by strong induction on the distance $d$
from the super-pixel $p'$ of $p$ to a sink in the arborescence of the extended
large tilt graph.

For $d=0$, we can choose $s=p$, since $p$ is already a sub-pixel of a sink.

Now, assume $d>0$ and that every outer sub-pixel of a pixel closer than $d$ to a
sink in the arborescence has a sub-pixel of a sink reachable from it. Let $q'$
be the unique pixel such that $(p',q')$ is an edge in the arborescence. Without
loss of generality, assume $p'$ and $q'$ are in the same row segment $P$ with
$p'$ left of $q'$. By \cref{lem:one-vertical-gadget}, $P$ contains at most one
vertical turn gadget. We distinguish three cases.

\begin{enumerate}
\item There is no vertical turn gadget in $P$ or the only one is to the left of
$p$. Then, $q'=(p')^r$ and the move $r_\infty$ moves a particle at $p$ to a
rightmost outer sub-pixel of $q'$, see \cref{fig:approx_cases}(a).
\item A vertical turn gadget is positioned at $q'$. Then, the move $r_\infty$
moves a particle at $p$ to a leftmost outer sub-pixel of $q'$,
see \cref{fig:approx_cases}(b).
\item A vertical turn gadget is positioned at a pixel $v$ strictly between $p'$
and $q'$. Then, the move $r_\infty$ moves a particle at $p$ to a leftmost outer
sub-pixel of $v$, see \cref{fig:approx_cases}(c). The inverse edge $e$
determining the turn gadget at $v$ is either $((p')^\ell,v)$ or
$((p')^r,v)$. Either way, the path from the other boundary pixel to a sink in
the arborescence must include~$e$. Otherwise, $e$ could be replaced with an edge
between the boundary pixels for a lower weight arborescence. Because the edge from
$p'$ is to $q' = (p')^r$, $v$ lies on the path from $p'$ to a~sink.
\end{enumerate}
\begin{figure}[htb]
	\centering
	\includegraphics{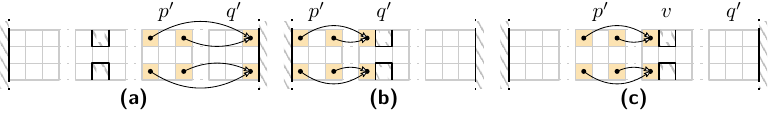}%
	\caption{The three cases in the proof of \cref{approx_feasible} for scaling
		factor $3$.}\label{fig:approx_cases}
\end{figure}

In all three cases, we reach an outer sub-pixel with a super-pixel whose
distance to a sink is strictly less than $d$. Thus, we can apply the induction
hypothesis to reach a sink sub-pixel.

Both remaining options, namely a vertical turn gadget at $p'$ or one to the
right of $q'$, lead to contradictions. The first would imply an
inverse edge $((p')^\ell,p')$, which could be replaced with \(((p')^\ell,
(p')^r)\) for a lower weight arborescence. In the second, $q'$ would not be
a boundary pixel and there would need to be another vertical turn gadget
at~$q'$, contradicting \cref{lem:one-vertical-gadget}.

As to the complexity, the initial steps can be completed in $\BigO(n \log n + K)$
time and $\BigO(n+K)$ space, by \cref{lem:build-gl}. $G_L$ has $\BigO(n+K)$ vertices and
edges, with $K \in \BigO(n^2)$, so the minimum-weight arborescence can be computed
in $\BigO((n+K) \log n)$ time and $\BigO(n+K)$ space using Tarjan's algorithm~\cite{tarjan-branchings1977}, which dominates the requirements of the remaining~steps.
\end{proof}

\begin{lemma}\label{lem:obstacle-weight-bound}
For a board $B=(V,E)$, a set of sinks $S \subseteq V$, and a set of obstacles
$O \subseteq V$ such that $B$ without $O$ is drainable to $S$, there is an
arborescence in the extended full tilt graph of $B$ of total weight at most
$2|O|$.
\end{lemma}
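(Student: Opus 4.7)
The plan is to construct an explicit arborescence in the extended full tilt graph of $B$ whose weight is at most $2|O|$, by translating the drainage structure of $B \setminus O$ into forward-and-inverse paths of the unobstructed graph and amortizing inverse-edge usage across obstacles.

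First, I would apply \cref{thm:drainable-ft} to $B \setminus O$ to obtain, for each non-sink pixel in $V \setminus O$, a $\rStar[\FT]$-path in $G_F(B \setminus O)$ to some sink. Fixing one such out-edge per pixel yields an arborescence $T'$ in $G_F(B \setminus O)$ converging to the sinks. Obstacle pixels, which are vertices of $\gExt[G_F]$ but not of $G_F(B \setminus O)$, can be hooked into the arborescence via arbitrary forward out-edges of $G_F(B)$ and contribute no inverse-edge cost.

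Next, I would translate $T'$ edge by edge into an arborescence $T$ in $\gExt[G_F]$. For each edge $(p,q)\in T'$ that represents a tilt move in direction $x$: if the pair $(p,q)$ is also an edge of the unobstructed $G_F(B)$ (i.e., $q$ is the $x$-boundary of $p$'s row or column segment in~$B$), keep $(p,q)$ as a weight-$0$ forward edge. Otherwise, $q$ is a \emph{new stopping position}, namely the pixel of $V \setminus O$ adjacent to some obstacle $o\in O$ on the $x$-side of $o$; such an edge is absent from $G_F(B)$, and I would replace it by a two-edge path $p\to p^x\to q$ in $\gExt[G_F]$, where $(p,p^x)$ is forward in $G_F(B)$ (weight $0$) and $(p^x,q)$ is an inverse edge of weight~$1$, valid because $q$ lies strictly in the interior of $p^x$'s segment in~$B$.

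To obtain the $2|O|$ bound, I would charge at most two inverse edges to each obstacle---one per dimension. For an obstacle $o$ and its row segment $R$ in $B$, all flow that uses new stops adjacent to $o$ along $R$ is consolidated in $T$ by routing forward to a common row endpoint of $R$ and then using a single inverse edge from that endpoint to a chosen new stop, which in turn drains via forward column moves in $G_F(B)$ toward a sink. The drainability of $B \setminus O$ ensures that such a forward continuation exists for at least one of the two row-side new stops, possibly composed with inverse edges that are themselves charged to other obstacles. An analogous consolidation for $o$'s column direction contributes the second inverse edge, totaling at most $2|O|$.

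The main obstacle I expect is the global consistency of the consolidation. When several obstacles occupy the same row or column segment, the individual per-obstacle consolidations have to agree on how to assign out-edges to the (few) shared extremal vertices; I would need to argue inductively, or via a global exchange/flow argument, that the accumulated $2|O|$ inverse-edge budget suffices to route every pixel to a sink while keeping $T$ acyclic. Edge cases include obstacles adjacent to the boundary (where a nominal new stop collapses into a segment endpoint and requires no inverse edge) and situations where the row and column consolidations at a single obstacle interact on a shared extremal vertex, forcing a careful choice of which inverse edge that vertex contributes.
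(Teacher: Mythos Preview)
Your overall strategy---simulate each edge of a drainage arborescence $T'$ on $B\setminus O$ by a forward-plus-inverse path in $\gExt_F$ and charge the inverse edges to obstacles---is the paper's approach, but your choice of intermediate vertex makes the charging harder than necessary. By routing via $p^x$ (the segment endpoint in the \emph{same} direction as the original tilt), a single obstacle $o$ lying in a row segment $R$ can spawn inverse edges from \emph{both} endpoints of $R$: flow that tilted right and was stopped by $o$ detours via the right endpoint, while flow that tilted left and was stopped by $o$ detours via the left endpoint. That is precisely why you end up needing a global consolidation argument, which you correctly flag as the main obstacle and leave unresolved.

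The paper sidesteps this entirely with an asymmetric choice: for every obstacle $o$ it adds only the four inverse edges $(o^\ell, o\pm(1,0)^\Tpose)$ and $(o^d, o\pm(0,1)^\Tpose)$---that is, it \emph{always} detours via the left row-endpoint and the bottom column-endpoint, regardless of the direction of the original tilt. The resulting supergraph $H\supseteq G_F(B)$ still simulates every edge of $G_F(B\setminus O)$ (first tilt left or down in $B$, then take the appropriate added inverse edge), so $H$ contains a spanning arborescence converging to the root. The bound is then a one-liner: the added inverse edges have at most $2|O|$ distinct tails, namely $\{o^\ell,o^d : o\in O\}$, and any arborescence uses at most one outgoing edge per vertex. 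No consolidation, no induction, no global-consistency worries.
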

\begin{proof}
Consider the full tilt graph of $B$ and add, for every obstacle at a pixel
$p \in O$, the inverse edges \((p^\ell, p-(1,0)^\Tpose)\), \((p^\ell,
p+(1,0)^\Tpose)\), \((p^d, p-(0,1)^\Tpose)\) and \((p^d, p+(0,1)^\Tpose)\).
Call the resulting graph $H$. Now, for every two pixels $p$, $q$, with $(p,q)$
in the full tilt graph of $B$ without $O$, there is a path from $p$ to $q$ in
$H$: If $q$ is not next to an obstacle in $O$, then $(p,q)$ was in $G_F$ to
begin with. Otherwise, there is a path that first uses an edge in the initial
full tilt graph from $p$ to a boundary pixel $v$, followed by one of the added
inverse edges from $v$ to $q$.  Thus, $H$ is a subgraph of $\gExt_F$ that
contains a path from every pixel to a sink, i.e., a minimum-weight arborescence
in $\gExt_F$ has no larger weight than an arborescence in $H$. At~most one
outgoing inverse edge can be included per vertex, for a total weight of at most
$2|O|$ for any arborescence in $H$.
\end{proof}

Although we do all our calculations on the large tilt graph of the given board,
bounding the approximation ratio requires the weight of a minimum arborescence
of the extended full tilt graph of the scaled board.
\Cref{lem:arb-scaling,arb_full_large} together show that such an arborescence
cannot have smaller weight than the one we compute.

\begin{restatable}{lemma}{restateArbScaling}\label{lem:arb-scaling}
If there is an arborescence of weight $w$ in $\gExt_F(\bScale{B})$, for any
$k \geq 1$, then there is an arborescence of weight at most $w$ in $\gExt_F(B)$.
\end{restatable}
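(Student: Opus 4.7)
The plan is to build an arborescence in $\gExt_F(B)$ of weight at most $w$ by projecting the given arborescence from $\bScale{B}$ down to $B$. Let $\pi$ denote the map that sends each sub-pixel of $\bScale{B}$ to its super-pixel in $B$, and that sends the root of $\gExt_F(\bScale{B})$ to the root $r$ of $\gExt_F(B)$. The key structural fact to establish first is a scaled analogue of \cref{lem:anchor-commutes}: for every sub-pixel $q$ and every direction $x \in \{\ell, r, u, d\}$, one has $\pi(q^x) = \pi(q)^x$. This holds because a row (resp. column) of sub-pixels in $\bScale{B}$ is exactly the $k$-fold refinement of a row (resp. column) segment of super-pixels in $B$, so the rightmost sub-pixel of a row lies in the rightmost super-pixel of its row segment, and analogously for the other three directions.

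From this commutation, every edge $(q_1,q_2)$ of $G_F(\bScale{B})$ either collapses (when $\pi(q_1)=\pi(q_2)$) or projects to an edge $(\pi(q_1),\pi(q_2))$ of $G_F(B)$. Consequently, each edge of $\gExt_F(\bScale{B})$ either collapses or projects to an edge of $\gExt_F(B)$ of no larger weight: a weight-$0$ edge projects to a weight-$0$ edge; a weight-$1$ inverse edge projects to an edge that is either original (weight~$0$) or inverse (weight~$1$); and sink-to-root edges $(s,r)$ project to the weight-$0$ edges $(\pi(s),r)$. Given the arborescence $T$ of weight $w$ in $\gExt_F(\bScale{B})$, I would form the multigraph $H$ on vertex set $V(B) \cup \{r\}$ by projecting every edge of $T$ into $\gExt_F(B)$ and discarding self-loops; by the observation just made, the total weight of $H$ is at most $w$.

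To conclude, I would show that every vertex of $H$ has a directed path to $r$: for each pixel $p \in V(B)$, pick any sub-pixel $q$ of $p$ and project the unique $q$-to-root path of $T$; deleting self-loops yields a walk, and hence a simple path, from $p$ to $r$ in $H$. Therefore $H$ contains an arborescence $T'$ converging to $r$, and since $T'$ is a subset of the edges of $H$ and all weights are nonnegative, $T'$ has weight at most the weight of $H$, which is at most $w$. Identifying $T'$ with its image in $\gExt_F(B)$ completes the construction. The main obstacle is really the first step: cleanly verifying how the tilt structure interacts with scaling, i.e., the identity $\pi(q^x)=\pi(q)^x$ and its consequence that inverse edges also project with nonincreasing weight. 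Once that commutation is in hand, the remainder is routine: project, collapse self-loops, and extract an arborescence from the resulting multigraph.
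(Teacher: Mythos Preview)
Your proposal is correct and follows essentially the same approach as the paper's proof: project each edge of the arborescence to the corresponding edge between super-pixels, observe that weights do not increase, and extract an arborescence from the resulting subgraph. Your version is simply more explicit---you spell out the commutation $\pi(q^x)=\pi(q)^x$ (the scaled analogue of \cref{lem:anchor-commutes}) and separately treat original, inverse, and sink-to-root edges, whereas the paper compresses this into a one-line reachability observation.
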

\begin{proof}
    Observe that whenever a sub-pixel $q$ is reachable from a sub-pixel $p$ in
    $\bScale{B}$, then the super-pixel $q'$ of $q$ is reachable from the super-pixel
    $p'$ of $p$ in $B$. Thus, choosing for each edge in an arborescence of
    $\gExt_F(\bScale{B})$ the edge between the corresponding super-pixels in
    $\gExt_F(B)$ yields a subgraph of $\gExt_F(B)$ of total weight at most $w$
    containing an arborescence.
\end{proof}

\begin{restatable}{lemma}{restateArbFullLarge}\label{arb_full_large}
If there is an arborescence of weight $w$ in $\gExt_F(B)$, then there is an
arborescence of weight at most $w$ in $\gExt_L(B,S)$.
\end{restatable}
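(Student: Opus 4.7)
The plan is to apply the anchor map $p \mapsto p^\dagger$ edgewise to the given arborescence, use \cref{lem:anchor-edge} to push each edge into $\gExt_L(B,S)$ while preserving its weight, and then extract a converging arborescence from the resulting subgraph.

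Suppose $T$ is an arborescence in $\gExt_F(B)$ converging to $r$ of total weight $w$. For each vertex $v \in V_L$, I would follow the unique path $v = p_0, p_1, \ldots, p_k = r$ from $v$ to the root in $T$. By \cref{lem:anchor-edge}, each edge $(p_i, p_{i+1})$ lifts to an edge $(p_i^\dagger, p_{i+1}^\dagger)$ of equal weight in $\gExt_L(B,S)$ whenever its endpoints differ; otherwise the image is a self-loop and I discard it. Since $v \in V_L$ implies $v^\dagger = v$, the surviving edges along the path trace a walk from $v$ to $r$ in $\gExt_L(B,S)$, and the same holds for every other vertex of $V_L$.

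I would then let $H$ be the subgraph of $\gExt_L(B,S)$ formed by the union of all these non-self-loop images, treated as a simple subgraph. By the previous step, every $v \in V_L$ reaches $r$ via a directed walk---hence a directed path---in $H$, so $H$ contains an arborescence $T'$ converging to $r$. To bound weights, consider the multiset $M$ of image edges: its total weight equals the sum of the weights of those edges of $T$ whose images are not self-loops, which is at most $w$ because edge weights are non-negative. The total weight of $H$ (each edge counted once) is at most that of $M$, and the weight of $T' \subseteq E(H)$ is in turn at most the total weight of $H$. Chaining these three inequalities yields the desired bound.

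The main delicate point is the duplication issue: several edges of $T$ can share an anchor image, and some can collapse into self-loops, so the naive multiset of images is not itself an arborescence. Switching between the multiset view (used to transfer the bound $\leq w$ from $T$) and the simple-subgraph view (used to extract $T'$) cleanly resolves both concerns. I do not expect any subtlety at the root itself, since $r$ has no incoming edges in $\gExt_F$ and the anchor map fixes it.
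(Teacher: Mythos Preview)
Your proposal is correct and follows essentially the same approach as the paper: apply the anchor map edgewise to the arborescence, invoke \cref{lem:anchor-edge} to land in $\gExt_L(B,S)$ with no increase in weight, use that every $v\in V_L$ satisfies $v^\dagger=v$ to guarantee reachability of the root, and then extract an arborescence from the resulting spanning subgraph. The paper's version is simply terser and does not spell out the self-loop/multiset bookkeeping that you make explicit.
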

\begin{proof}
    Take an arborescence of $\gExt_F$ of weight $w$ and replace every edge $(p,q)$
    with $(p^\dagger,q^\dagger)$. By \cref{lem:anchor-edge}, this leads to a valid
    spanning subgraph of $\gExt_L$ of total weight at most $w$. Since every vertex
    of $G_L$ is its own anchor, this graph retains an arborescence of $\gExt_L$ of
    weight at most $w$.
\end{proof}

\begin{theorem}\label{approx_ratio}
When applied with scaling factor $3$, \cref{alg:approx} places at most $4$
times as many obstacles as used in an optimum solution for $\bScale[3]{B}$.
\end{theorem}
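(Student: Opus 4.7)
The plan is to chain together the weight bounds established in
\cref{lem:obstacle-weight-bound,lem:arb-scaling,arb_full_large} with a simple
accounting argument for the turn gadgets. Let $O^\ast$ be an optimum obstacle
set for $\bScale[3]{B}$, and let $T$ be the minimum-weight arborescence in
$\gExt_L(B,S)$ computed by \cref{alg:approx}.

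First, I would lower-bound the optimum in terms of arborescence weight.
Applying \cref{lem:obstacle-weight-bound} to $\bScale[3]{B}$ yields an
arborescence in $\gExt_F(\bScale[3]{B})$ of weight at most $2|O^\ast|$. Pushing
this arborescence down through the scaling by \cref{lem:arb-scaling} gives an
arborescence of weight at most $2|O^\ast|$ in $\gExt_F(B)$, and then applying
\cref{arb_full_large} transfers it to an arborescence of weight at most
$2|O^\ast|$ in $\gExt_L(B,S)$. Since $T$ is a minimum-weight arborescence in
$\gExt_L(B,S)$, its weight is also at most $2|O^\ast|$, and this weight equals
exactly the number of inverse edges of $T$.

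Next, I would upper-bound the number of obstacles produced by the algorithm by
the weight of $T$. A turn gadget at scaling factor~$3$ consists of exactly
$2$ obstacles (\cref{fig:turn_gadget_3}(a),(b)), placed inside the
$3 \times 3$ sub-grid of the \emph{head} of its determining inverse edge. The
only way for two distinct inverse edges of $T$ to share a head is the scenario
permitted by \cref{common_head_sink}, where the head is a sink and the two
gadgets are perpendicular, contributing $2+2=4$ disjoint sub-pixel obstacles
(\cref{fig:turn_gadget_3}(c)); at non-sink heads at most one inverse edge is
incident by \cref{common_head_sink}, and different head pixels have disjoint
$3 \times 3$ sub-grids. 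Hence no obstacle is counted twice, and the total
number of obstacles returned equals $2$ times the number of inverse edges of
$T$, i.e., $2 \cdot \mathrm{weight}(T) \leq 4|O^\ast|$.

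The only delicate point, and what I would be most careful about, is verifying
the non-overlap claim in the previous paragraph: the approximation ratio would
be jeopardized if some inverse edges forced additional or shared obstacles.
Combining \cref{lem:one-vertical-gadget} (at most one vertical gadget per row
segment) with \cref{common_head_sink} (shared heads imply sinks and
perpendicular orientation) cleanly rules this out, so the counting argument
passes $2$ obstacles per unit of arborescence weight and the stated bound of
$4|O^\ast|$ follows.
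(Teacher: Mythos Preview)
Your proposal is correct and follows essentially the same approach as the paper: chain \cref{lem:obstacle-weight-bound}, \cref{lem:arb-scaling}, and \cref{arb_full_large} to bound the weight of the computed arborescence by $2|O^\ast|$, then count two obstacles per inverse edge. The paper's proof is slightly terser in that it simply asserts $|\text{ALG}| = 2w$ without the non-overlap discussion; note that for the approximation ratio you only need $|\text{ALG}| \le 2w$, which holds trivially since each turn gadget contributes at most two sub-pixels to the set $O$, so your careful invocation of \cref{lem:one-vertical-gadget} and \cref{common_head_sink} here is sound but not strictly necessary.
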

\begin{proof}
Let $w$ be the number of inverse edges in a minimum-weight arborescence of the
extended large tilt graph of $B$ and $S$. Then, the number of obstacles placed
by \cref{alg:approx} is \(|\text{ALG}| = 2w\) because two obstacles are placed
per inverse edge, see \cref{fig:turn_gadget_3}. Let $|T(G)|$ denote the weight
of a minimum-weight arborescence in a graph $G$. The optimum number
$|\text{OPT}|$ of obstacles can be bounded in the following way.
\begin{align*}
  2|\text{OPT}| & \ge |T(\gExt_F(\bScale[3]{B}))|
    & \text{(\cref{lem:obstacle-weight-bound})} \\
  & \ge |T(\gExt_F(B))| & \text{(\cref{lem:arb-scaling})} \\
  & \ge |T(\gExt_L(B, S))| = w
    & \text{(\cref{arb_full_large})}
\end{align*}

Therefore, \(\frac{|\text{ALG}|}{|\text{OPT}|} \le 4\).
\end{proof}

\begin{figure}[htb]
\centering
\includegraphics{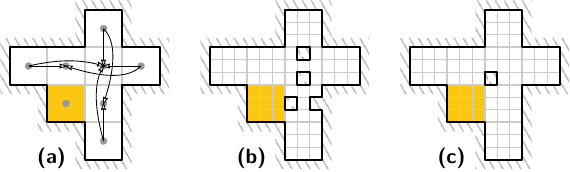}%
\caption{(a) The inverse edges in $\gExt_L(B)$ of a board $B$ with a single sink
$s \in S$. (b) Obstacles placed by \cref{alg:approx} for $\bScale[3]{B}$.
(c) An optimum obstacle placement for the same board and sink.}\label{fig:approx_tight}
\end{figure}

This is tight as there are $3$-scaled regions for which \cref{alg:approx} places exactly four times as many obstacles as required, see \cref{fig:approx_tight}.
For larger scaling factors we obtain the following:

\begin{restatable}{corollary}{restateApproxRatioGt}\label{approxratiogt}
When applied with scaling factor $k > 3$, \cref{alg:approx} places at most $6$
times as many obstacles as used in an optimum solution for $\bScale[k]{B}$.
\end{restatable}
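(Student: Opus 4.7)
The plan is to adapt the proof of \cref{approx_ratio} to scaling factors $k > 3$, where the only substantive change is the precise construction of the turn gadget. First, I would define the turn gadget for $k > 3$ using three obstacles rather than two: the obstacles are arranged to block particles traversing the super-pixel along one axis and redirect them to the perpendicular axis corresponding to the orientation of the associated inverse edge, while keeping the remainder of the super-pixel navigable. Because each inverse edge then contributes three obstacles, we have $|\text{ALG}| = 3w$, where $w$ denotes the number of inverse edges in the minimum-weight arborescence computed by \cref{alg:approx}.

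Next, I would establish the analog of \cref{approx_feasible} for $k > 3$, namely that the sub-board produced by the algorithm is drainable. The structural lemmas \cref{lem:one-vertical-gadget} and \cref{common_head_sink} depend only on properties of the arborescence and not on the scaling factor, so they continue to ensure that two perpendicular turn gadgets can share a super-pixel only if that super-pixel is a sink. The strong induction on the distance from the current super-pixel to a sink in the arborescence then proceeds through the same three cases as in \cref{fig:approx_cases}; each case reduces to verifying that the new turn gadget still delivers a particle to an outer sub-pixel of the next super-pixel along the arborescence path.

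Finally, for the ratio, \cref{lem:obstacle-weight-bound,lem:arb-scaling,arb_full_large} hold for every $k \geq 1$ and therefore yield $2|\text{OPT}| \geq w$ unchanged from the proof of \cref{approx_ratio}. Combining this with $|\text{ALG}| = 3w$ gives $|\text{ALG}| / |\text{OPT}| \leq 6$, as claimed.

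The main obstacle is designing the turn gadget for $k > 3$ so that three obstacles are sufficient to redirect particles correctly throughout the enlarged super-pixel without isolating any corner sub-pixel that is not a sink; such a disconnected sub-pixel would, as in the subtlety handled separately in the proof of \cref{approx_feasible}, break the drainability argument. Once a suitable three-obstacle gadget is in hand, the remainder of the argument transfers essentially verbatim from the $k = 3$ case.
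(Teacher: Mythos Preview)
Your proposal is correct and matches the paper's approach: the paper simply states that the proof is analogous to that of \cref{approx_ratio} except that an additional obstacle is required per turn gadget (referencing a figure for the $k>3$ gadget), which raises the ratio from $4$ to $6$. Your outline spells out exactly this argument in more detail, including the observation that \cref{lem:one-vertical-gadget}, \cref{common_head_sink}, \cref{lem:obstacle-weight-bound}, \cref{lem:arb-scaling}, and \cref{arb_full_large} are independent of the scaling factor.
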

\begin{proof}
    The proof is analogous to that of \cref{approx_ratio}, except that an additional
    obstacle is required per turn gadget, see \cref{fig:turn_gadget_k}, which
    increases the approximation ratio to 6.
\end{proof}

\begin{figure}[htb]
    \centering
    \includegraphics{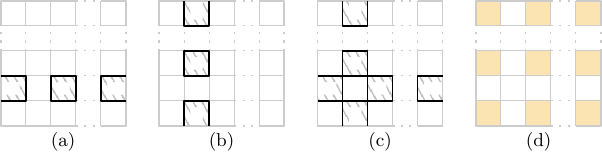}%
    \caption{Turn gadgets and outer sub-pixels for scaling factor $k > 3$:
    Horizontal (a) and vertical~(b) turn gadgets. (c) Both turn gadget at the same pixel. (d) Potential outer sub-pixels.}\label{fig:turn_gadget_k}
\end{figure}

We leave it as an open question whether our approach can be adapted to
$2$-scaling. Using the same ideas, we can ensure drainability of $2$-scaled
regions by placing thin walls instead of obstacles, i.e., by removing edges of
the underlying graph instead of vertices. It is not clear how obstacles can be
placed on $2 \times 2$ super-pixels to redirect particles coming from opposite
sides. An approximation may still be viable, but it will require a more
sophisticated argument that will certainly lose clarity.

Furthermore, note how two perpendicular turn gadgets as placed
in \cref{fig:turn_gadget_3}(c) split a connected board into two regions that
need to be drained individually. We can easily maintain connectivity by placing
another obstacle at the intersection of the turn gadgets---at the cost of
increasing the approximation ratio by $1$. A more elaborate question is whether
we can maintain the genus of the board, i.e., do not increase the number of
holes, while achieving the same approximation ratio. We leave this as an open
problem as well.

    \section{Drainability and Fillability in Generalized Models}\label{sec:models}

We now step away from the full tilt model and consider arbitrary models. For a model $M$ we
define the model $M^*$ to contain all compositions of moves from $M$, i.e., \(m
\in M^*\) if there are $\mSeq \in M$ such that $m = \mComp$, for any \(k \geq
0\). We generalize the concepts of minimality and drainability in the obvious
way. A move $m$ is \emph{monotone} if \(C \subseteq D\) implies \(m(C) \subseteq
m(D)\); it is \emph{volume-preserving} if \(|m(C)| = |C|\) for all $C$. Note
that the moves associated with sinks are monotone but obviously not
volume-preserving. Models are called monotone or volume-preserving if all of
their moves have the respective property.

The most well-studied model, apart from the full tilt model, is the
\emph{single step model}, \(\SSt = \{u_1, d_1, \ell_1, r_1\}\), which has
particles move to an adjacent pixel in one of the four directions, unless they
are blocked by the boundary or another particle. Formally, call a pixel
\emph{left-blocked} in a configuration $C$ if it and every pixel left of it in
its row segment are occupied in $C$. Then \(p \in \ell_1(C)\) if and only if $p$
is left-blocked in $C$ or \(p + (1,0)^\Tpose \in C\). The other moves are
defined analogously with respect to the other directions. Additionally, we
introduce an extension to the full tilt and single step models that allows
movement to be restricted to a subset of the segments parallel to the direction
of movement. Let $M \in \{\FT, \SSt\}$. Then the \emph{interval extension}
$\IE[M]$ has moves $m^{[i,j]} \in M \times \ZZ$ that apply the move $m \in M$ to
those segments whose $y$-coordinate (in the case of row segments and horizontal
moves) or $x$-coordinate (for column segments and vertical moves) lies in the
interval $[i,j]$, and leave all other segments of the affected type unchanged.
See the bottom half of \cref{fig:maximal1} for exemplary moves in \IE.
We start with an easily verified observation.

\begin{observation}\label{obs:concrete-properties}
The models \FT, \SSt, \IE, and \IE[\SSt] are monotone and volume-preserving.
\end{observation}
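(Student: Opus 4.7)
The plan is to reduce all four cases to a uniform, segment-wise analysis and then dispatch each model by a short calculation.

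First, I would observe that every move in the four listed models acts independently on the segments parallel to its direction (rows for $\ell_\infty, r_\infty, \ell_1, r_1$ and their interval-restricted counterparts; columns for the vertical moves). In particular, the image under $m$ of a configuration $C$ decomposes segment-by-segment as $m(C) \cap R = m(C \cap R)$ for each affected segment $R$, while every unaffected segment $R$ satisfies $m(C) \cap R = C \cap R$. Both monotonicity and volume preservation are preserved under such a segment decomposition, so it suffices to verify the two properties for a single move acting on a single segment. By the obvious symmetry (reversing a row), it further suffices to treat the leftward moves $\ell_\infty$ and $\ell_1$.

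Next, I would dispatch $\FT$ and its interval extension. On a fixed row segment $R$, the move $\ell_\infty$ produces the unique configuration consisting of the $|R \cap C|$ leftmost pixels of $R$. Volume preservation is immediate since $|\ell_\infty(C) \cap R| = |R \cap C|$ summed over all rows gives $|C|$. For monotonicity, if $C \subseteq D$ then $|R \cap C| \le |R \cap D|$, and the leftmost $|R \cap C|$ pixels of $R$ are a subset of the leftmost $|R \cap D|$ pixels. The interval extension $\IE$ applies $\ell_\infty$ to exactly those row segments whose $y$-coordinate lies in the prescribed interval and acts as the identity elsewhere; both properties transfer unchanged.

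The slightly more delicate case is $\SSt$. Using the paper's characterization, $p \in \ell_1(C)$ iff $p$ is left-blocked in $C$ or $p+(1,0)^{\Tpose} \in C$. For monotonicity, if $C \subseteq D$ and $p \in \ell_1(C)$: in the left-blocked case, the entire prefix of $p$'s row segment occupied in $C$ is also occupied in $D$, so $p$ remains left-blocked in $D$; in the other case, $p+(1,0)^{\Tpose} \in C \subseteq D$. Either way $p \in \ell_1(D)$. For volume preservation, I would exhibit on each row segment the bijection $\varphi \colon C \cap R \to \ell_1(C) \cap R$ that sends every left-blocked particle to itself and every non-left-blocked particle $q$ to $q - (1,0)^{\Tpose}$; one checks that the image lies in $\ell_1(C) \cap R$, that distinct particles yield distinct images (a left-blocked particle's position differs from any shifted image because shifting would land inside the left-blocked prefix but the shifted particle was by assumption not left-blocked), and that every element of $\ell_1(C) \cap R$ arises exactly once. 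As before, $\IE[\SSt]$ inherits both properties since unaffected row segments are untouched and affected ones behave exactly as under $\SSt$.

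The only step requiring any care is the bijection establishing volume preservation for $\SSt$; everywhere else the properties are essentially by construction, which is consistent with the statement being made as an observation.
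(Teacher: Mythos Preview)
Your proof is correct. The paper states this as an observation without proof (``easily verified''), so there is no argument to compare against; your segment-wise reduction and the explicit bijection for the single step case are exactly the natural way to fill in the details the paper omits.
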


\begin{lemma}
    \label{lem:mono-confluence}
    For every monotone model $M$ and configurations $C$ and $D$ that are both
    reachable from $V$, there is a configuration $D' \subseteq D$ reachable from
    $C$.
\end{lemma}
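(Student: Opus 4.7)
The plan is to exploit the fact that $V$ is the largest possible configuration on the board (it contains every pixel), so any configuration reachable from $V$, including $C$, is automatically a subset of $V$. We can then push $C$ through the very sequence of moves that witnesses reachability of $D$ from $V$, and use monotonicity to conclude that the result lies inside $D$.

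Concretely, I would first fix a witnessing sequence $\mSeq \in M$ such that $D = \mComp(V)$, which exists by assumption that $D$ is reachable from $V$ in $M$. Then I would set $D' := \mComp(C)$, so that $D'$ is by construction reachable from $C$ via exactly this sequence of moves.

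Next, I would verify that $D' \subseteq D$. The starting inclusion is $C \subseteq V$, which holds trivially since $C$ is a configuration on $B$ and hence a subset of the pixel set $V$. Then, by induction on $i$, monotonicity of each $m_i$ gives $m_i \circ \cdots \circ m_1(C) \subseteq m_i \circ \cdots \circ m_1(V)$; the base case is $C \subseteq V$, and the inductive step applies the definition of monotonicity to the inclusion obtained at step $i-1$. At $i = k$ this yields $D' = \mComp(C) \subseteq \mComp(V) = D$, as desired.

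There is no real obstacle here beyond writing the induction cleanly: the statement is essentially a restatement of closure of the reachability relation under monotone moves, and the fact that $C$ is already a sub-configuration of $V$ does all the work. The result does not require volume-preservation and does not reference sinks, so it applies uniformly to every monotone model defined in the paper.
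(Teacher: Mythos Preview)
Your proposal is correct and follows essentially the same approach as the paper: take the move sequence witnessing $V \rStar D$, apply it to $C$, and use monotonicity together with $C \subseteq V$ to conclude that the image lies inside $D$. The only difference is that you spell out the induction over the move sequence explicitly, whereas the paper invokes monotonicity of the composed move $m \in M^*$ in a single line; your observation that the hypothesis ``$C$ reachable from $V$'' is not actually needed is also valid and consistent with the paper's proof.
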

\begin{proof}
    Let $D = m(V)$ for some $m \in M^*$. Then \(m(C) \subseteq m(V) = D\), due to
    monotonicity.
\end{proof}

\begin{proposition}
    \label{mono_mutual_size}
    In a monotone model, all minimal configurations reachable from $V$ are mutually
    reachable and have the same size.
\end{proposition}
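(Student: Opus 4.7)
The plan is to apply \cref{lem:mono-confluence} twice, in symmetric fashion, to pin down the sizes and then promote the resulting subset-reachability to equality via minimality.

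First I would let $C_1$ and $C_2$ be two minimal configurations reachable from $V$. Applying \cref{lem:mono-confluence} to the pair $(C_1, C_2)$ yields a configuration $D_1 \subseteq C_2$ with $C_1 \rStar D_1$. By minimality of $C_1$ we have $|D_1| \ge |C_1|$, while $D_1 \subseteq C_2$ gives $|D_1| \le |C_2|$; hence $|C_1| \le |C_2|$. The symmetric application to $(C_2, C_1)$ produces $D_2 \subseteq C_1$ with $C_2 \rStar D_2$ and, by the same argument, $|C_2| \le |C_1|$. Combining both inequalities, $|C_1| = |C_2|$.

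With the sizes equal, the inclusions $D_1 \subseteq C_2$ and $D_2 \subseteq C_1$ are forced to be equalities: from $|C_1| \le |D_1| \le |C_2| = |C_1|$ we obtain $|D_1| = |C_2|$, and since $D_1 \subseteq C_2$ this gives $D_1 = C_2$; analogously $D_2 = C_1$. Thus $C_1 \rStar C_2$ and $C_2 \rStar C_1$, so $C_1$ and $C_2$ are mutually reachable and have the same cardinality.

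There is no real obstacle here; the content of the argument is already packaged in \cref{lem:mono-confluence}, and the only care required is to observe that minimality rules out the strict-subset case and therefore upgrades the $\subseteq$-conclusion of that lemma to equality. The proof needs no additional hypotheses beyond monotonicity of $M$.
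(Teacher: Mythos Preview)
Your proof is correct and follows essentially the same approach as the paper: two applications of \cref{lem:mono-confluence} combined with minimality to upgrade the $\subseteq$-conclusions to equalities. The only cosmetic difference is that the paper applies the lemma in a nested rather than symmetric fashion (first obtaining $D'\subseteq D$ from $C$, then $D''\subseteq D'$ from $D$, and using minimality of $D$ to force $D=D'=D''$), deriving mutual reachability first and equal size afterward, whereas you establish the size equality first and then promote the inclusions; both arguments are equivalent in content.
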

\begin{proof}
    Let $M$ be a monotone model and $C$, $D$ configurations with $V \rStar C$ and
    $V \rStar D$. Then, by \cref{lem:mono-confluence}, there is $D' \subseteq D$ with
    $C \rStar D'$. Again, by \cref{lem:mono-confluence}, there is $D'' \subseteq D'$
    such that $D \rStar D''$. This implies $D = D' = D''$ because $D$ is
    minimal. Therefore, $C \rStar D$. Two mutually reachable, minimal configurations
    must have the same size, since otherwise the larger one would not be minimal.
\end{proof}

\begin{restatable}{proposition}{restateDrainableEmptyset}\label{drainable_emptyset}
A board is drainable in a monotone model if and only if $\varnothing$ is the
only minimal configuration.
\end{restatable}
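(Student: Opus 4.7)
The plan is to prove both implications directly, using monotonicity together with the finiteness of the configuration lattice $2^V$.

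For the forward direction, I would assume the board is drainable, so $V \rStar \varnothing$ via some composition $m \in M^*$. The claim is that every non-empty $C \subseteq V$ fails to be minimal, from which it follows that $\varnothing$ is the unique minimal configuration. Since $C \subseteq V$ and $M$ is monotone, applying the same composition yields $m(C) \subseteq m(V) = \varnothing$, hence $m(C) = \varnothing$ and therefore $C \rStar \varnothing$. If $C \neq \varnothing$, then $\varnothing$ is reachable from $C$ with $|\varnothing| < |C|$, contradicting the minimality of $C$.

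For the backward direction, assume $\varnothing$ is the only minimal configuration. Starting from $V$, whenever the current configuration $C$ is not minimal there is, by the very definition of minimality, a reachable $C'$ with $|C'| < |C|$. Iterating at most $|V|$ times (the sizes form a strictly decreasing chain of nonnegative integers) produces a sequence of configurations reachable from $V$ that ends at a minimal one. By hypothesis this terminal configuration is $\varnothing$, and concatenating the witnessing move sequences shows $V \rStar \varnothing$, i.e., the board is drainable.

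There is essentially no serious obstacle: the whole argument rests on the definition of minimality, the finiteness of $2^V$, and a single application of monotonicity, which is precisely the ingredient that lets one drainage sequence act on every subset of $V$ at once. Note that \Cref{mono_mutual_size} is not needed in this short derivation, although one could alternatively obtain the forward direction from it by combining the facts that $\varnothing$ is itself minimal, is reachable from $V$ under the drainability hypothesis, and that all minimal configurations reachable from $V$ share a common size.
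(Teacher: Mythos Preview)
Your proof is correct. The paper's own proof is the one-liner ``Follows directly from \cref{mono_mutual_size}'', so the underlying mathematics---monotonicity plus a finite descent to a minimal configuration---is the same, but you unpack it directly rather than routing through that proposition. Your forward direction applies the witnessing composition $m$ to an arbitrary $C\subseteq V$ and obtains $m(C)\subseteq m(V)=\varnothing$; this is exactly the content of \cref{lem:mono-confluence} specialized to $D=\varnothing$, and it has the small advantage of transparently covering \emph{every} configuration $C$, whereas \cref{mono_mutual_size} as stated speaks only of minimal configurations reachable from $V$. (In that sense your direct argument fills a tacit gap in the one-line citation.) Your backward direction---descend from $V$ by repeatedly invoking the failure of minimality until a minimal, hence empty, configuration is reached---is the standard finiteness argument and matches what the paper implicitly intends. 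One minor remark: the alternative you sketch at the end via \cref{mono_mutual_size} inherits the same ``reachable from $V$'' restriction, so on its own it would not establish that \emph{every} minimal configuration is $\varnothing$; your main argument does.
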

\begin{proof}
    Follows directly from \cref{mono_mutual_size}.
\end{proof}

\begin{lemma}\label{minimal_vp}
A configuration $C$ is minimal with respect to a set of sinks $S$ in a
volume-preserving model if and only if no $s \in S$ is occupiable from $C$.
\end{lemma}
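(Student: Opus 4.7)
The plan is to prove both directions by contrapositive, exploiting the fact that in a volume-preserving model the only moves that can shrink a configuration are sink moves applied to occupied pixels.

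For the ``if'' direction, assume some $s \in S$ is occupiable from $C$, so there exists a configuration $D$ with $C \rStar[M] D$ and $s \in D$. Since the model is volume-preserving, $|D| = |C|$. Then $C \rStar[M \cup S] D \rOne[S] D \setminus \{s\}$ via the sink move at $s$, which gives a configuration of size $|C| - 1 < |C|$ reachable from $C$ in $M \cup S$. Hence $C$ is not minimal.

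For the ``only if'' direction, assume $C$ is not minimal, so there is a sequence $C = C_0 \rOne[M \cup S] C_1 \rOne[M \cup S] \cdots \rOne[M \cup S] C_n = D$ with $|D| < |C|$. Because moves from $M$ preserve $|\cdot|$ and sink moves either preserve or decrement it by one (depending on whether the associated pixel is occupied), the cardinality is non-increasing along the sequence, and strictly decreases at some step. Let $i$ be the smallest index with $|C_{i+1}| < |C_i|$; then the step $C_i \rOne[M \cup S] C_{i+1}$ must be a sink move whose pixel $s \in S$ is occupied in~$C_i$. All earlier steps are either moves from $M$ or sink moves applied to empty pixels, the latter being identity maps. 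Deleting these vacuous sink moves from the prefix yields a sequence $C \rStar[M] C_i$ using only moves from $M$, and $s \in C_i$. Thus $s$ is occupiable from $C$, contradicting our assumption.

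The main subtlety is precisely the last point: ensuring that reachability under $M \cup S$ can be converted into reachability under $M$ alone. This is exactly where volume preservation is needed, since it forces the first size-decreasing step to be a sink move on an occupied pixel and guarantees that every preceding sink move in $M \cup S$ is a no-op that can be stripped without breaking the chain.
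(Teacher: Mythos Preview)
Your proof is correct and follows essentially the same approach as the paper: both directions proceed by contrapositive, and the ``only if'' direction isolates a sink move applied to an occupied pixel while arguing that the preceding moves can be taken entirely from $M$. The only cosmetic difference is that the paper picks a \emph{shortest} sequence (forcing any earlier sink move to be absent, since a no-op sink would contradict minimality of length), whereas you pick the \emph{first} size-decreasing step and explicitly strip the vacuous sink moves from the prefix; these are interchangeable ways of packaging the same observation.
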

\begin{proof}
Let $M$ be a volume-preserving model, $S$ a set of sinks and $C$ a
configuration. First, assume there is a sink $s \in S$ and a configuration $D$
reachable from $C$ such that $s \in D$. Then~\(|s(D)| < |D| = |C|\), i.e., $C$
is not minimal. Now, assume $C$ is not minimal. Then there is a shortest
sequence of moves $\mSeq \in M \cup S$ such that $D = \mComp(C)$ and \(|D| <
|C|\). Since the sequence is shortest, and all $m \in M$ are volume-preserving,
$m_k$ must be associated with a sink $s \in S$ and $s \in D$.
\end{proof}

Although the characterization of minimality in \cref{minimal_vp} applies to all
volume-preserving models, its consequences vary. In the full tilt model it
implies that deciding minimality is \PSPACE-complete, whereas it is trivial in
the single step model, as we show in
\cref{prop:minimality-hard-ft,single_step_trivial}, respectively. In contrast to
this, the characterization of drainability in \cref{drainable_mono_vp} leads to a
polynomial-time decision procedure for the full tilt model, as we saw when
we investigated this special case in \cref{sec:drain-full-tilt}.

\begin{proposition}
    \label{prop:minimality-hard-ft}
    Given a configuration $C$ of a board $B$ and a set $S$ of sinks, it
    is \PSPACE-complete to decide whether $C$ is minimal with respect to $S$ in the
    full tilt model.
\end{proposition}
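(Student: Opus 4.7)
The plan is to use \cref{minimal_vp} as a bridge between minimality and a well-studied reachability question. Since the full tilt model together with the sink moves is volume-preserving, \cref{minimal_vp} says that $C$ is minimal with respect to $S$ exactly when no sink $s \in S$ is occupiable from $C$. So deciding minimality is the same as deciding the complement of the occupancy problem restricted to target set $S$, and \PSPACE is closed under complement; this lets me lift both directions of the complexity from the known occupancy results.

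For \PSPACE-hardness, I would reduce directly from the occupancy problem in the full tilt model, which is \PSPACE-complete by~\cite{hierarchical2020,bdflmw-particle2019}. Given an occupancy instance consisting of a board $B$, a configuration $C$, and a target pixel $p$, I would output the minimality instance $(B, C, S)$ with $S = \{p\}$. This transformation is clearly polynomial-time. By \cref{minimal_vp}, $C$ is \emph{not} minimal with respect to $\{p\}$ in \FT if and only if some configuration $D$ reachable from $C$ contains $p$, i.e., if and only if $p$ is occupiable from $C$. Hence non-minimality is \PSPACE-hard, and therefore so is minimality.

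For membership in \PSPACE, I would describe a nondeterministic procedure that decides non-minimality: starting from $C$, repeatedly guess a move from $\FT \cup S$, apply it, and accept as soon as the current configuration has size strictly less than $|C|$. Each configuration is a subset of $V$ and hence stored in polynomial space, and each full tilt move or sink application is computable in polynomial time; so the procedure runs in \NPSPACE, which equals \PSPACE by Savitch's theorem. Thus non-minimality lies in \PSPACE, and since \PSPACE is closed under complement, so does minimality. The main subtlety is just to be sure the reduction goes in the right direction—occupancy corresponds to non-minimality, not minimality—but closure of \PSPACE under complement makes this harmless, so no substantial obstacle remains beyond citing the existing occupancy result.
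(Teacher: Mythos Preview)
Your approach is essentially the same as the paper's: both invoke \cref{minimal_vp} to equate non-minimality with occupiability of some sink and then appeal to the \PSPACE-completeness of the occupancy problem in the full tilt model from~\cite{hierarchical2020}. One small correction: you justify the use of \cref{minimal_vp} by saying ``the full tilt model together with the sink moves is volume-preserving,'' but sink moves are explicitly \emph{not} volume-preserving (the paper notes this right after defining the terms); the hypothesis of \cref{minimal_vp} only requires that \FT itself be volume-preserving, which holds by \cref{obs:concrete-properties}. With that phrasing fixed, your argument is correct and in fact more explicit than the paper's about both the reduction direction and the membership argument.
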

\begin{proof}
    This is a consequence of \cref{minimal_vp} and Theorem~5.1 in~\cite{hierarchical2020}, which states that the occupancy problem is \PSPACE-complete in the full tilt model.
\end{proof}

\begin{theorem}\label{drainable_mono_vp}
A board $B=(V,E)$ is drainable to a set of sinks $S$ in a monotone,
volume-preserving model $M$, if and only if, for every $p \in V$ there is \(s \in
S\) such that $p \rStar s$.
\end{theorem}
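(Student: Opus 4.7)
The plan is to prove both implications by combining monotonicity, volume preservation, and the structural results already established in \cref{mono_mutual_size,drainable_emptyset,minimal_vp}, while exploiting the fact that a sink move $s \in S$ acts as the identity on any configuration that does not contain the pixel $s$.

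For the forward direction, assume $V \rStar[M \cup S] \varnothing$ via a sequence $\mSeq \in M \cup S$, and fix any $p \in V$. I would apply this same sequence starting from the singleton $\{p\}$, producing configurations $C_0 = \{p\}, C_1, \ldots, C_k$. Since every move in $M$ is volume-preserving and every sink move decreases cardinality by at most one, $|C_i| \in \{0,1\}$ at every step. Let $j$ be the smallest index with $|C_j| = 0$. Then $C_{j-1} = \{s\}$ for some pixel $s$, and $m_j$ must be the sink move associated with $s$, so $s \in S$. For every $i < j$ with $m_i \in S$, we have $|C_{i-1}| = 1 = |C_i|$, so $m_i$ leaves $C_{i-1}$ unchanged. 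Excising all such identity-acting sink steps from $m_1,\ldots,m_{j-1}$ yields a sequence purely in $M$ whose composition still sends $\{p\}$ to $\{s\}$; this is exactly the statement $p \rStar s$.

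For the backward direction, by \cref{drainable_emptyset} it suffices to show that $\varnothing$ is the only minimal configuration reachable from $V$. Let $C$ be such a minimal configuration and assume, for contradiction, that some $p \in C$. By hypothesis there is a composition $m \in M^*$ with $m(\{p\}) = \{s\}$ for some $s \in S$. Monotonicity of $M$ then gives $m(C) \supseteq m(\{p\}) = \{s\}$, so $s$ is occupiable from $C$ in $M$. By \cref{minimal_vp}, this contradicts the minimality of $C$; hence $C = \varnothing$, and the board is drainable.

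The main subtlety is to juggle carefully the two distinct reachability relations, $\rStar$ in $M$ alone versus $\rStar[M \cup S]$, since the hypothesis $p \rStar s$ is phrased in $M$ while drainability and minimality are phrased in $M \cup S$. The essential bridge in the forward direction is the observation that intermediate sink moves act as the identity on a singleton whose particle is not currently at the sink, allowing us to extract a pure $M$-trajectory; in the backward direction the bridge is the monotonicity-based lift from a single-particle witness to the larger configuration $C$, which is what lets the local hypothesis refute minimality globally.
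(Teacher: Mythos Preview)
Your backward direction is identical to the paper's. Your forward direction takes a slightly different route: the paper invokes \cref{drainable_emptyset} to conclude that $\{p\}$ is not minimal and then \cref{minimal_vp} to obtain a sink occupiable from $\{p\}$; you instead apply the draining sequence for $V$ directly to $\{p\}$ and excise the inert sink steps. Both work, and yours is arguably more transparent about where the pure-$M$ witness comes from.

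There is, however, a small omission in your forward direction. You write ``Let $j$ be the smallest index with $|C_j| = 0$'' without arguing that such a $j$ exists. This needs monotonicity: since $\{p\} \subseteq V$ and every move in $M \cup S$ is monotone, the composition is monotone, so $C_k \subseteq (\mComp)(V) = \varnothing$ and hence $C_k = \varnothing$. Your concluding paragraph asserts that monotonicity enters only in the backward direction, but in fact both directions need it---the forward one just hides it in this single step.
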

\begin{proof}
First, assume $B$ is drainable. Then, by \cref{drainable_emptyset}, no
configuration $\{p\}$ is minimal, for any $p \in V$. By \cref{minimal_vp}, and
since $M$ is volume-preserving, this means there is $s \in S$ such that \(p
\rStar s\).

Now, assume that for every $p \in V$ there is $s \in S$ with $p \rStar
s$. Assume, for sake of contradiction, that there is a minimal configuration \(C
\neq \varnothing\). Let $p \in C$ and $m \in M^*$ such that $m(\{p\}) = \{s\}$
for a sink $s \in S$. Then \(m(C) \supseteq m(\{p\}) = \{s\}\), due to
monotonicity. Thus, $C$ is not minimal by \cref{minimal_vp}---a
contradiction. Therefore, $\varnothing$ is the only minimal configuration, which
by \cref{drainable_emptyset} implies that $B$ is drainable.
\end{proof}

\begin{observation}\label{single_step_trivial}
Every board is drainable in the single step model, as long as every one of its
regions contains a sink.
\end{observation}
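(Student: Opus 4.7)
The plan is to invoke \cref{drainable_mono_vp} directly. By \cref{obs:concrete-properties}, the single step model $\SSt$ is both monotone and volume-preserving, so \cref{drainable_mono_vp} applies: it suffices to show that for every pixel $p \in V$ there is some sink $s \in S$ such that $p \rStar[\SSt] s$.

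Fix an arbitrary pixel $p$ and let $R$ be the region containing $p$. By hypothesis, $R$ contains some sink $s$. Since $R$ is maximally connected by definition, there exists a sequence of pixels $p = q_0, q_1, \ldots, q_t = s$ in $R$ with consecutive pixels being grid-adjacent. The key observation is that a configuration consisting of a single particle is never blocked by other particles, so the behavior of the single step moves on $\{q_i\}$ reduces to the trivial rule: applying the move in direction $x \in \{u,d,\ell,r\}$ either shifts the lone particle one step in that direction, or leaves it in place if the neighboring pixel is outside the board. Therefore, for each transition from $q_{i-1}$ to $q_i$, a single application of the corresponding step move in $\SSt$ realizes $\{q_{i-1}\} \rOne[\SSt] \{q_i\}$. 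Composing these $t$ moves yields $p \rStar[\SSt] s$.

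The only thing to verify at the single-particle level is that the formal definition of $\ell_1$ (and symmetrically the other three moves) indeed produces this unit shift: when $C = \{q\}$ and $q$ is not at the left end of its row segment, then $q-(1,0)^\Tpose$ is the only left-blocked pixel of $C$ and the only pixel whose right neighbor lies in $C$, so $\ell_1(\{q\}) = \{q - (1,0)^\Tpose\}$; when $q$ is at the left end, $q$ itself is left-blocked, so $\ell_1(\{q\}) = \{q\}$. No step in this argument is technically hard; the bulk of the content is already encapsulated in the earlier machinery (\cref{obs:concrete-properties,drainable_mono_vp}), so the proof is essentially a one-line reduction together with the remark that the single particle can be walked to the sink along any path in the region.
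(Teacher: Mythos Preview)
Your proof is correct and follows exactly the paper's approach: invoke \cref{obs:concrete-properties,drainable_mono_vp} and walk the single particle along a path in its region to a sink via single step moves. One minor slip in your verification paragraph: when $q$ is not leftmost in its row segment, \emph{no} pixel is left-blocked in $\{q\}$ (in particular $q-(1,0)^{\Tpose}$ is not occupied, hence not left-blocked); the equality $\ell_1(\{q\}) = \{q-(1,0)^{\Tpose}\}$ holds solely via the second clause $p+(1,0)^{\Tpose} \in C$, but your conclusion is unaffected.
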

\begin{proof}
As every region contains a sink, there is a path on the board from every pixel
$p$ to a sink $s$. This path entails a sequence of single step moves
transforming $\{p\}$ into $\{s\}$. Thus, the board is drainable, by
\cref{obs:concrete-properties,drainable_mono_vp}.
\end{proof}

\begin{figure}[htb]
\centering
\begin{tikzpicture}[outer sep=auto]

\begin{scope}[pattern color=gray!30]
\fill [pattern=grid] (-4.4,0.7) -- (4.4,0.7) -- (4.4,-4.7) -- (0,-4.7) --
      (0,-3) -- (-2,-3) -- (-2,-1) -- (-4.4,-1) -- cycle;
\fill [pattern=crosshatch] (-4.4,-1.1) -- (-2.1,-1.1) -- (-2.1,-3.1) --
      (-0.1,-3.1) -- (-0.1,-4.7) -- (-4.4,-4.7) -- cycle;
\end{scope}

\begin{scope}[every node/.style={draw,align=center,minimum height=0.8cm,fill=white}]
\path (0,0) node [text width=8cm] (FT) {$\FT = \dual{\FT}$}
      (0,-2) node [text width=3cm] (FTI) {$\IE = \dual{\IE}$}
      (FT.west) ++(0,-2) node [right,text width=1.5cm] (S1) {$\SSt$}
      (FT.east) ++(0,-2) node [left,text width=1.5cm] (S1D) {$\dual{\SSt}$}
      (S1.west) ++(0,-2) node [right,text width=3.5cm] (S1I) {$\IE[\SSt]$}
      (S1D.east) ++(0,-2) node [left,text width=3.5cm] (S1ID) {$\dual{\IE[\SSt]}$};
\end{scope}

\begin{scope}[arrows=-{Stealth[length=2.5mm]},semithick]
\draw (S1.north) -- (S1.north |- FT.south);
\draw ($(FTI.north) - (0.15,0)$) -- ({$(FTI.north) - (0.15,0)$} |- FT.south);
\draw ($(S1D.north) - (0.15,0)$) -- ({$(S1D.north) - (0.15,0)$} |- FT.south);
\draw ({$(S1.south) - (0.15,0)$} |- S1ID.north) -- ($(S1.south) - (0.15,0)$);
\draw ($(S1I.north) + (1.2,0)$) -- ({$(S1I.north) + (1.2,0)$} |- FTI.south);
\draw ($(S1ID.north) - (1.2,0)$) -- ({$(S1ID.north) - (1.2,0)$} |- FTI.south);
\draw (S1D.south |- S1ID.north) -- (S1D.south);
\end{scope}

\begin{scope}[arrows=-{Stealth[length=2.5mm]},densely dashed,semithick]
\draw ($(S1.south) + (0.15,0)$) -- ({$(S1.south) + (0.15,0)$} |- S1ID.north);
\draw ({$(FTI.north) + (0.15,0)$} |- FT.south) -- ($(FTI.north) + (0.15,0)$);
\draw ({$(S1D.north) + (0.15,0)$} |- FT.south) -- ($(S1D.north) + (0.15,0)$);
\draw ({$(S1ID.north) - (0.25,0)$} |- FT.south) -- ($(S1ID.north) - (0.25,0)$);
\end{scope}

\end{tikzpicture}%
\caption{Overview of the studied models and their relationships. A solid arrow
  from  model $A$ to model $B$ indicates that $A$ simulates $B$; a dashed arrow
  indicates that $A$ simulates $B$ on singletons. Note that solid arrows include
  dashed arrows and arrows arising due to transitivity have been omitted. All
  models in the upper part (with the axis-parallel crosshatch pattern) are equivalent on singletons, as are the ones in the
  lower part (with the diagonal crosshatch pattern).
  Arrows on the left-hand side are derived in \cref{subsec:relative}; those on the right-hand side and the equalities come from~\cref{subsec:duality}.}
  \label{fig:models}
\end{figure}
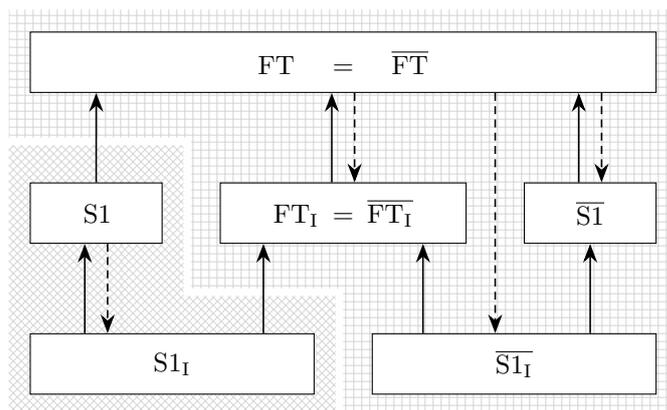

\subsection{Relative Power of Various Models}\label{subsec:relative}

Clearly, some models are more powerful than others, in the sense that they allow
us to reach more configurations. We make this notion precise by saying that a
model $M_2$ \emph{simulates} a model $M_1$ if for every $m \in M_1$ there is $m'
\in M_2^*$ such that $m(C) = m'(C)$ for every configuration $C$. It is easy to
see that \SSt simulates \FT (for any \(m \in \{u, d, \ell, r\}\), $m_\infty$
can be simulated using $D$ repetitions of $m_1$, where $D$ is the maximum
diameter among all regions of the board) and \IE[M] simulates $M$, for \(M \in
\{\FT, \SSt\}\) (choose the interval to encompass the whole board); see the left
half of \cref{fig:models}.

For the purpose of draining a board, it is useful to compare models with respect
to their moves acting on single particles. To this end, we say that model $M_2$
\emph{simulates $M_1$ on singletons} if, for every pixel $p$ and move \(m \in
M_1\), there is $m' \in M_2^*$ such that \(m(\{p\}) = m'(\{p\})\), i.e., if \(p
\rOne[M_1] q\) implies $p \rStar[M_2] q$, for all pixels $q$. Two models that
simulate each other on singletons are called \emph{equivalent on singletons}.
Note that general simulation entails simulation on singletons and that \SSt
and \FT simulate their respective interval extensions on singletons. We now
define a class of models with the property that all its members are equivalent
on singletons to \FT. It trivially includes \FT and \IE. Intuitively, these are
the models that move at least some particles maximally, and allow doing so in
all four directions.

\begin{definition}\label{def:tilt-compat}
A monotone and volume-preserving model $M$ is \emph{tilt-compatible} if the
following conditions are satisfied for all configurations $C$ and all occupied
pixels $p \in C$.
\begin{enumerate}
  \item For all $m \in M$, \(\{p^\ell, p^r, p^u, p^d, p\} \cap m(C) \neq
    \varnothing\).\label{tc_bound}
  \item For all \(x \in \{\ell, r, u, d\}\), there is $m \in M$ such that
    \(p^x \in m(C)\).\label{tc_choose}
\end{enumerate}
\end{definition}

\begin{proposition}\label{tc_equiv_ft}
Every tilt-compatible model is equivalent on singletons to \FT.
\end{proposition}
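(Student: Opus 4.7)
The plan is to unpack the definition of ``equivalent on singletons'' into two separate simulation claims and verify each using one of the two conditions in \cref{def:tilt-compat}, together with the fact that a volume-preserving move sends a single-particle configuration to a single-particle configuration. Throughout, fix an arbitrary pixel $p$.

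First I would show that \FT simulates $M$ on singletons. Take any $m \in M$. By volume preservation, $|m(\{p\})| = 1$, so $m(\{p\}) = \{q\}$ for a unique pixel $q$. Condition~\ref{tc_bound} of \cref{def:tilt-compat} then forces $q \in \{p^\ell, p^r, p^u, p^d, p\}$. In each of the five possible cases, $q$ is reachable from $p$ in $\FT$: either in zero moves (when $q=p$) or via a single full tilt move in the appropriate direction. Hence there is $m' \in \FT^*$ with $m'(\{p\}) = m(\{p\})$.

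Next I would show that $M$ simulates \FT on singletons. Take any $m \in \FT$; without loss of generality $m = \ell_\infty$, so $m(\{p\}) = \{p^\ell\}$. By condition~\ref{tc_choose} of \cref{def:tilt-compat} (applied to $x = \ell$ and $C = \{p\}$), there exists $m' \in M$ such that $p^\ell \in m'(\{p\})$. Since $M$ is volume-preserving, $|m'(\{p\})| = 1$, and therefore $m'(\{p\}) = \{p^\ell\} = m(\{p\})$. The same reasoning works for the other three cardinal directions.

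Combining the two directions yields the equivalence. The proof is essentially a bookkeeping exercise once one notices the key leverage point: volume preservation collapses the set-valued condition~\ref{tc_bound} (which a~priori only says $m(\{p\})$ \emph{meets} $\{p^\ell,p^r,p^u,p^d,p\}$) and condition~\ref{tc_choose} (which only says $p^x$ \emph{lies in} $m'(\{p\})$) into equalities of singletons. I do not anticipate a genuine obstacle; the only thing to be careful about is invoking volume preservation at the right moment in each direction so that the set containments promoted by conditions~\ref{tc_bound} and~\ref{tc_choose} are upgraded to set equalities.
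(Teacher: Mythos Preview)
Your proposal is correct and takes essentially the same approach as the paper: both split the equivalence into the two simulation directions, use condition~\ref{tc_bound} together with volume preservation to show \FT simulates $M$ on singletons, and use condition~\ref{tc_choose} together with volume preservation to show $M$ simulates \FT on singletons. The only cosmetic difference is that the paper handles the two directions in the opposite order.
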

\begin{proof}
Let $M$ be a tilt-compatible model and $p$ a pixel. First, observe that
\(x_\infty(\{p\}) = \{p^x\}\), for all directions \(x \in \{u, d, \ell,
r\}\). Thus, by Condition~\ref{tc_choose} of \cref{def:tilt-compat} and the fact
that $M$ is volume-preserving, there is $m \in M$ with \(m(\{p\}) =
x_\infty(\{p\})\). Now, consider any $m \in M$. By Condition~\ref{tc_bound}, and
since $m$ is volume-preserving, $m(\{p\})$ is one of $\{p\}$, $\{p^\ell\}$,
$\{p^r\}$, $\{p^u\}$, or~$\{p^d\}$. In the first case, the empty sequence
$\varepsilon \in \FT^*$ satisfies \(\varepsilon(\{p\}) = \{p\}\); in the latter
cases \(x_\infty(\{p\}) = \{p^x\} = m (\{p\})\), for all directions \(x \in \{u,
d, \ell, r\}\).
\end{proof}

\begin{corollary}\label{cor:drainable-tc}
For every tilt-compatible model $M,$ a board is drainable to a set of sinks $S$
in $M$ if and only if it is drainable to $S$ in \FT.
\end{corollary}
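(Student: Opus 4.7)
The plan is to combine two results already established in the excerpt: Theorem~\ref{drainable_mono_vp}, which characterizes drainability in any monotone, volume-preserving model as the existence, for each pixel, of a sink reachable from that pixel, and Proposition~\ref{tc_equiv_ft}, which states that every tilt-compatible model is equivalent on singletons to \FT. Since the drainability criterion of Theorem~\ref{drainable_mono_vp} depends only on singleton reachability, equivalence on singletons must carry drainability across.

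Concretely, I would proceed as follows. First, note that a tilt-compatible model $M$ is by definition monotone and volume-preserving, and by~\cref{obs:concrete-properties} so is \FT. Therefore Theorem~\ref{drainable_mono_vp} applies to both $M$ and \FT: the board $B=(V,E)$ is drainable to $S$ in $M$ if and only if for every $p \in V$ there exists $s \in S$ with $p \rStar[M] s$, and likewise with \FT in place of $M$.

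Second, I would spell out what equivalence on singletons buys us at the level of reachability. By Proposition~\ref{tc_equiv_ft}, for every pixel $p$ and every $m \in M$ there is $m' \in \FT^*$ with $m(\{p\}) = m'(\{p\})$, and conversely for every $m \in \FT$ there is $m' \in M^*$ with $m(\{p\}) = m'(\{p\})$. Hence $p \rOne[M] q$ implies $p \rStar[\FT] q$, and $p \rOne[\FT] q$ implies $p \rStar[M] q$, for all pixels $q$. Taking the reflexive, transitive closure of each implication, we obtain $p \rStar[M] q$ if and only if $p \rStar[\FT] q$, for all pixels $p, q$.

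Combining these two observations, the two drainability criteria supplied by Theorem~\ref{drainable_mono_vp} coincide, giving the claimed equivalence. There is no genuine obstacle here; the only thing to be careful about is not conflating simulation with simulation on singletons, and checking that singleton reachability is really all the drainability criterion uses---but since the criterion quantifies only over single-pixel configurations $\{p\}$ on one end and single-pixel sink configurations $\{s\}$ on the other, this is immediate.
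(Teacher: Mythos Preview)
Your proposal is correct and matches the paper's intended argument: the corollary is stated without proof precisely because it follows immediately from Proposition~\ref{tc_equiv_ft} together with Theorem~\ref{drainable_mono_vp}, exactly as you spell out. The only detail you add beyond what the paper makes explicit is the observation that both $M$ (by Definition~\ref{def:tilt-compat}) and \FT (by \cref{obs:concrete-properties}) are monotone and volume-preserving, so that Theorem~\ref{drainable_mono_vp} applies to each; this is the right thing to check and is implicit in the paper's treatment.
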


\subsection{Duality of Various Models}\label{subsec:duality}

It can prove enlightening to imagine free pixels to instead be occupied by a
different kind of particle, called \emph{bubbles}, and analyze their behavior
under a sequence of moves. This leads to the concept of a \emph{dual move} for a
move $m$, defined as $\dual{m}(C) = V \setminus m(V \setminus C)$. In the
\emph{dual model} \(\dual{M} = \{\dual{m}: m \in M\}\) of a model $M,$ particles
move as bubbles do in $M,$ and vice versa; examples are depicted in~\cref{fig:duality}.

\begin{figure}[htb]
\centering
\includegraphics{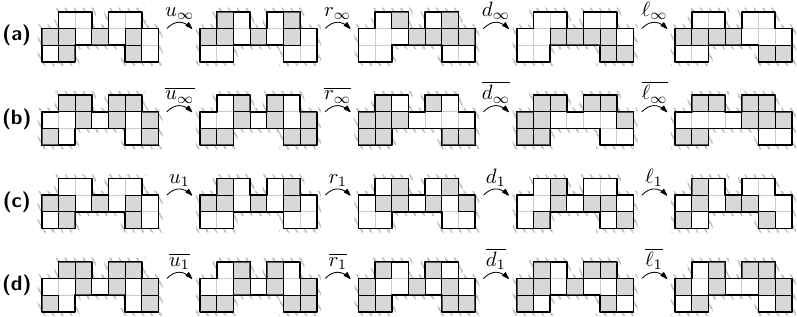}%
\caption{Particle movement in (a) \FT, (b) \dual{\FT}, (c) \SSt, and (d)
  \dual{\SSt}. Note that every move in \dual{\FT} corresponds to a move in \FT,
  whereas no such correspondence exists between \dual{\SSt} and
  \SSt.}\label{fig:duality}
\end{figure}

\begin{proposition}\label{prop:ft-dual}
\(\FT = \dual{\FT}\) and \(\IE = \dual{\IE}\).
\end{proposition}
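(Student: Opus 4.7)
The plan is to show directly that dualizing each move of \FT{} yields the move in the opposite direction, so that the set of moves \FT{} is closed under dualization, and analogously for \IE.

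First, I would work segment by segment. Since all moves in \FT{} act independently on each row segment (for horizontal moves) or column segment (for vertical moves), and taking complements commutes with this decomposition, it suffices to analyze what $\dual{\ell_\infty}$ does to a single row segment $R$. If $C$ is a configuration, then $R\cap\ell_\infty(C)$ is the set of the leftmost $|R\cap C|$ pixels of $R$. Applying this fact to $V\setminus C$: the set $R\cap\ell_\infty(V\setminus C)$ consists of the leftmost $|R\setminus C|=|R|-|R\cap C|$ pixels of $R$. Taking the complement within $R$ yields the rightmost $|R\cap C|$ pixels, which is exactly $R\cap r_\infty(C)$. Thus $\dual{\ell_\infty}=r_\infty$, and by the symmetric arguments $\dual{r_\infty}=\ell_\infty$, $\dual{u_\infty}=d_\infty$, and $\dual{d_\infty}=u_\infty$.

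From this I would conclude
\[
    \dual{\FT} = \{\dual{u_\infty},\dual{d_\infty},\dual{\ell_\infty},\dual{r_\infty}\} = \{d_\infty,u_\infty,r_\infty,\ell_\infty\} = \FT,
\]
which settles the first equality.

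For the second equality, I would observe that the same computation works pointwise for $m^{[i,j]}\in\IE$: the move $\ell_\infty^{[i,j]}$ acts as $\ell_\infty$ on each row segment whose $y$-coordinate lies in $[i,j]$ and as the identity on every other segment. Since the identity is self-dual and the restriction to segments in $[i,j]$ is defined symmetrically in $C$ and $V\setminus C$, the previous calculation gives $\dual{\ell_\infty^{[i,j]}}=r_\infty^{[i,j]}$ and analogously for the other directions. Hence $\dual{\IE}=\IE$. The main obstacle is merely bookkeeping the complementation cleanly on a per-segment basis; once one verifies that complement and segment-decomposition commute, the rest is immediate.
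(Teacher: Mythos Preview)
Your proposal is correct and follows essentially the same approach as the paper: both arguments work row-segment by row-segment, count occupied pixels, and observe that complementing swaps ``leftmost $k$'' with ``rightmost $k$'', yielding $\dual{\ell_\infty}=r_\infty$ (the paper phrases it as $\dual{r}=\ell$, which is the same computation with the directions swapped). Your treatment of the interval extension is slightly more explicit than the paper's ``easily extends'', but the underlying reasoning is identical.
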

\begin{proof}
We merely show $\ell = \dual{r}$, which easily extends to \(\ell^{[i,j]} =
\dual{r^{[i,j]}}\); a very similar argument applies to other pairs of opposite
directions. Let $C$ be a configuration, $R$ a row segment, $n=|R|$, and \(k=|R
\cap C|\). Then exactly the rightmost $n-k$ pixels of $R$ are occupied in \(r(V
\setminus C)\), and exactly the leftmost $k$ are occupied in \(V \setminus r(V
\setminus C) = \dual{r}(C)\)---the same ones that are occupied in $\ell(C)$.
\end{proof}

\begin{lemma}
    \label{dual_composition}
    For all moves $m_1$ and $m_2$, \(\dual{m_2 \circ m_1} = \dual{m_2} \circ
    \dual{m_1}\).
\end{lemma}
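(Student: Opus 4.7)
The plan is to unfold both sides of the claimed equality using the defining equation $\dual{m}(C) = V \setminus m(V \setminus C)$ and simplify via the involution $V \setminus (V \setminus X) = X$, valid for any $X \subseteq V$. The statement is really the functoriality of dualization under composition, and my intention is to make that fully explicit at the level of configurations.

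First, I would fix an arbitrary configuration $C \subseteq V$ and compute the left-hand side directly: $\dual{m_2 \circ m_1}(C) = V \setminus (m_2 \circ m_1)(V \setminus C) = V \setminus m_2(m_1(V \setminus C))$. Next, I would compute the right-hand side starting from the inside. Setting $D = \dual{m_1}(C) = V \setminus m_1(V \setminus C)$, I get $(\dual{m_2} \circ \dual{m_1})(C) = \dual{m_2}(D) = V \setminus m_2(V \setminus D)$. Substituting for $D$ yields $V \setminus m_2\bigl(V \setminus (V \setminus m_1(V \setminus C))\bigr)$, and the inner double complement collapses (since $m_1(V \setminus C) \subseteq V$) to give $V \setminus m_2(m_1(V \setminus C))$. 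This is precisely the left-hand side, and since $C$ was arbitrary the two moves agree on every configuration.

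Conceptually, the argument can be packaged even more cleanly by introducing the complement operator $\kappa : 2^V \to 2^V$ with $\kappa(C) = V \setminus C$, which satisfies $\kappa \circ \kappa = \mathrm{id}$. Then the definition of the dual is $\dual{m} = \kappa \circ m \circ \kappa$, so $\dual{m_2} \circ \dual{m_1} = \kappa \circ m_2 \circ \kappa \circ \kappa \circ m_1 \circ \kappa = \kappa \circ (m_2 \circ m_1) \circ \kappa = \dual{m_2 \circ m_1}$, but for the paper the elementwise computation above is probably preferable for readability.

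There is no serious obstacle here; the only thing to be careful about is to note that $m_1(V \setminus C) \subseteq V$ so that the double-complement cancellation is valid, which follows automatically from the fact that every move is a map $2^V \to 2^V$.
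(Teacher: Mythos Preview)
Your proof is correct and matches the paper's own argument: both unfold the definition $\dual{m}(C) = V \setminus m(V \setminus C)$ and use the double-complement identity to cancel the inner $V \setminus (V \setminus \cdot)$. The paper presents it as a single chain of equalities starting from $\dual{m_2 \circ m_1}(C)$ and rewriting toward $\dual{m_2} \circ \dual{m_1}(C)$, whereas you compute both sides separately, but the content is identical.
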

\begin{proof}
    Let $m_1$, $m_2$ be moves. Then,
    \begin{align*}
        \dual{m_2 \circ m_1}(C) &= V \setminus (m_2 \circ m_1(V \setminus C)) \\
        &= V \setminus m_2(V \setminus \dual{m_1}(C)) \\
        &= V \setminus (V \setminus \dual{m_2} \circ \dual{m_1}(C))\\
        &= \dual{m_2} \circ \dual{m_1}(C). \qedhere
    \end{align*}
\end{proof}

\begin{proposition}
    \label{dual_star_commutes}
    For every model $M$, \(\dual{M}^* = \dual{M^*}\).
\end{proposition}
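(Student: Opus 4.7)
The plan is to prove the two set inclusions $\dual{M^*} \subseteq \dual{M}^*$ and $\dual{M}^* \subseteq \dual{M^*}$ separately, with both directions following from a common ingredient: the generalization of \cref{dual_composition} from compositions of two moves to compositions of arbitrarily many moves.

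First, I would establish by induction on $k \geq 0$ that for any moves $m_1, \ldots, m_k \in M$,
\[
\dual{m_k \circ \cdots \circ m_1} = \dual{m_k} \circ \cdots \circ \dual{m_1}.
\]
The case $k = 0$ is the statement $\dual{\mathrm{id}} = \mathrm{id}$, which is immediate from the definition $\dual{m}(C) = V \setminus m(V \setminus C)$. The case $k = 1$ is trivial, and $k = 2$ is exactly \cref{dual_composition}. For the inductive step, write $m_k \circ \cdots \circ m_1 = m_k \circ (m_{k-1} \circ \cdots \circ m_1)$, apply \cref{dual_composition} to this pair, and then invoke the induction hypothesis on the inner composition.

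With this equality in hand, both inclusions are a one-line unwinding of definitions. For $\dual{M^*} \subseteq \dual{M}^*$, take any $m \in M^*$; by definition there are $m_1, \ldots, m_k \in M$ with $m = \mComp$. The generalized formula gives $\dual{m} = \dual{m_k} \circ \cdots \circ \dual{m_1}$, which is a composition of elements of $\dual{M}$ and therefore lies in $\dual{M}^*$. Conversely, any element of $\dual{M}^*$ has the form $\dual{m_k} \circ \cdots \circ \dual{m_1}$ for some $m_1, \ldots, m_k \in M$, and the same formula rewrites this as $\dual{\mComp} \in \dual{M^*}$.

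There is no real obstacle here: the content of the proposition is packaged in \cref{dual_composition}, and all that remains is to lift it from pairs of moves to arbitrary finite compositions by induction and to check that the $k = 0$ case (the identity) also behaves correctly under dualization. The only minor care needed is to treat the empty composition explicitly so that both $M^*$ and $\dual{M}^*$ are understood to contain the identity move.
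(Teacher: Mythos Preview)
Your proposal is correct and takes essentially the same approach as the paper, which simply says to use induction on the number of moves in a sequence and apply \cref{dual_composition}. You have spelled out the details (including the $k=0$ base case) more carefully than the paper does, but the underlying argument is identical.
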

\begin{proof}
    Use induction on the number of moves in a sequence and apply
    \cref{dual_composition}.
\end{proof}

\begin{restatable}{proposition}{restateDualSim}\label{dual_simulation}
If a model $M_2$ simulates a model $M_1$, then $\dual{M_2}$ simulates
$\dual{M_1}$.
\end{restatable}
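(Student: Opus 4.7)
The plan is to unfold the definition of simulation, transport each ingredient across the duality map $m \mapsto \dual{m}$, and close the argument by appealing to \cref{dual_star_commutes} to move the star past the bar. First I would fix an arbitrary move in $\dual{M_1}$, which by construction has the form $\dual{m}$ for some $m \in M_1$. The hypothesis that $M_2$ simulates $M_1$ then produces $m' \in M_2^*$ with $m(C) = m'(C)$ for every configuration $C \subseteq V$.

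Next I would observe that the dual of a move depends only on its action as a function, since $\dual{m}(C) = V \setminus m(V \setminus C)$. Hence pointwise equality of $m$ and $m'$ immediately transfers to $\dual{m}(C) = \dual{m'}(C)$ for every $C$. It remains to exhibit $\dual{m'}$ as an element of $\dual{M_2}^*$. By definition, $m' \in M_2^*$ means $\dual{m'} \in \dual{M_2^*}$, and \cref{dual_star_commutes} identifies $\dual{M_2^*}$ with $\dual{M_2}^*$. Setting $n' := \dual{m'}$, we therefore have $n' \in \dual{M_2}^*$ with $\dual{m}(C) = n'(C)$ for all $C$, which is exactly what it means for $\dual{M_2}$ to simulate $\dual{M_1}$.

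The only point that requires care, and which I expect to be the main (mild) obstacle, is being explicit that simulation is a statement about functions on $2^V$, so that agreement of $m$ and $m'$ forces agreement of their duals; everything else is a direct chase through the definitions and the real work sits inside the previously established \cref{dual_star_commutes}.
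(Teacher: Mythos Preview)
Your proposal is correct and follows essentially the same approach as the paper: pick an arbitrary move $\dual{m}\in\dual{M_1}$, use the simulation hypothesis to obtain $m'\in M_2^*$ agreeing with $m$ pointwise, observe that duality preserves pointwise equality, and invoke \cref{dual_star_commutes} to conclude $\dual{m'}\in\dual{M_2}^*$. The paper's version simply spells out the complement computation $\dual{m'}(C)=V\setminus m'(V\setminus C)$ explicitly rather than appealing to the functional-extensionality observation, but the argument is the same.
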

\begin{proof}
    Let $C$ be a configuration and $m_1 \in \dual{M_1}$. Since $M_2$ simulates
    $M_1$, there is $m_2 \in M_2^*$ such that \(m_2(V \setminus C) = \dual{m_1}(V
    \setminus C)\). That is, \(\dual{m_2}(C) = V \setminus m_2(V \setminus C) = V
    \setminus \dual{m_1}(V \setminus C) = m_1(C)\). By \cref{dual_star_commutes},
    \(\dual{m_2} \in \dual{M_2}^*\). Therefore, \dual{M_2} simulates \dual{M_1}.
\end{proof}

We now come to the reason why we introduced the concept of tilt-compatible
models in the first place. It allows us to connect the full tilt model via
duality to the single step model and its interval extension. An overview of the
resulting relationships between the models in depicted in \cref{fig:models}.

\begin{proposition}\label{prop:concrete-duals-tc}
The models \dual{\FT}, \dual{\IE}, \dual{\SSt}, and \dual{\IE[\SSt]} are
tilt-compatible.
\end{proposition}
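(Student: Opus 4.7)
The plan is to proceed as follows. First I would check that being monotone and volume-preserving transfers from a model to its dual: from $\dual{m}(C) = V \setminus m(V \setminus C)$ one gets $|\dual{m}(C)| = |V| - |m(V \setminus C)| = |C|$ whenever $m$ is volume-preserving, and $C \subseteq D$ implies $V \setminus D \subseteq V \setminus C$, so monotonicity of $m$ yields $m(V \setminus D) \subseteq m(V \setminus C)$ and hence $\dual{m}(C) \subseteq \dual{m}(D)$. Combined with \cref{obs:concrete-properties}, this establishes the two preconditions of \cref{def:tilt-compat} for all four models simultaneously.

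For $\dual{\FT}$ and $\dual{\IE}$, \cref{prop:ft-dual} collapses the work: these models coincide with $\FT$ and $\IE$, for which both tilt-compatibility conditions are direct. For any $p \in C$, the move $x_\infty$ (respectively $x_\infty^{[i,j]}$ restricted to an interval containing the segment of $p$) places $p^x$ in the resulting configuration, which covers Condition~\ref{tc_choose}; and every move in these models either leaves the segment of $p$ unchanged (so that $p \in m(C)$) or compresses its particles to one of the four boundary pixels (so that $p^x \in m(C)$ for the appropriate $x$), which covers Condition~\ref{tc_bound}.

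The real work is for $\dual{\SSt}$ and $\dual{\IE[\SSt]}$, where the dual moves are not themselves single-step moves on particles. The key claim I would prove is a ``far-end occupancy'' lemma: for any $p \in C$, the rightmost pixel $p^r$ of the row segment of $p$ lies in $\dual{\ell_1}(C)$, and the three analogous statements hold for the other directions. The argument is a direct unpacking of $\dual{\ell_1}(C) = V \setminus \ell_1(V \setminus C)$ at $p^r$: since $p^r + (1,0)^\Tpose \notin V$, the defining condition of $\ell_1$ reduces to ``$p^r$ is left-blocked in $V \setminus C$'', which holds iff the entire row segment consists of bubbles. Because $p \in C$ guarantees at least one particle in the segment, $p^r \in \dual{\ell_1}(C)$ as required. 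With this lemma in hand, Condition~\ref{tc_bound} is immediate (the opposite-end pixel of the segment of $p$ is always in $m(C)$), and Condition~\ref{tc_choose} is satisfied by choosing $m$ to be the dual of the single-step move in the direction opposite to $x$, so that $p^x$ becomes the far-end pixel. For $\dual{\IE[\SSt]}$ the same argument goes through whenever the segment of $p$ lies in the interval of the move; for segments outside the interval the primal move acts as the identity, so the dual does as well, yielding $p \in m(C)$ trivially.

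The step I expect to be the main obstacle is this far-end occupancy lemma, because the dual of a single-step move has no clean local description in terms of particles and one has to exploit the fact that the single-step condition degenerates at the boundary of a segment. Once this lemma is established, both conditions of \cref{def:tilt-compat} fall out uniformly for all four models.
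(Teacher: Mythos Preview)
Your proposal is correct and follows essentially the same route as the paper: reduce $\dual{\FT}$ and $\dual{\IE}$ to \cref{prop:ft-dual}, and for $\dual{\SSt}$ and $\dual{\IE[\SSt]}$ observe that a single-step move in direction $x$ always frees the boundary pixel at the opposite end of the segment (your ``far-end occupancy'' lemma is exactly the paper's claim $p^y \notin x_1(C)$, just stated from the particle side rather than the bubble side). Your explicit verification that monotonicity and volume-preservation pass to the dual is a detail the paper's proof leaves implicit, so if anything your write-up is slightly more complete.
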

\begin{proof}
The claim is easy to see for \dual{\FT} and \dual{\IE} because they are dual to
themselves. For \dual{\SSt} observe the movement of bubbles in \SSt. Let $p$ be a
free pixel in a configuration $C$ and consider for every direction \(x \in \{u,
d, \ell, r\}\) the opposite direction $y$ ($d$, $u$, $r$, and $\ell$,
respectively). Then \(p^y \notin x_1(C)\), i.e., \(p^y \in \dual{x_1}(V
\setminus C)\). This satisfies condition~\ref{tc_choose} of
\cref{def:tilt-compat}. Since $\dual{u_1}$, $\dual{d_1}$, $\dual{\ell_1}$, and
$\dual{r_1}$ are the only moves of $\dual{\SSt}$, condition~\ref{tc_bound}
holds as well. \dual{\IE[\SSt]} simulates \dual{\SSt}, so
condition~\ref{tc_choose} is easily satisfied. For condition~\ref{tc_bound},
observe that particles move either as in $\dual{\SSt}$ or not at all.
\end{proof}

\subsection{Fillability Instead of Drainability}

Now, instead of removing as many particles as possible from a configuration, we
aim to insert as many as possible. A \emph{source} is a move associated with a
pixel $s \in V$ defined as \(C \mapsto V \cup \{s\}\). Note that the dual move
of a source at $s$ is a sink at $s$. Analogously to the notions of minimality
and drainability, a configuration $C$ is \emph{maximal} in a model $M$ with
respect to a set of sources $S$ if there is no $D$ reachable from $C$ in \(M
\cup S\) with \(|D| > |C|\), and a board $B=(V,E)$ is \emph{fillable} from $S$
if \(\varnothing \rStar[M \cup S] V\).

\begin{figure}[htb]
\centering
\includegraphics{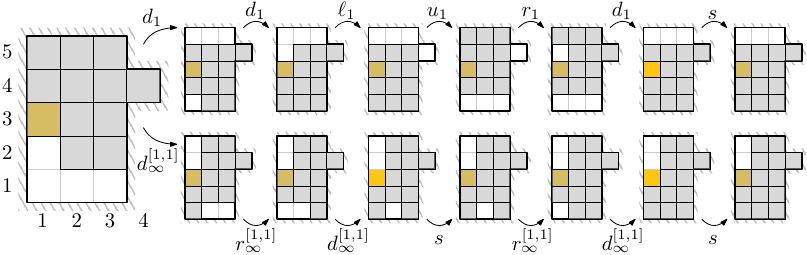}%
\caption{A configuration that is maximal with respect to a source $s$ in the
  full tilt model, and sequences of moves that lead to maximal configurations in
  \SSt (top row) and \IE (bottom row).
  A~sequence of moves leading to the initial configuration is
  \(m_1, r_\infty, m_1, r_\infty, m_1, m_2, \ell_\infty, m_2, \ell_\infty, m_2, u_\infty\),
  where $m_1$ is \(s, d_\infty, s, d_\infty, s\), and $m_2$ is
  \(u_\infty, r_\infty, d_\infty, s\).}\label{fig:maximal1}
\end{figure}

Due to \IE and \SSt simulating \FT, it is clear that both are at least as
powerful as \FT when it comes to reaching large maximal
configurations. \Cref{fig:maximal1} shows that they are both strictly more
powerful. In fact, \IE[\SSt] is more powerful still and could, in this example,
insert one more particle than \IE. This example would seem to indicate that \IE
is more powerful than \SSt but there are examples where the situation is
reversed, as illustrated in~\cref{fig:maximal2}.

\begin{figure}[htb]
	\centering
	\includegraphics{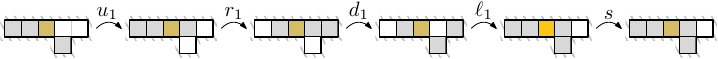}%
	\caption{A configuration that is maximal with respect to a source $s$ in \IE,
		and a sequence of moves that leads to a maximal configuration in
		\SSt.}\label{fig:maximal2}
\end{figure}

\begin{observation}\label{obs:minmax}
A configuration $C$ is maximal in a model $M$ with respect to $S$ if and only if
$V \setminus C$ is minimal in $\dual{M}$ with respect to $S$. Therefore, a
board is fillable from $S$ in $M$ if and only if it is drainable to $S$ in
$\dual{M}$.
\end{observation}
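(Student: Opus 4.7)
The plan is to reduce both claims to a single statement about the bijection between reachability in $M \cup S$ and reachability in $\dual{M} \cup S$ induced by complementation. The key preliminary observation is that the dual of a source at $s$ is precisely a sink at $s$ (as already remarked just before the observation), so $\dual{M \cup S} = \dual{M} \cup S$ as sets of moves. Then, applying \cref{dual_composition} inductively (or invoking \cref{dual_star_commutes} directly), for any sequence $\mSeq \in M \cup S$ and any configuration $C$, the configuration $D = \mComp(C)$ satisfies $V \setminus D = \dual{m}_k \circ \cdots \circ \dual{m}_1(V \setminus C)$, with $\dual{m}_1, \ldots, \dual{m}_k \in \dual{M} \cup S$. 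Conversely, every sequence in $\dual{M} \cup S$ arises as the dual of a sequence in $M \cup S$. This yields the equivalence
\[
    C \rStar[M \cup S] D \quad \Longleftrightarrow \quad V \setminus C \rStar[\dual{M} \cup S] V \setminus D.
\]

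Next, I would derive the maximality/minimality equivalence from this reachability statement. Since $|D| + |V \setminus D| = |V|$, the inequality $|D| > |C|$ is equivalent to $|V \setminus D| < |V \setminus C|$. Thus, $C$ fails to be maximal with respect to $S$ in $M$ (i.e., some strictly larger $D$ is reachable from $C$) exactly when some strictly smaller $V \setminus D$ is reachable from $V \setminus C$ in $\dual{M} \cup S$, i.e., when $V \setminus C$ fails to be minimal with respect to $S$ in $\dual{M}$. Taking contrapositives gives the first part of the observation.

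Finally, the fillability/drainability statement is obtained by specializing the reachability equivalence to $C = \varnothing$ and $D = V$: $\varnothing \rStar[M \cup S] V$ if and only if $V \rStar[\dual{M} \cup S] \varnothing$, which by definition says $B$ is fillable from $S$ in $M$ if and only if $B$ is drainable to $S$ in $\dual{M}$. The only potential pitfall is the bookkeeping to confirm that source moves and sink moves at the same pixel are indeed dual (which the paper has already verified implicitly), and to keep the direction of the complementation straight; once these are in place, the entire observation follows from \cref{dual_composition,dual_star_commutes} without any further machinery.
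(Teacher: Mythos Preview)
The paper states this as an observation without proof, relying on the immediately preceding remark that the dual of a source at $s$ is a sink at $s$, together with the definitions of maximality, minimality, fillability, and drainability. Your proposal correctly spells out the implicit argument: the reachability equivalence $C \rStar[M \cup S] D \Leftrightarrow V \setminus C \rStar[\dual{M} \cup S] V \setminus D$ via \cref{dual_composition,dual_star_commutes}, the cardinality translation $|D| > |C| \Leftrightarrow |V \setminus D| < |V \setminus C|$, and the specialization to $C = \varnothing$, $D = V$. This is precisely the reasoning the paper leaves to the reader, so your proof is both correct and faithful to the intended approach; there is nothing to contrast.
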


Finally, we come to the point when all the groundwork laid in the previous
subsections pays off. While drainability is trivial in \SSt,
fillability is not as obvious. Duality allows us to answer the question by
considering drainability in $\dual{\SSt}$, which we have connected to \FT.

\begin{theorem}\label{thm:tc-fill}
If the dual $\dual{M}$ of a model $M$ is tilt-compatible, then a board $B$ is
fillable from a set $S$ in $M$ if and only if $B$ is drainable to $S$ in \FT.
\end{theorem}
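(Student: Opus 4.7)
The plan is to combine two earlier results that have already been set up precisely for this chaining: the duality correspondence between filling and draining, and the collapse of drainability in any tilt-compatible model to drainability in \FT.

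First, I would invoke \cref{obs:minmax}, which asserts that $B$ is fillable from $S$ in $M$ if and only if $B$ is drainable to $S$ in $\dual{M}$. This step is purely formal: sources in $M$ behave as sinks in $\dual{M}$, and the empty-to-full transition $\varnothing \rStar[M \cup S] V$ is exactly the dual of the full-to-empty transition $V \rStar[\dual{M} \cup S] \varnothing$ thanks to \cref{dual_star_commutes,prop:ft-dual}. So after this step the question about $M$ and fillability becomes a question about $\dual{M}$ and drainability.

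Next, since we assume $\dual{M}$ is tilt-compatible, I would apply \cref{cor:drainable-tc} directly to $\dual{M}$: drainability to $S$ in $\dual{M}$ holds if and only if drainability to $S$ in \FT holds. Composing these two equivalences yields exactly the claim of the theorem.

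The proof therefore reduces to a two-line chain of if-and-only-ifs, with no real obstacle: the work has already been done in \cref{obs:minmax,cor:drainable-tc}, the latter relying on \cref{tc_equiv_ft,drainable_mono_vp}. The only thing worth double-checking is that the hypotheses of \cref{cor:drainable-tc} (monotonicity and volume-preservation built into the definition of tilt-compatibility) are carried by $\dual{M}$, which is given as an assumption, so nothing further needs to be verified.
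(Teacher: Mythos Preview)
Your proposal is correct and matches the paper's proof exactly: the paper's argument is the one-line ``Follows from \cref{cor:drainable-tc,obs:minmax},'' which is precisely the two-step chain you describe. One small aside: the reference to \cref{prop:ft-dual} in your justification of \cref{obs:minmax} is unnecessary (that proposition concerns the self-duality of \FT, not the source/sink duality underlying the observation), but this does not affect the argument.
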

\begin{proof}
Follows from \cref{cor:drainable-tc,obs:minmax}.
\end{proof}

Consequently, the models \FT, \IE, \SSt, and \IE[\SSt] are all
equivalent with respect to fillability. Most importantly, the results regarding
drainability in the full tilt model presented in this paper, including hardness
and approximation for obstacle placement, apply to fillability in all these
models. We conclude by highlighting the most intriguing special case, which is
quite amusing when taken out of the context of the previous subsections.

\begin{corollary}\label{cor:s1-fill}
A board is fillable from a set $S$ in the single step model if and only if it is
drainable to $S$ in the full tilt model.
\end{corollary}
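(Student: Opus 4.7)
The plan is to observe that this corollary is the specialization of \cref{thm:tc-fill} to $M = \SSt$. To apply that theorem, I need to verify that its hypothesis holds: namely, that $\dual{\SSt}$ is tilt-compatible. Fortunately, \cref{prop:concrete-duals-tc} lists $\dual{\SSt}$ among the models proved to be tilt-compatible, so the hypothesis is immediate.

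Concretely, I would first invoke \cref{prop:concrete-duals-tc} to conclude that $\dual{\SSt}$ is tilt-compatible. Then I would apply \cref{thm:tc-fill} with $M := \SSt$ to deduce that a board $B$ is fillable from $S$ in \SSt\ if and only if $B$ is drainable to $S$ in \FT, which is exactly the statement of the corollary.

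There is essentially no obstacle here: the substance of the argument has already been carried out in the development of duality (\cref{subsec:duality}) and of tilt-compatibility (\cref{subsec:relative}). In particular, the nontrivial ingredient of \cref{thm:tc-fill} — namely that drainability in any tilt-compatible model coincides with drainability in \FT\ (\cref{cor:drainable-tc}, via \cref{tc_equiv_ft} and \cref{drainable_mono_vp}) — combined with the duality bridge between fillability in $M$ and drainability in $\dual{M}$ (\cref{obs:minmax}), is exactly what is needed. Thus the proof reduces to a one-line citation of \cref{thm:tc-fill,prop:concrete-duals-tc}, and I would write it as such.
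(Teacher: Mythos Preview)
Your proposal is correct and mirrors the paper's own reasoning: the corollary is stated immediately after \cref{thm:tc-fill} as its ``most intriguing special case,'' obtained by instantiating $M=\SSt$ and using \cref{prop:concrete-duals-tc} to supply the hypothesis that $\dual{\SSt}$ is tilt-compatible. There is nothing to add.
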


    \section{Conclusions and Future Work}\label{sec:conclusion}

In this paper, we have analyzed ways to make a board fillable or drainable using
particles moving as instructed by several models of global control signals. We
have shown that placing a minimum number of obstacles to
achieve fillability is \NP-hard in all considered models. However, a constant-factor approximation is
possible for scaled boards. The most apparent open question concerns the
complexity of the obstacle placement problem for scaled boards. Is it still
\NP-hard or can we do better than the approximation and solve it optimally?

The next step in our future work is to investigate how to actually fill a board,
once we know it is fillable. An immediate approach would be to maintain a
configuration in memory and successively move bubbles to a source until the board
is full. However, this may require an amount of memory that is not polynomial
in the size of an appropriate encoding of the boundary.
Can we efficiently compute (short) filling sequences? Of particular
interest is the worst case analysis of the length of filling sequences.

It would be interesting to better understand the geometric properties
of fillable regions. In the related problem of assembly there is a hierarchy of
constructable shapes~\cite{hierarchical2020}. These shapes derive from the
\emph{external} movement of particles, whereas ours derive from \emph{internal}
movement. Are these classifications related?

Our goal so far was to completely fill a region. Applications may
impose restricted areas, i.e., positions that may never be occupied by
particles, while requiring other areas to be filled. How can this constraint be
handled?

Although we introduced duality of models as a tool to connect fillability to
drainability, it leads to interesting new questions. Maximality in a model is
strongly related to the occupancy problem in the dual model. However,
occupiability in the dual model has a natural interpretation in the original
model, leading to what we dub the \emph{vacancy problem}: Given a configuration
$C$ and a pixel $p \in C$, is there a configuration $D$ reachable from $C$
that does not contain $p$?  This problem is \PSPACE-complete in the full tilt
model because so is the occupancy problem and \FT is dual to itself. To the best
of our knowledge, this natural problem has not been examined in the single step
model.

\begin{conjecture}\label{conj:vacancy}
The vacancy problem is \PSPACE-complete in the single step model.
\end{conjecture}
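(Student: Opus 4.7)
The plan is to establish the conjecture in two directions. Membership in \PSPACE\ is immediate: given a vacancy instance (board $B$, configuration $C$, pixel $p \in C$), one can nondeterministically guess a sequence of single step moves, maintaining only the current configuration, until a configuration not containing $p$ is reached. Since configurations fit in polynomial space, this yields an algorithm in nondeterministic polynomial space, and hence in \PSPACE\ by Savitch's theorem.

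For \PSPACE-hardness, I would leverage the duality framework of \cref{subsec:duality}. By reasoning analogous to~\cref{obs:minmax}, vacating pixel $p$ from $C$ in \SSt\ is equivalent to a bubble occupying $p$ in \SSt, i.e., to the occupancy problem in \dual{\SSt} with starting configuration $V \setminus C$ and target $p$. I would then reduce the occupancy problem in \FT---known to be \PSPACE-complete~\cite{bdflmw-particle2019,hierarchical2020}---to the occupancy problem in \dual{\SSt}. Because \SSt\ simulates \FT\ and~\cref{dual_simulation} preserves simulation under duality, \dual{\SSt} simulates \FT, so every \FT-reachability is also a \dual{\SSt}-reachability; but the converse fails, and so the reduction cannot be a direct identity. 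Instead, it must embed the \FT\ instance into a larger board equipped with \emph{locking gadgets} that constrain \dual{\SSt}-reachable configurations to those that also arise under \FT.

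The hardest part will be designing these locking gadgets. In \SSt\ (and its dual), an adversary enjoys fine-grained control over individual bubble steps; any gadget intended to force bursts of maximal motion must remain correct under arbitrary interleavings, including ones in which a bubble edges forward in one corridor while another reverses in a neighboring one. One concrete route is a \emph{metronome} subgadget whose forced oscillation pins the remainder of the board in a quiescent state except during narrow windows, during which a single synchronized sweep is performed; by chaining several metronomes one can emulate the four maximal tilts of \FT. Verifying such a gadget would require exhaustive case analysis, likely phrased as a simulation invariant preserved under every \SSt\ move.

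A complementary approach, possibly cleaner, is to bypass \FT\ entirely and reduce directly from a generic \PSPACE-complete problem such as QBF or nondeterministic constraint logic. Wires, splitters, and quantifier blocks would be built from small \SSt-specific subconfigurations, with the target pixel $p$ playing the role of a witness position vacatable only when a satisfying play exists. In either route, I expect the routine portions to be short; the bulk of the work lies in the combinatorial verification that the gadgets enforce the intended semantics under all \SSt\ move interleavings.
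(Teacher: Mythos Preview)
The statement is a \emph{conjecture}: the paper explicitly leaves it open and provides no proof. What the paper does offer is a one-sentence hint---namely, that the techniques Caballero et~al.~\cite{caballero-cccg20-hardness} used to show hardness of relocation and occupancy in the single step model (building on~\cite{hierarchical2020} and on an observation the authors describe as a precursor to their duality notion) are ``likely applicable'' here. There is therefore no proof in the paper against which to compare your attempt.

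That said, your sketch is broadly consonant with the paper's hint. You correctly observe that vacancy in \SSt\ is occupancy in \dual{\SSt}, and that \dual{\SSt} simulates \FT\ via \cref{dual_simulation}; you also correctly identify the real obstacle, namely that simulation in one direction does not by itself yield a reduction, because \dual{\SSt} may reach configurations \FT\ cannot. Your two proposed routes---locking gadgets that restrict \dual{\SSt}-reachability to \FT-like behavior, or a direct reduction from a canonical \PSPACE-complete problem---are both plausible, and the first is close in spirit to what the paper gestures at. But neither route is carried out, and the ``metronome'' idea is speculative; as you yourself note, the substantive work (gadget design and verification under arbitrary \SSt\ interleavings) remains entirely to be done. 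In short: your proposal is a reasonable research plan at roughly the same level of detail as the paper's own remark, not a proof, and the conjecture remains open.
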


A promising approach to prove the conjecture is to follow the ideas
Caballero et~al.~\cite{caballero-cccg20-hardness} used to prove the relocation
problem hard for the single step model. They built on the techniques
of~\cite{hierarchical2020} and used an observation that can be seen as a
precursor to our notion of duality, implying that the same methods are likely
applicable in this case.
\newpage

    \bibliography{references}

\end{document}